\newtheorem{theorem}{Theorem}[section]
\newtheorem{lemma}[theorem]{Lemma}
\newtheorem{proposition}[theorem]{Proposition}
\newtheorem{claim}[theorem]{Claim}
\newtheorem{corollary}[theorem]{Corollary}
\newtheorem{definition}{Definition}[section]
\newtheorem{example}{Example}[section]
\newcommand{\aP}{{\rm ILS}}
\newcommand{\sign}{{\rm sgn}}
\newcommand{\median}{{M}}
\newcommand{\dist}{{\rm dist}}
\newcommand{\COMM}[2]{{
\begin{CJK}{UTF8}{ipxm}
\ifthenelse{\equal{#1}{YK}}{\color{blue}}{
\ifthenelse{\equal{#1}{HS}}{\color{red}}{
\ifthenelse{\equal{#1}{KK}}{\color{green}}}}
[#1: #2]
\end{CJK}
}}
\def\iddots{\mathinner{\mkern1mu\raise\p@
    \hbox{.}\mkern2mu\raise4\p@\hbox{.}\mkern2mu
    \raise7\p@\vbox{\kern7\p@\hbox{.}}\mkern1mu}}
\title{Trichotomy for the reconfiguration problem of \\integer linear systems} 
\date{}
\author[1]{Kei Kimura}
\author[2]{Akira Suzuki}
\affil[1]{Saitama University, Japan. \texttt{kkimura@mail.saitama-u.ac.jp}}
\affil[2]{Tohoku University, Japan. \texttt{a.suzuki@ecei.tohoku.ac.jp}}
\begin{document}

\maketitle

\begin{abstract}
In this paper, we consider the reconfiguration problem of integer linear systems.
In this problem, we are given an integer linear system $I$ and two feasible solutions $\boldsymbol{s}$ and $\boldsymbol{t}$ of $I$,
and then asked to transform $\boldsymbol{s}$ to $\boldsymbol{t}$ by changing a value of only one variable at a time,
while maintaining a feasible solution of $I$ throughout.
$Z(I)$ for $I$ is the complexity index introduced by Kimura and Makino (Discrete Applied Mathematics 200:67--78, 2016), 
which is defined by the sign pattern of the input matrix.
We analyze the complexity of the reconfiguration problem of integer linear systems based on the complexity index $Z(I)$ of given $I$.
We then show that the problem is 
(i) solvable in constant time if $Z(I)$ is less than one,
(ii) weakly coNP-complete and pseudo-polynomially solvable if $Z(I)$ is exactly one, and
(iii) PSPACE-complete if $Z(I)$ is greater than one.
Since the complexity indices of Horn and two-variable-par-inequality integer linear systems are at most one, 
our results imply that the reconfiguration of these systems are in coNP and pseudo-polynomially solvable.
Moreover, this is the first result that reveals coNP-completeness for a reconfiguration problem, to the best of our knowledge.
\end{abstract}
\section{Introduction}\label{sec:introduction}
\subsection*{Reconfiguration problem}

In \emph{reconfiguration problem} we are asked to transform the current configuration into a desired one by step-by-step operations.
Formally, in this problem, we are given two feasible solutions of a combinatorial problem, 
then we find the transformation between them, such that all intermediate results are also feasible, 
and each step conforms to an adjacency relation defined on feasible solutions.
The reconfiguration problem investigates the properties of solution spaces of combinatorial problems, and 
has a deep relationship to the optimization variants of them.
After Ito et al.~\cite{IDHPSUU11} introduced this reconfiguration framework,
many researchers applied this framework to a variety of combinatorial problems, 
including not only graph problems such as independent set, vertex cover, and coloring, 
but also set cover, knapsack problem, and general integer programming problem.
For recent surveys, see~\cite{van13,Nis17}.
The reconfiguration problem has many possible applications, particularly on ongoing services, e.g., maintenance of power stations and computer networks.

The reconfiguration framework has been also applied to the Boolean satisfiability problem (SAT), which is a central problem in computer science.
In the reconfiguration problem of SAT, we are given a Boolean formula $\varphi$ and two feasible solutions (i.e., satisfying assignments) $\boldsymbol{s}$ and $\boldsymbol{t}$ of $\varphi$,
and asked to transform $\boldsymbol{s}$ to $\boldsymbol{t}$ by changing only one variable from true to false or from false to true at a time,
while maintaining a feasible solution of $\varphi$ throughout.
This problem is shown to have dichotomy property in terms of constraint language~\cite{GKM09}.
Namely, it is shown that the reconfiguration problem of SAT is in P if the constraints satisfy a certain property and otherwise PSPACE-complete,
where the property distinguishing these complexities is later corrected by Schwerdtfeger~\cite{Sch14}.
We note that the reconfiguration problem is called $st$-connectivity problem in~\cite{GKM09,Sch14}.

\subsection*{Integer linear systems and complexity index}
In this paper, we focus on the reconfiguration problem for \emph{integer linear system} (ILS for short).
In ILS, we are given a matrix $A \in \mathbb{Q}^{m\times n}$, a vector $\boldsymbol{b} \in \mathbb{Q}^m$,
and a positive integer $d$.
A \emph{feasible solution} is an integer vector $\boldsymbol{x} \in D^n$ satisfying $A\boldsymbol{x} \geq \boldsymbol{b}$, where $D = \{0, 1, \dots, d\}$.
We denote by $I=(A,\boldsymbol{b},d)$ an instance of ILS.
ILS can formulate many combinatorial optimization problems
and is a fundamental problem studied in many fields such as mathematical programming and theoretical computer science.
The \emph{feasibility} problem of ILS asks if there exists a feasible solution of a given instance of ILS.
The feasibility problem of ILS has been intensively studied, 
especially compared to the reconfiguration counter part.
The feasibility problem is strongly NP-hard in general, but several (semi-)tractable subclasses are known to exist.
For example, the problem can be solved
in polynomial time, if $n$ is bounded by some constant \cite{Len83}, or if $A$ is totally unimodular \cite{HoK56}.
Moreover, it can be solved in pseudo-polynomial time if
(i) $m$ is bounded by some constant \cite{Pap81},
(ii) it is a Horn system (i.e., each row of $A$ contains at most one positive element) \cite{Glo64,MaD02}, or
(iii) it is a two-variable-per-inequality (TVPI) system (i.e., each row of $A$ contains at most two nonzero elements)~\cite{HMN93,BaR01}.
It is also known that the problem is weakly NP-hard,
even if $m$ is bounded by some constant or
the system is Horn and TVPI (also known as monotone quadratic)~\cite{Lag85}.

In this paper, we investigate the reconfiguration problem of ILS through the complexity index for ILS introduced in~\cite{KiM16}. 
The complexity index extends a complexity index for SAT introduced in~\cite{BCH94}, and 
classifies the complexity of the feasibility problem of ILS in terms of the sign structure of the input matrix.
For an ILS $I=(A,\boldsymbol{b},d)$, 
the complexity index $Z(I)$ of $I$ is the optimal value of the following linear programming problem (LP) 
with variables $Z,\alpha_1, \dots, \alpha_n$.
\begin{eqnarray}\label{LP}
\begin{array}{cll}
\rm{minimize}     &Z\\
\rm{subject\ to}     &\displaystyle \sum_{j: \sign(a_{ij})=+}\alpha_{j} + \sum_{j : \sign(a_{ij})=-}(1- \alpha_{j}) \leq Z &(i=1, \dots ,m)\\
&0 \leq \alpha_j \leq 1 &(j=1, \dots ,n),
\end{array}
\end{eqnarray}
	where for a real number $a$, its sign is defined as
	\begin{equation}
	\sign (a)=\left \{
	\begin{array}{ll}
	+ &(a>0)\\
	0 &(a=0)\\
	- &(a<0).
	\end{array}
	\right.
	\end{equation}
Since LP~\eqref{LP} depends {\em only} on the sign pattern of $A$, 
the index captures the sign structure of ILSes.
For $\gamma \ge 0$, we denote by $\aP(\gamma)$ the family of ILSes $I$ with $Z(I) \leq \gamma$.
For the feasibility problem of ILS, the following trichotomy result is shown in~\cite{KiM16}: 
(i) $\aP(\gamma)$ is solvable in linear time for any $\gamma < 1$, 
(ii) $\aP(1)$ is weakly NP-complete and pseudo-polynomially solvable, and 
(iii) $\aP(\gamma)$ is strongly NP-complete for any $\gamma > 1$; see also Table~\ref{table:results-ILS} in the next subsection.
It should be noted that ILS(1) includes Horn and TVPI ILSes, well-studied subclasses of ILSes.
In fact, for a Horn ILS $I$, $(Z,\alpha_1, \dots, \alpha_n) = (1, 1, \dots, 1)$ is a feasible solution to LP~\eqref{LP}, 
and thus the optimal value of LP~\eqref{LP} is at most one.
For a TVPI ILS $I$, $(Z,\alpha_1, \dots, \alpha_n) = (1, 1/2, \dots, 1/2)$ is a feasible solution to LP~\eqref{LP}, 
and thus the optimal value of LP~\eqref{LP} is at most one.
Therefore, these ILSes are included in ILS(1).
Horn and TVPI ILSes arise in, e.g., program verification and scheduling, respectively, and 
many algorithms have been devised to solve the feasibility problems of these subclasses~\cite{BaR01,Glo64,HMN93,MaD02}.
On the other hand, ILS(1) can be decomposed to Horn and TVPI ILSes in a certain way~\cite{KiM16}; see also Section~\ref{sec:Z=1}.
It should be also noted that we can recognize which class a given ILS belongs to in linear time,
without solving LP~\eqref{LP}~\cite{KiM16}.
This is useful for practice, since if we recognized in linear time that the index of a given ILS is, 
say, less than one,
then we could use the linear time algorithm to solve the feasibility problem.

\subsection*{Main results of the paper}

In this paper, we consider the reconfiguration problem of ILS.
Namely, we are given an ILS $I$ and two feasible solutions $\boldsymbol{s}$ and $\boldsymbol{t}$ of $I$, and then asked to transform $\boldsymbol{s}$ to $\boldsymbol{t}$ by changing a value of only one variable at a time, while maintaining a feasible solution of $I$ throughout.
We analyze the complexity of this problem using the complexity index described in the previous subsection and show the following three results: the reconfiguration problem of ILS is 
\begin{description}
\item [(i) always yes if the complexity index is less than one] ~\\
\if0 
We show that any feasible solution can be transformed to any other feasible solution 
using the allowed changes in the reconfiguration problem.
Therefore, any instance of the reconfiguration problem is a yes instance.
\fi 

\item [(ii) weakly coNP-complete and pseudo-polynomially solvable if the complexity index]
\hspace{-0.5mm}\textbf{is}
\vspace{-3mm}
\item [~~~~~exactly one] ~\\
As mentioned in the previous subsection, 
both Horn and TVPI ILSes 
are contained in ILS(1)~\cite{KiM16}.
Therefore, the reconfiguration problem of these ILSes are both in coNP and pseudo-polynomially solvable from this result.
Furthermore, SAT can be formulated as ILS with the constant-size numerical inputs, 
by representing each clause $(\bigvee_{j \in L^+} x_j \vee \bigvee_{j \in L^-} \overline{x}_j)$ as
$\sum_{j \in L^+} x_j + \sum_{j \in L^-} (1-x_j) \ge 1$ and setting $d=1$.
Thus, the reconfiguration problem of SAT with index at most 1 is polynomially solvable
from this result.
\item [(iii) PSPACE-complete if the complexity index is greater than one] ~\\
We show that the reconfiguration problem is PSPACE-complete even for SAT.
Combining this result with result (ii), we obtain a complexity dichotomy for the reconfiguration problem of SAT in terms of the complexity index.
We compare this dichotomy result with the dichotomy result in~\cite{GKM09,Sch14} in the next subsection.
\end{description}
From the above results, we obtain a complexity trichotomy for the reconfiguration problem of ILS; see also Table~\ref{table:results-ILS}.

We also analyze how far two feasible solutions can be,
namely,
the maximum of the minimum number of value changes between two feasible solutions.
This can be cast as the analysis of the diameter of the solution graph of ILS,
where the \emph{solution graph} is defined as follows:
the vertex set is the set of feasible solutions and
two vertices are adjacent if their hamming distance (i.e., the number of components having different values) is one.
We then show that the diameter of the solution graph of ILS is ${\rm \Theta}(n)$, ${\rm \Theta}(dn)$, and ${\rm \Omega}(d \cdot 3^{\sqrt{\frac{(\gamma-1) n}{8}}})$ if
the complexity index $\gamma$ is respectively less than one, equal to one, and greater than one; see Table~\ref{table:results-ILS}.

\begin{table}[htb]
	\caption{Results for integer linear systems (results of this paper are in bold). 
Here, $n$ is the number of variables and $d$ is the upper bound of the values of the variables.}\label{table:results-ILS}
	\centering
	\begin{tabular}{clll}
		\hline
		$\aP(\gamma)$ & feasibility~\cite{KiM16}  & reconfiguration & diameter            \\ \hline
		$\gamma < 1$         & P (linear time)         & $\boldsymbol{\rm P}$ {\bf (always yes)}  & $\boldsymbol{{\rm \Theta}(n)}$    \\
		$\gamma = 1$         &  weakly NP-complete   & {\bf weakly coNP-complete}        & $\boldsymbol{{\rm \Theta}(dn)}$ \\
		& pseudo-polynomially solvable    & {\bf pseudo-polynomially solvable}        & \\
		$\gamma > 1$         & strongly NP-complete              & {\bf PSPACE-complete} & $\boldsymbol{{\rm \Omega}(d \cdot 3^{\sqrt{\frac{(\gamma-1) n}{8}}})}$    \\
		\hline
	\end{tabular}
\end{table}

In Table~\ref{table:results-ILS}, 
a problem is pseudo-polynomially solvable if  
it is solvable in polynomial time in the numeric value of the input.
Moreover, a problem is weakly NP-complete (resp., weakly coNP-complete) 
if it is NP-complete (resp., coNP-complete) in the usual sense, and 
strongly NP-complete if it is NP-complete even when all of its numerical parameters are bounded by a polynomial in the size of the input.

Our pseudo-polynomial solvability is based on the decomposition of any ILS in ILS(1) into Horn and TVPI ILSes introduced in~\cite{KiM16}.
In fact, we first show that the reconfiguration problems of these two ILSes are pseudo-polynomially solvable.
For Horn ILS, 
this is done by extending the greedy algorithm for Horn SAT~\cite{GKM09},
using the fact that the set of solutions of any Horn ILS is
closed under a minimum operation (see Section~\ref{sec:preliminaries} for details).
On the other hand, for TVPI ILS,
extending the algorithm for 2-SAT in~\cite{GKM09} is not straightforward.
This is because the majority operation on the Boolean domain (i.e., the SAT case) is uniquely determined,
whereas there are many majority operations on non-Boolean domains (i.e., the ILS case), and
properties of the set of solutions depend on a majority operation under which it is closed.
We reveal that the solution sets of any TVPI ILS are closed under a median operation.
Using this closedness property, we can induce a partial order on the set of solutions and
devise an algorithm that changes values of variables according to the partial order.
For our coNP-completeness result,
we use the reduction by Lagarias~\cite{Lag85} that shows the weak NP-hardness of the feasibility problem of monotone quadratic ILSes.

For the case of $\gamma < 1$,
we show that the solution graph of ILS is always connected.
Therefore, any instance of the reconfiguration problem is a yes instance.
We show this by reformulating the structural result for ILS with index less than one in~\cite{KiM16}.

For the case of $\gamma > 1$,
we show that the reconfiguration problem is PSPACE-complete even for the SAT problem.
As mentioned above, ILS can formulate SAT by representing each clause $(\bigvee_{j \in L^+} x_j \vee \bigvee_{j \in L^-} \overline{x}_j)$ as
$\sum_{j \in L^+} x_j + \sum_{j \in L^-} (1-x_j) \ge 1$ and setting $d=1$.
Through this formulation, we can also define complexity index for SAT, and
this index actually coincides with the complexity index for SAT problem introduced by Boros et al.~\cite{BCH94}.
Using the structural expression introduced in~\cite{GKM09}, 
we show that the reconfiguration problem is PSPACE-complete for SAT with index greater than one.

Finally, we obtain some positive results complementing the hardness results for ILS(1).
An integer linear system $I = (A,\boldsymbol{b},d)$ is called \emph{unit} if $A \in \{ 0, \pm 1 \}^{m \times n}$ for positive integers $m$ and $n$.
We show that the reconfiguration problem of
unit ILS(1) is solvable in polynomial time.
Note that unit ILS($\gamma$) is PSPACE-complete for $\gamma > 1$, since it contains SAT($\gamma$).
Therefore, we obtain a dichotomy result for unit ILS.
Interestingly, the diameter of the solution graph of a system in unit ILS(1) is still ${\rm \Theta}(dn)$ and thus the length of the shortest path between two feasible solutions can be exponential in the input size, 
where we note that $d$ is a part of the input and its input size is $\log d$.
Hence, we cannot output an actual path in any polynomial time algorithm for unit ILS(1).
Therefore, we devise an algorithm that repeats a certain sequence of value changes implicitly exponential time for these systems.
We also show that the reconfiguration problem of ILS(1) is solvable in polynomial time if the number of variables is a fixed constant.
%
\subsection*{Related work}
Our results imply that the reconfiguration problem for SAT has a dichotomy property in terms of the complexity index.
Namely, 
the problem is in P if $Z(I) \le 1$ and
otherwise (i.e., if $Z(I) > 1$) PSPACE-complete.
This result is incomparable to the dichotomy result in~\cite{GKM09,Sch14},
that is, the set of instances in the polynomially solvable class in~\cite{GKM09,Sch14} differs from that with $Z(I) \le 1$.
This is because the results in~\cite{GKM09,Sch14} concern the restriction on the constraint language, namely constraints used to build a problem instance.
On the other hand, our result (or the complexity index) focuses on the combination of constraints.
For example, consider
any one-inequality ILS $a_1x_1 + a_2x_2 + \dots +a_nx_n \ge b$.
Then it is in general within the PSPACE realm in terms of constraint language.
However, the complexity index is zero for the ILS, implying that
the ILS lies in the P realm in our result.
On the other hand, consider an affine equation $x_1 \oplus x_2 \oplus x_3 = 1$ in the field $F_2$ of order two, where $\oplus$ is an addition modulo two.
Then the equation lies in the P realm in the result in~\cite{GKM09,Sch14}.
The equation can be formulated as an ILS by
\vspace{-1mm}
\begin{equation}
\left\{
\begin{array}{l}
x_1 + x_2 + x_3 \ge 1\\
x_1 - x_2 - x_3 \ge -1\\
-x_1 + x_2 - x_3 \ge -1\\
-x_1 - x_2 + x_3 \ge -1,
\end{array}
\right.
\end{equation}
and $d=1$, and its complexity index $Z$ is defined by the following LP.
\begin{eqnarray}\label{LP:affine_conf}
\begin{array}{cll}
\rm{minimize}     &Z\\
\rm{subject\ to}     &\alpha_1 + \alpha_2 + \alpha_3 \le Z &\\
&\alpha_1 + (1- \alpha_2) + (1- \alpha_3) \le Z &\\
&(1- \alpha_1) + \alpha_2 + (1- \alpha_3) \le Z &\\
&(1- \alpha_1) + (1- \alpha_2) + \alpha_3 \le Z &\\
&0 \le \alpha_1,\alpha_2,\alpha_3 \le 1. &
\end{array}
\end{eqnarray}
By a simple calculation, one can verify that the optimal value of this LP is $3/2$.
Hence, the ILS lies in the PSPACE realm in our result.
Therefore, the set of instances in the polynomially solvable class in~\cite{GKM09,Sch14} differs from that with $Z(I) \le 1$.
It should be noted that Horn and 2-SAT lie in the P realm in both the results.

In the literature,
reconfiguration problems of many combinatorial problems
are investigated~\cite{van13,Nis17}.
For many polynomially solvable problems such as minimum matroid basis and matching problem,
the corresponding reconfiguration problems are shown to be polynomially solvable~\cite{IDHPSUU11}.
One may wonder if any reconfiguration for NP-hard problems tends to be PSPACE-complete and conversely,
any reconfiguration for the problems in P is shown to be polynomially solvable.
However, there are many exceptions.
For example, the reconfiguration of the three-coloring problem is solvable in polynomial time~\cite{JKKPP16},
while the feasibility problem is known to be NP-complete.
Conversely, the reconfiguration of the shortest path problem is PSPACE-complete~\cite{Bon13}, 
while the feasibility problem is trivially in P.
While there exist results for specific problems,
the general framework is still open that delineates a line between easy and hard reconfiguration problems.
In this paper, we obtain a trichotomy result for ILS, a large class of combinatorial optimization problems, 
and we hope the result sheds some light on the general reconfiguration framework.
Furthermore,
we obtain the first coNP-completeness result for reconfiguration problems as far as we know.
Thus, our result 
gives a new insight to the complexity of reconfiguration problems.

The complexity of a reconfiguration problem is, so far, strongly tied to the diameter of the solution graph.
Namely, a reconfiguration problem tend to be in P if the diameter of the solution graph is polynomially bounded in the input size,
and PSPACE-complete if the diameter can be exponential.
In fact, polynomially solvable reconfiguration problems with exponential diameters  
have been only known for trivial problems such as Tower of Hanoi (which is always yes), to the best of our knowledge.
Our result for unit ILS(1) 
provides a first example that the reconfiguration problem is in P even if 
the diameter of the solution graph can be exponential in the input size and the reconfiguration problem is not always yes.
This is of independent interest.
\if0
However, we reveal a large class of combinatorial problems for which
the feasibility problem is (weakly) NP-complete but
the reconfiguration problem is (weakly) coNP-complete.
Our result might indicate the possibility
the difference between weak and strong NP-hardness
expands to the difference between weak (co)NP-hardness and PSPACE-hardness in reconfiguration problems.
\fi
\\
\\
\indent The rest of the paper is organized as follows.
Section~\ref{sec:preliminaries} formally defines the reconfiguration problem of ILS and observes useful properties.
Section~\ref{sec:Z=1} and \ref{sec:Z=1-tractable} consider the case where $Z(I) = 1$, which is the most technically involved part.
Sections~\ref{sec:Z<1} and \ref{sec:Z>1} present our results for $Z(I) < 1$ and $Z(I) > 1$, respectively.
Finally, we conclude the paper in Section~\ref{sec:conclusion}.

\section{Preliminaries}\label{sec:preliminaries}
We assume that the reader is familiar with the standard graph theoretic terminology as contained, e.g., in~\cite{BoM08}.
\subsection{Definitions}

We first define the integer linear systems (ILSes).
In an ILS, we are given a matrix $A=(A_{ij}) \in \mathbb{Q}^{m\times n}$, a vector $\boldsymbol{b} \in \mathbb{Q}^m$,
and a positive integer $d$, where $m$ and $n$ denote positive integers and $\mathbb{Q}$ denotes the set of rational numbers.
We denote an ILS by $I=(A,\boldsymbol{b},d)$.
A \emph{feasible solution} of $I$ is an integer vector $\boldsymbol{x} \in D^n$ satisfying $A\boldsymbol{x} \geq \boldsymbol{b}$, where $D = \{0, 1, \dots, d\}$.
The feasibility problem of ILS is the problem of finding a feasible solution of a given ILS.
Note that the bounds on variables, i.e., the domain $D$, allow us to analyze the problem in more details, and also ensure that the solution graph defined below is finite.

For an integer $n$, a subset $R \subseteq D^n$ is called an \emph{{\rm (}$n$-ary{\rm )} relation} on $D$.

\begin{definition}[Solution graph]
	For a relation $R \subseteq D^n$,
	we define the solution graph $G(R) = (V(R), E(R))$ as follows:
	$V(R):= R$ and $E(R):= \{ \{ \boldsymbol{x}, \boldsymbol{y} \} \mid \boldsymbol{x}, \boldsymbol{y} \in V(R), \dist(\boldsymbol{x},\boldsymbol{y}) = 1 \} $, where $\dist(\boldsymbol{x},\boldsymbol{y}):= |\{ j \mid x_j \neq y_j \}|$ is the Hamming distance of $\boldsymbol{x}$ and $\boldsymbol{y}$.
	For an ILS $I=(A,\boldsymbol{b},d)$, 
	we denote by $G(I)$ the solution graph of the set of the feasible solutions of $I$, that is, 
	$G(R(I))$ with $R(I):= \{ \boldsymbol{x} \in D^n \mid A\boldsymbol{x} \geq \boldsymbol{b} \}$.
\end{definition}
We call a path from $\boldsymbol{s}$ to $\boldsymbol{t}$ an {$\boldsymbol{s}$-$\boldsymbol{t}$ path}.
If there exists an $\boldsymbol{s}$-$\boldsymbol{t}$ path, we say $\boldsymbol{t}$ is \emph{reachable} from $\boldsymbol{s}$.
Using this definition, we can treat the	reconfiguration problem of ILS as following:
in the reconfiguration problem of ILS,
we are given an ILS $I$ and two feasible solutions $\boldsymbol{s}$ and $\boldsymbol{t}$ of $I$, 
and then we are asked whether 
$\boldsymbol{t}$ is reachable from $\boldsymbol{s}$ or not.


\subsection{Basic observations}\label{subsec:basic}
In this subsection, we give three lemmas which play important roles in this paper.

The following lemma is an extension of Lemma 4.1 in~\cite{GKM09} to a multiple-valued version, and used throughout the paper.
We say that an $n$-ary relation $R$ is \emph{closed under a $k$-ary operation $f: D^k \rightarrow D$} if
for every $\boldsymbol{a}^1, \boldsymbol{a}^2, \dots, \boldsymbol{a}^k \in R$,
the tuple $(f(a^1_1, a^2_1, \dots, a^k_1), \dots, f(a^1_n, \dots, a^k_n))$ is in $R$.
We denote this tuple by $f(\boldsymbol{a}^1, \dots, \boldsymbol{a}^k)$.
An operation $f: D^k \rightarrow D$ is called \emph{idempotent} if $f(x, x, \dots, x) = x$ holds for any $x \in D$.

\begin{lemma}\label{lem:connected_component_closedness}
If a relation $R \subseteq D^n$ is closed under an idempotent operation $f: D^k \rightarrow D$,
then for each connected component $G'$ in $G(R)$, $V(G')$ is closed under $f$.
\end{lemma}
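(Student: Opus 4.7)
The plan is to show directly that $f(\boldsymbol{a}^1,\ldots,\boldsymbol{a}^k) $ lies in the same connected component as $\boldsymbol{a}^1$, for any choice of $\boldsymbol{a}^1,\ldots,\boldsymbol{a}^k \in V(G')$. Closedness of $R$ under $f$ immediately gives membership in $R$; the real content is showing we can walk in $G(R)$ from $\boldsymbol{a}^1$ to $f(\boldsymbol{a}^1,\ldots,\boldsymbol{a}^k)$ using edges of Hamming distance one.

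The main device is a ``one-argument-at-a-time'' substitution. Suppose we have shown, for some $i \ge 1$, that $f(\boldsymbol{a}^1,\ldots,\boldsymbol{a}^1,\boldsymbol{a}^i,\boldsymbol{a}^{i+1},\ldots,\boldsymbol{a}^k)$ (with $\boldsymbol{a}^1$ in the first $i-1$ slots) lies in the component $G'$. Since $\boldsymbol{a}^i$ and $\boldsymbol{a}^1$ are both in $V(G')$, fix a path $\boldsymbol{a}^i = \boldsymbol{b}^0, \boldsymbol{b}^1, \ldots, \boldsymbol{b}^\ell = \boldsymbol{a}^1$ in $G(R)$. For each $j$, the tuple $\boldsymbol{c}^j := f(\boldsymbol{a}^1,\ldots,\boldsymbol{a}^1,\boldsymbol{b}^j,\boldsymbol{a}^{i+1},\ldots,\boldsymbol{a}^k)$ lies in $R$ by closedness. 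Moreover, because $\boldsymbol{b}^j$ and $\boldsymbol{b}^{j+1}$ differ in at most one coordinate, the coordinatewise-applied $f$ can produce a difference only in that coordinate, so $\boldsymbol{c}^j$ and $\boldsymbol{c}^{j+1}$ differ in at most one coordinate; hence either $\boldsymbol{c}^j = \boldsymbol{c}^{j+1}$ or $\{\boldsymbol{c}^j,\boldsymbol{c}^{j+1}\} \in E(R)$. Concatenating, $\boldsymbol{c}^0$ and $\boldsymbol{c}^\ell$ lie in the same component, advancing the induction from position $i$ to position $i+1$.

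Iterating this replacement $k$ times, starting from $f(\boldsymbol{a}^1,\boldsymbol{a}^2,\ldots,\boldsymbol{a}^k)$ and ending at $f(\boldsymbol{a}^1,\boldsymbol{a}^1,\ldots,\boldsymbol{a}^1)$, exhibits a walk in $G(R)$ connecting these two tuples. By idempotence the endpoint equals $\boldsymbol{a}^1 \in V(G')$, so $f(\boldsymbol{a}^1,\ldots,\boldsymbol{a}^k)$ belongs to the component of $\boldsymbol{a}^1$, which is $G'$. This establishes the lemma.

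The only subtle point — and thus the place that most deserves care in the write-up — is verifying that consecutive tuples $\boldsymbol{c}^j$ and $\boldsymbol{c}^{j+1}$ are indeed adjacent or equal in $G(R)$. Since $f$ is applied coordinatewise, any coordinate where $\boldsymbol{b}^j$ and $\boldsymbol{b}^{j+1}$ agree yields identical outputs, so disagreements in $\boldsymbol{c}^j,\boldsymbol{c}^{j+1}$ can occur only in coordinates where $\boldsymbol{b}^j$ and $\boldsymbol{b}^{j+1}$ disagree; by choice of the path this is at most one coordinate, which is exactly what is needed. No other subtlety arises, and idempotence is used in precisely one place, namely to collapse the fully substituted tuple $f(\boldsymbol{a}^1,\ldots,\boldsymbol{a}^1)$ back to $\boldsymbol{a}^1$.
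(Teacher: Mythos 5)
Your argument is correct and essentially the same as the paper's: you use the "one-argument-at-a-time" substitution, the observation that $f$ applied coordinatewise maps a Hamming-distance-1 pair to a Hamming-distance-$\le 1$ pair, and idempotence at exactly one place. The only cosmetic difference is orientation — you replace $\boldsymbol{a}^i$ by $\boldsymbol{a}^1$ one slot at a time while the paper replaces $\boldsymbol{a}^1$ by $\boldsymbol{a}^i$ — which is the same walk traversed in reverse.
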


\begin{proof}
The proof goes along the same line as the proof of Lemma 4.1 in~\cite{GKM09},
since the proof of Lemma 4.1 only uses the idempotency of $f$.
However, we describe the proof for completeness of this paper.

Let $R \subseteq D^n$ be a relation which is closed under an idempotent operation $f: D^k \rightarrow D$.
Consider vectors $\boldsymbol{a}^1, \boldsymbol{a}^2, \dots, \boldsymbol{a}^k \in R$ that all belong to the same connected component of $G(R)$.
In the rest of this proof, we show that $\boldsymbol{a} = f(\boldsymbol{a}^1, \boldsymbol{a}^2, \dots, \boldsymbol{a}^k)$ also belongs to the connected component.
To this end, we show that there exists a path between $\boldsymbol{a}^1$ and $\boldsymbol{a}$ on $G(R)$.

We first prove that for any integer $i$, $1 \le i \le k$, and $\boldsymbol{s}, \boldsymbol{t} \in R$ in the same connected component of $G(R)$,
there exists a path from $f(\boldsymbol{b}^1, \dots , \boldsymbol{b}^{i-1}, \boldsymbol{s}, \boldsymbol{b}^{i+1}, \dots, \boldsymbol{b}^k)$
to $f(\boldsymbol{b}^1, \dots, \boldsymbol{b}^{i-1}, \boldsymbol{t}, \boldsymbol{b}^{i+1}, \dots, \boldsymbol{b}^k)$
for any $\boldsymbol{b}^1, \dots, \boldsymbol{b}^k \in R$.
Let $\boldsymbol{s} = \boldsymbol{s}^0 \rightarrow \boldsymbol{s}^1 \rightarrow \dots \rightarrow \boldsymbol{s}^\ell = \boldsymbol{t}$ be an $\boldsymbol{s}$-$\boldsymbol{t}$ path.
For every $j \in \{0, 1, \dots ,\ell -1 \}$,
the tuples $f(\boldsymbol{b}^1,\dots , \boldsymbol{b}^{i-1}, \boldsymbol{s}^j, \boldsymbol{b}^{i+1},\dots , \boldsymbol{b}^k)$ and $f(\boldsymbol{b}^1,\dots , \boldsymbol{b}^{i-1}, \boldsymbol{s}^{j+1}, \boldsymbol{b}^{i+1},\dots , \boldsymbol{b}^k)$ belong to the same component of $G(R)$, because they
differ in at most one variable (the variable in which $\boldsymbol{s}^{j}$ and $\boldsymbol{s}^{j+1}$ are different).
Thus $f(\boldsymbol{b}^1,\dots , \boldsymbol{b}^{i-1}, \boldsymbol{s}^0, \boldsymbol{b}^{i+1},\dots , \boldsymbol{b}^k)$
and $f(\boldsymbol{b}^1,\dots , \boldsymbol{b}^{i-1}, \boldsymbol{s}^\ell, \boldsymbol{b}^{i+1},\dots , \boldsymbol{b}^k)$ belong to the same component.

Therefore, there exist paths from $\boldsymbol{a}^1 = f(\boldsymbol{a}^1, \boldsymbol{a}^1,\dots , \boldsymbol{a}^1)$ to $f(\boldsymbol{a}^1, \boldsymbol{a}^2, \boldsymbol{a}^1,\dots , \boldsymbol{a}^1)$, from $f(\boldsymbol{a}^1, \boldsymbol{a}^2, \boldsymbol{a}^1,\dots , \boldsymbol{a}^1)$ to $f(\boldsymbol{a}^1, \boldsymbol{a}^2, \boldsymbol{a}^3, \boldsymbol{a}^1,\dots , \boldsymbol{a}^1)$ $,\dots,$ and from $f(\boldsymbol{a}^1, \boldsymbol{a}^2,\dots , \boldsymbol{a}^{k-1} , \boldsymbol{a}^1)$ to $f(\boldsymbol{a}^1, \boldsymbol{a}^2,\dots , \boldsymbol{a}^k) = \boldsymbol{a}$.
Thus there exists a path between $\boldsymbol{a}_1$ and $\boldsymbol{a}$ on $G(R)$.
\end{proof}

We next see that we can replace any column of $A$ with its opposite vector without changing the reachability.
Observe first that the feasibility of an integer linear system does not change if
we replace a variable $x_j$ with a new variable $x'_j = d - x_j$.
Namely, the feasibility of $I = (A, \boldsymbol{b}, d)$ is equivalent to that of $I' = (A', \boldsymbol{b} - dA_{.j}, d)$,
where $A'$ is obtained from $A$ by replacing the $j$-th column $A_{.j}$ with $-A_{.j}$.
Moreover, the reconfiguration problem of $I = (A, \boldsymbol{b}, d)$ can be reduced to
that of $I' = (A', \boldsymbol{b} - dA_{.j}, d)$ as the following lemma shows.

\begin{lemma}\label{lem:polarity_change}
	Let $I=(A,\boldsymbol{b},d)$ be an ILS and $\boldsymbol{s},\boldsymbol{t}$ be solutions of $I$.
	Then, for any $j \in \{1, \dots, n\}$,
	$\boldsymbol{t}$ is reachable from $\boldsymbol{s}$ in $G(I)$ if and only if
	$(t_1, \dots, t_{j-1}, d- t_j, t_{j+1}, \dots, t_n)$ is reachable from $(s_1, \dots, s_{j-1}, d- s_j, s_{j+1}, \dots, s_n)$ in $G(I')$,
	where $I'=(A', \boldsymbol{b} - dA_{.j}, d)$ and  $A'$ is obtained from $A$ by replacing the $j$-th column $A_{.j}$ with $-A_{.j}$.
\end{lemma}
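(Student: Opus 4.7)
The plan is to realize the transformation as an explicit graph isomorphism between $G(I)$ and $G(I')$. Define the map $\phi : D^n \to D^n$ by
\[
\phi(\boldsymbol{x}) = (x_1, \dots, x_{j-1}, d - x_j, x_{j+1}, \dots, x_n).
\]
Note that $\phi$ is an involution on $D^n$, since $d - (d-x_j) = x_j$, and it only depends on the fixed coordinate $j$. The statement of the lemma is exactly that $\phi(\boldsymbol{t})$ is reachable from $\phi(\boldsymbol{s})$ in $G(I')$ iff $\boldsymbol{t}$ is reachable from $\boldsymbol{s}$ in $G(I)$, so it suffices to show that $\phi$ induces a graph isomorphism between $G(I)$ and $G(I')$.

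First I would verify that $\phi$ maps $R(I)$ bijectively onto $R(I')$. A short computation gives
\[
A'\phi(\boldsymbol{x}) \;=\; \sum_{k \neq j} A_{.k}\,x_k \;+\; (-A_{.j})(d - x_j) \;=\; A\boldsymbol{x} \;-\; dA_{.j},
\]
so $A\boldsymbol{x} \geq \boldsymbol{b}$ holds iff $A'\phi(\boldsymbol{x}) \geq \boldsymbol{b} - dA_{.j}$ holds. Together with the involutivity of $\phi$, this shows that $\phi$ restricts to a bijection $R(I) \to R(I')$.

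Next I would check that $\phi$ preserves Hamming distance. For any $\boldsymbol{x},\boldsymbol{y} \in D^n$ and any coordinate $k$, $\phi(\boldsymbol{x})_k = \phi(\boldsymbol{y})_k$ iff $x_k = y_k$ (for $k \neq j$ this is trivial, and for $k = j$ we have $d - x_j = d - y_j$ iff $x_j = y_j$). Hence $\dist(\phi(\boldsymbol{x}),\phi(\boldsymbol{y})) = \dist(\boldsymbol{x},\boldsymbol{y})$, so edges of $G(I)$ correspond precisely to edges of $G(I')$ under $\phi$. Combining these two facts gives the desired graph isomorphism, and reachability is preserved by any graph isomorphism, which yields the lemma. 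There is no real obstacle here; the only thing to be careful about is the sign bookkeeping in the displayed computation of $A'\phi(\boldsymbol{x})$ and the fact that the right-hand side $\boldsymbol{b}$ must be shifted by $-dA_{.j}$ to match.
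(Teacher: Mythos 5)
Your proof is correct and takes essentially the same approach as the paper, which also transforms each vertex of a given path by the coordinate flip $x_j \mapsto d - x_j$; you merely package the argument more explicitly as a graph isomorphism and spell out the feasibility computation $A'\phi(\boldsymbol{x}) = A\boldsymbol{x} - dA_{.j}$ that the paper leaves implicit.
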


\begin{proof}
Assume that $\boldsymbol{t}$ is reachable from $\boldsymbol{s}$ in $G(I)$.
Let an $\boldsymbol{s}$-$\boldsymbol{t}$ path be $\boldsymbol{s} = \boldsymbol{s}^0 \rightarrow \boldsymbol{s}^1
\rightarrow \dots \rightarrow \boldsymbol{s}^\ell = \boldsymbol{t}$.
By replacing $\boldsymbol{s}^k$ with $(s^k_1, \dots, s^k_{j-1}, d- s^k_j, s^k_{j+1}, \dots, s^k_n)$ for each $k = 0,1, \dots, \ell$,
we obtain a path from $\boldsymbol{s}' := (s_1, \dots, s_{j-1}, d- s_j, s_{j+1}, \dots, s_n)$ to
$\boldsymbol{t}' := (t_1, \dots, t_{j-1}, d- t_j, t_{j+1, \dots, t_n})$ in $G(I')$.
Hence, $\boldsymbol{t}'$ is reachable from $\boldsymbol{s}'$ in $G(I')$.
The converse can be proven similarly.
\end{proof}
By inductively applying Lemma~\ref{lem:polarity_change},
we can replace any columns of $A$ by their opposite vectors without changing reachability by changing vector $\boldsymbol{b}$ appropriately.

We also use the following lemma in Sections 3 and 4.
A matrix $A \in \mathbb{Q}^{m \times n}$ is \emph{Horn} if each row of $A$ has at most one positive element.
An ILS is called \emph{Horn} if the input matrix is Horn.
The \emph{minimum operation} is a binary operation that outputs 
the smaller value of the two inputs.
For an ILS, a feasible solution $\boldsymbol{x}^*$ is called a \emph{unique minimal solution} of the ILS
if it satisfies $\boldsymbol{x}^* \leq \boldsymbol{x}$ for all the feasible solutions $\boldsymbol{x}$ of the ILS.
Here, for two vectors $\boldsymbol{x}$ and $\boldsymbol{y}$, $\boldsymbol{x} \geq \boldsymbol{y}$ holds if $x_j \geq y_j$ for all $j$.

\begin{lemma}[E.g., \cite{MaD02}]
\label{lem:Horn-min-closed}
The set of feasible solutions of a Horn ILS is closed under the minimum operation.
Since any nonempty relation on $D$ closed under the minimum operation has a unique minimal solution, 
so does any feasible Horn ILS.
\end{lemma}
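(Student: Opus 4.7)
The plan is to prove the two statements separately. For the closure claim, I would fix feasible solutions $\boldsymbol{x}, \boldsymbol{y}$ of a Horn ILS $A\boldsymbol{z} \ge \boldsymbol{b}$ and set $\boldsymbol{z} = \min(\boldsymbol{x}, \boldsymbol{y})$ componentwise. I need to verify $\sum_j A_{ij} z_j \ge b_i$ for each row $i$, so I would split on the sign pattern of row $i$. If row $i$ has no positive entry, then $A_{ij} \le 0$ for every $j$ and the inequality $z_j \le x_j$ reverses under multiplication by $A_{ij}$, yielding $\sum_j A_{ij} z_j \ge \sum_j A_{ij} x_j \ge b_i$.

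The only substantive case is when row $i$ has exactly one positive entry $A_{ik} > 0$. Here I would invoke the symmetry between $\boldsymbol{x}$ and $\boldsymbol{y}$ and assume without loss of generality that $x_k \le y_k$, so that $z_k = x_k$. Then for $j \ne k$ we still have $A_{ij} \le 0$ and $z_j \le x_j$, giving $A_{ij} z_j \ge A_{ij} x_j$, while the $k$-th term is exactly $A_{ik} x_k$. Summing, $\sum_j A_{ij} z_j \ge \sum_j A_{ij} x_j \ge b_i$. This is the key step and the only real obstacle, but it is essentially a one-line observation once the right variable is singled out; the Horn structure is exactly what makes this case analysis work, since with two positive entries one could not control both contributions by picking a single solution as the minimizer.

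For the second statement, given a nonempty relation $R \subseteq D^n$ closed under the binary minimum, I would first note that $R$ is finite because $D$ is finite. Hence I can enumerate $R = \{\boldsymbol{x}^{(1)}, \dots, \boldsymbol{x}^{(N)}\}$ and define $\boldsymbol{x}^* = \min(\boldsymbol{x}^{(1)}, \min(\boldsymbol{x}^{(2)}, \dots))$ by iterating the binary minimum. An easy induction on $N$ using closure shows $\boldsymbol{x}^* \in R$, and by construction $\boldsymbol{x}^* \le \boldsymbol{x}$ componentwise for every $\boldsymbol{x} \in R$, which is exactly the unique minimality condition. Uniqueness then follows: if $\boldsymbol{y}^*$ were another minimal element, then $\min(\boldsymbol{x}^*, \boldsymbol{y}^*) \le \boldsymbol{x}^*, \boldsymbol{y}^*$ would violate their minimality unless $\boldsymbol{x}^* = \boldsymbol{y}^*$. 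Combining this with the first part and the fact that the feasible set of a Horn ILS is a (nonempty, by feasibility) relation closed under minimum completes the proof.
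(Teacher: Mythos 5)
Your proof is correct. Note that the paper does not actually prove this lemma; it cites it as known from the literature (e.g., \cite{MaD02}), so there is no internal proof to compare against. Your argument is the standard one: the case split on whether the row contains a positive entry, with the WLOG choice $x_k \le y_k$ to handle the single positive coefficient, is exactly where the Horn hypothesis is used, and your observation that two positive entries would break this is the right intuition. The second half (finiteness of $D^n$, iterating the binary min over all of $R$, then uniqueness from $\min(\boldsymbol{x}^*, \boldsymbol{y}^*) \in R$) is likewise complete and correct.
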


\section{The general case of $Z(I) = 1$}
\label{sec:Z=1}
In this section, we show that the reconfiguration problem of $\aP(1)$ is
weakly coNP-complete and pseudo-polynomially solvable.

\subsection{Basic Properties}\label{subsec:basic-propeties}
In this subsection, we summarize useful properties of ILS(1).

\subsubsection{Horn integer linear systems}\label{subsec:Horn-property}
In this subsection, we treat Horn ILS.
Recall that ILS is called \emph{Horn} if each row of the input matrix $A$ has at most one positive element.
Let $I$ be an Horn ILS.
From Lemma~\ref{lem:Horn-min-closed} in Subsection~\ref{subsec:basic},
the set of feasible solutions of $I$ is closed under the minimum operation.
Since the minimum operation is idempotent, 
each connected component of $G(I)$ is also closed under the minimum operation 
by Lemma~\ref{lem:connected_component_closedness} in Subsection~\ref{subsec:basic}, 
It follows that there exists a unique minimal solution in each connected component by Lemma~\ref{lem:Horn-min-closed}.
We show that any vertex of $G(I)$ is connected to the unique minimal solution in the same connected component via a monotone path.
Here, a path $\boldsymbol{s}^0 \rightarrow \boldsymbol{s}^1 \rightarrow \dots \rightarrow \boldsymbol{s}^\ell$ is \emph{monotone} if
$\boldsymbol{s}^0 \geq \boldsymbol{s}^1 \geq \dots \geq \boldsymbol{s}^\ell$ holds, where we recall that
for two vectors $\boldsymbol{x}$ and $\boldsymbol{y}$, we have $\boldsymbol{x} \geq \boldsymbol{y}$ if and only if $x_j \geq y_j$ holds for all $j$.

\begin{lemma}\label{lem:Horn-monotone-path}
For a Horn system $I$,
each vertex in $G(I)$ is connected via a monotone path to the unique minimal solution in the same connected component.
\end{lemma}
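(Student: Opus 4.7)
The plan is to construct the required monotone path explicitly from any $\boldsymbol{s}$-$\boldsymbol{x}^*$ path by taking coordinate-wise cumulative minima. The construction is immediate; the only thing needing care is that consecutive members of the cumulative-minimum sequence are at Hamming distance at most one.

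First I would pin down the target $\boldsymbol{x}^*$. Combining Lemma~\ref{lem:Horn-min-closed} (the feasible set of a Horn ILS is closed under the binary $\min$) with Lemma~\ref{lem:connected_component_closedness} applied to the idempotent operation $\min$, each connected component $G'$ of $G(I)$ is itself closed under $\min$. Applying the second sentence of Lemma~\ref{lem:Horn-min-closed} to $V(G')$ then yields a unique minimal vertex $\boldsymbol{x}^*$ in the component containing $\boldsymbol{s}$, and this is the $\boldsymbol{x}^*$ named in the lemma.

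Next, fix an arbitrary $\boldsymbol{s}$-$\boldsymbol{x}^*$ path $\boldsymbol{s}=\boldsymbol{s}^0\to\boldsymbol{s}^1\to\cdots\to\boldsymbol{s}^\ell=\boldsymbol{x}^*$ in $G(I)$, and set $\boldsymbol{y}^k := \min(\boldsymbol{s}^0,\boldsymbol{s}^1,\ldots,\boldsymbol{s}^k)$ coordinate-wise for $k=0,1,\ldots,\ell$. By closure of the component under $\min$, each $\boldsymbol{y}^k$ is a feasible vertex of the same component (inductively, since $\boldsymbol{y}^{k+1}=\min(\boldsymbol{y}^k,\boldsymbol{s}^{k+1})$); we have $\boldsymbol{y}^0=\boldsymbol{s}$, and $\boldsymbol{y}^\ell\le\boldsymbol{x}^*$ forces $\boldsymbol{y}^\ell=\boldsymbol{x}^*$ by minimality of $\boldsymbol{x}^*$ in that component; and monotonicity $\boldsymbol{y}^{k+1}\le\boldsymbol{y}^k$ is immediate from the definition.

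The only step requiring genuine verification, and the closest thing to a main obstacle here, is that consecutive $\boldsymbol{y}^k$ and $\boldsymbol{y}^{k+1}$ differ in at most one coordinate. This follows by writing $\boldsymbol{y}^{k+1}=\min(\boldsymbol{y}^k,\boldsymbol{s}^{k+1})$ and noting that in every coordinate $i$ where $\boldsymbol{s}^k$ and $\boldsymbol{s}^{k+1}$ agree one has $y^k_i\le s^k_i=s^{k+1}_i$, hence $y^{k+1}_i=y^k_i$; only the unique coordinate in which $\boldsymbol{s}^k$ and $\boldsymbol{s}^{k+1}$ differ can produce a change. Deleting duplicates from $\boldsymbol{y}^0,\boldsymbol{y}^1,\ldots,\boldsymbol{y}^\ell$ then produces the desired monotone path. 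The overall strategy is the standard descent-to-the-least-model argument for Horn SAT, adapted to the multi-valued setting via Lemma~\ref{lem:connected_component_closedness}.
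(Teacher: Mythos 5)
Your proof is correct and takes essentially the same approach as the paper: replace an arbitrary $\boldsymbol{s}$-$\boldsymbol{x}^*$ path by its coordinate-wise cumulative minima, then check the endpoints, monotonicity, and that consecutive cumulative minima are at Hamming distance at most one. Your coordinate-wise verification of the last fact (where $\boldsymbol{s}^k$ and $\boldsymbol{s}^{k+1}$ agree, $y^k_i \le s^k_i = s^{k+1}_i$ forces $y^{k+1}_i = y^k_i$) is a slightly more direct phrasing than the paper's appeal to associativity of $\min$, but it is the same argument in substance.
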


\begin{proof}
	Let $I$ be a Horn ILS.
	Let $\boldsymbol{s}$ be an arbitrary feasible solution of $I$ and
	let $\boldsymbol{s}_{\min}$ be the unique minimal solution in the same component as $\boldsymbol{s}$ on $G(I)$.
	Since they are in the same connected component,
	there exists an $\boldsymbol{s}$-$\boldsymbol{s}_{\min}$ path on $G(I)$.
	Let such a path be $\boldsymbol{s}=\boldsymbol{s}^0 \rightarrow \boldsymbol{s}^1 \rightarrow \dots \rightarrow \boldsymbol{s}^\ell = \boldsymbol{s}_{\min}$.
	Note that this path may not be a monotone path.
	
	Now we show that we can construct a monotone $\boldsymbol{s}$-$\boldsymbol{s}_{\min}$ path.
	Let $\boldsymbol{u}^0 = \boldsymbol{s}^0$ and $\boldsymbol{u}^k = \min(\boldsymbol{u}^{k-1},\boldsymbol{s}^{k})$ for each $k$, $1 \le k \le \ell$.
	Note that $\boldsymbol{u}^\ell = \boldsymbol{s}_{\min}$ by minimality of $\boldsymbol{s}_{\min}$,
	and $\boldsymbol{u}^{k-1} \ge \min(\boldsymbol{u}^{k-1},\boldsymbol{s}^{k}) = \boldsymbol{u}^{k}$ for each $k$, $1 \le k \le \ell$.
	In the rest of the proof, we show that $\dist( \boldsymbol{u}^{k-1} , \boldsymbol{u}^{k}) \le 1$ for each $k$, $1 \le k \le \ell$.
	Then we immediately have a monotone $\boldsymbol{s}$-$\boldsymbol{s}_{\min}$ path $\boldsymbol{s}=\boldsymbol{u}^0 \rightarrow \boldsymbol{u}^1 \rightarrow \dots \rightarrow \boldsymbol{u}^\ell = \boldsymbol{s}_{\min}$ (if needed, we delete the redundant feasible solutions).
	
Let $\boldsymbol{x}, \boldsymbol{y} \in \mathbb{Z}^n$ be arbitrary vectors with $\dist(\boldsymbol{x}, \boldsymbol{y}) \le 1$.
	We can easily have that $\dist(\boldsymbol{x}, \min(\boldsymbol{x},\boldsymbol{y})) \le 1$.
	Using this property, we show that $\dist( \boldsymbol{u}^{k-1} , \boldsymbol{u}^{k}) \le 1$ in the following.
For $k=1$, we have 
$$\dist( \boldsymbol{u}^{0} , \boldsymbol{u}^{1}) = \dist( \boldsymbol{s}^{0} , \min(\boldsymbol{s}^0 , \boldsymbol{s}^{1})) \le 1,$$ 
since $\dist(\boldsymbol{s}^0,\boldsymbol{s}^{1})=1$.
For $k\ge 2$, we have $\boldsymbol{u}^{k-1}  =  \min(\boldsymbol{u}^{k-2} , \boldsymbol{s}^{k-1})$ and 
	\begin{equation}
	\begin{array}{lll}
	\boldsymbol{u}^{k} & = & \min(\boldsymbol{u}^{k-1} , \boldsymbol{s}^{k})\\
	& = & \min(\min(\boldsymbol{u}^{k-2} , \boldsymbol{s}^{k-1}) , \boldsymbol{s}^{k})\\
	& = & \min(\boldsymbol{u}^{k-2} , \min(\boldsymbol{s}^{k-1} , \boldsymbol{s}^{k})),
	\end{array}
	\end{equation}
	where we use associativity of the minimum operation in the last equality.
	Since $\dist(\boldsymbol{s}^{k-1}, \boldsymbol{s}^{k}) = 1$, we have $\dist(\boldsymbol{s}^{k-1} , \min(\boldsymbol{s}^{k-1} , \boldsymbol{s}^{k})) \le 1$.
	Then $\dist( \boldsymbol{u}^{k-1} , \boldsymbol{u}^k) \le 1$ follows as desired.
	This completes the proof.
\end{proof}

\subsubsection{Two-variable-per-inequality (TVPI) integer linear systems}\label{subsec:TVPI-property}
In this subsection, we treat TVPI ILS,
i.e., ILS where each row of the input matrix has at most two nonzero elements.
We first show that the solution set of a TVPI ILS is closed under a median operation,
where a ternary operation $\median:D^3 \rightarrow D$ is the \emph{median operation} on $D$ if it outputs the middle value of the three inputs.
For example, we have $\median(2,4,3)=3$ and $\median(2,5,2)=2$.
The fact might be already known, however, the authors cannot find it in the literature.

	\begin{proposition}\label{prop:TVPI-median-closed}
		The solution set of a TVPI ILS is closed under a median operation.
	\end{proposition}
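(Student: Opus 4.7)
The plan is to reduce the statement to a single-constraint check and then perform a short case analysis. Since the coordinate-wise median is defined componentwise and the intersection of any family of relations each closed under a fixed operation is again closed under that operation, it suffices to prove that, for every row of $A$, the set of vectors in $D^n$ satisfying that single row is closed under the coordinate-wise median. Rows with at most one nonzero entry are trivial: such a row imposes a lower or upper bound on a single variable, and the median of three numbers lying in an interval lies in that interval. The only nontrivial case is a two-variable row $a x_i + b x_j \ge c$ with $a, b \neq 0$.

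I would first reduce to the case $a, b > 0$ by a polarity trick in the spirit of Lemma~\ref{lem:polarity_change}. Replacing a variable $x_\ell$ with $d - x_\ell$ flips the sign of the $\ell$-th column of $A$ and induces a bijection on the single-row solution set, and the coordinate-wise median is self-dual because $\median(d-u_1, d-u_2, d-u_3) = d - \median(u_1, u_2, u_3)$. Hence the bijection carries coordinate-wise medians to coordinate-wise medians, so median-closure of the original single-row solution set is equivalent to that of the transformed one. Flipping the polarities of $x_i$ and/or $x_j$ if needed, we may therefore assume $a > 0$ and $b > 0$.

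Given three feasible solutions $\boldsymbol{p}^1, \boldsymbol{p}^2, \boldsymbol{p}^3$ of $a x_i + b x_j \ge c$, I would relabel them so that $p^1_i \le p^2_i \le p^3_i$; then the coordinate-wise median $\boldsymbol{m}$ has $m_i = p^2_i$ and $m_j = p^\tau_j$, where $\tau \in \{1,2,3\}$ picks out the middle element of $(p^1_j, p^2_j, p^3_j)$. I would then walk through the six orderings of $(p^1_j, p^2_j, p^3_j)$: in the two orderings where $\tau = 2$ the median equals $\boldsymbol{p}^2$ and feasibility is immediate; in the remaining four it has the form $(p^2_i, p^1_j)$ or $(p^2_i, p^3_j)$, and I would verify $a p^2_i + b p^\tau_j \ge c$ by a one-line monotonicity argument. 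For example, whenever $\tau = 1$ one has $a p^2_i + b p^1_j \ge a p^1_i + b p^1_j \ge c$, using $p^1_i \le p^2_i$ and positivity of $a$; in the ``crossed'' ordering $p^1_j \le p^3_j \le p^2_j$ (so $\tau = 3$), one combines $p^1_i \le p^2_i$ and $p^1_j \le p^3_j$ with $a p^1_i + b p^1_j \ge c$ to obtain $a p^2_i + b p^3_j \ge c$; the remaining sub-case $p^2_j \le p^3_j \le p^1_j$ uses the inequality $a p^2_i + b p^2_j \ge c$ together with $p^2_j \le p^3_j$.

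The main obstacle is exactly the ``crossed'' orderings, where the coordinate-wise median is none of the three input points. Handling them requires picking, for each such ordering, the right one of the three given inequalities and the correct pair of monotonicity directions; the polarity reduction is what makes this tractable, since it spares us from repeating the analysis for the four sign patterns of $(a,b)$. Once the six orderings are laid out explicitly in a small table, each is a single inequality chain and the proof concludes.
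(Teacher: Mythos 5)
Your proof is correct, and it reaches the paper's conclusion by a somewhat different route. The paper also reduces to a single two-variable inequality $ax_i + bx_j \ge c$ with three solutions $\boldsymbol{x},\boldsymbol{y},\boldsymbol{z}$, assumes without loss of generality $x_i \le y_i \le z_i$, and then treats the four sign patterns of $(a,b)$ directly, in each case bounding $b \cdot \median(x_j,y_j,z_j)$ from below by one of $b\cdot\min(\cdot,\cdot)$ or $b\cdot\max(\cdot,\cdot)$ via the observation $\min(u,v) \le \median(u,v,w) \le \max(u,v)$ for all $u,v,w$, which singles out which of the three given inequalities to fall back on. You instead collapse the sign casework to a single case $a,b>0$ by the polarity substitution $x_\ell \mapsto d - x_\ell$, using that the componentwise median is self-dual under the reflection of $D$, and then dispatch that one case by enumerating the six orderings of the $j$-coordinates. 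Your polarity reduction eliminates the fourfold repetition of the sign argument and makes the connection to Lemma~\ref{lem:polarity_change} explicit, at the price of a slightly longer ordering enumeration; the paper's $\min/\max$ sandwich buys a shorter argument per sign case but must be repeated four times. Both are correct and of comparable length, and your version perhaps isolates more clearly that the ``crossed'' orderings are where the content lies.
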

	\begin{proof}
		Let $\median$ be the median operation on $D$.
		Note that $\min(x,y) \leq \median(x,y,z) \leq \max(x,y)$ holds for all $x,y,z \in D$.

		For a TVPI inequality $ax_i+bx_j \geq c$,
		let $\boldsymbol{x},\boldsymbol{y},\boldsymbol{z} \in D^n$ be solutions of the inequality,
		i.e., we have $ax_i+bx_j \geq c$, $ay_i+by_j \geq c$, and $az_i+bz_j \geq c$.
		We show that $a \cdot \median(x_i,y_i,z_i)+b \cdot \median(x_j,y_j,z_j) \geq c$ holds,
		which proves the proposition.

		Without loss of generality, we assume that $x_i \leq y_i \leq z_i$.
		Hence, we have $\median(x_i,y_i,z_i) = y_i$.
		Then it suffices to show that $a y_i + b \cdot \median(x_j,y_j,z_j) \geq c$ holds.
		We show this for all the sign patterns of $a$ and $b$.

		\begin{description}
			\item{\textbf{Case 1:} $a \geq 0$ and $b \geq 0$}\\
			In this case, we have
			\begin{equation}
			\begin{array}{lll}
			a y_i + b \cdot \median(x_j,y_j,z_j) &\geq& ay_i + b \cdot \min(x_j,y_j)\\
			& = &ay_i + \min(bx_j,by_j) \\
			&= &\min(ay_i+bx_j,ay_i+by_j)\\
			& \geq& \min(ax_i+bx_j,ay_i+by_j)\\
			& \geq &c.
			\end{array}
			\end{equation}
			\item{\textbf{Case 2:} $a<0$ and $b\geq 0$}\\
			In this case, it follows that
			\begin{equation}
			\begin{array}{lll}
			a y_i + b \cdot \median(x_j,y_j,z_j) &\geq& ay_i + b \cdot \min(y_j,z_j)\\
			& = &ay_i + \min(by_j,bz_j) \\
			&= &\min(ay_i+by_j,ay_i+bz_j)\\
			& \geq& \min(ay_i+by_j,az_i+bz_j)\\
			& \geq &c,
			\end{array}
			\end{equation}
			where the second last inequality follows from $ay_i \geq az_i$.
			\item{\textbf{Case 3:} $a\geq 0$ and $b< 0$}\\
			In this case, we have
			\begin{equation}
			\begin{array}{lll}
			a y_i + b \cdot \median(x_j,y_j,z_j) &\geq&ay_i + b \cdot \max(x_j,y_j)\\
			& = &ay_i + \min(bx_j,by_j) \\
			&= &\min(ay_i+bx_j,ay_i+by_j)\\
			& \geq& \min(ax_i+bx_j,ay_i+by_j)\\
			& \geq &c,
			\end{array}
			\end{equation}
			where the second last inequality follows from $ay_i \geq ax_i$.
			\item{\textbf{Case 4:} $a<0$ and $b<0$}\\
			In this case, it follows that
			\begin{equation}
			\begin{array}{lll}
			a y_i + b \cdot \median(x_j,y_j,z_j) &\geq&ay_i + b \cdot \max(y_j,z_j)\\
			& = &ay_i + \min(by_j,bz_j) \\
			&= &\min(ay_i+by_j,ay_i+bz_j)\\
			& \geq& \min(ay_i+by_j,az_i+bz_j)\\
			& \geq &c,
			\end{array}
			\end{equation}
			where the second last inequality follows from $ay_i \geq az_i$.
		\end{description}
		Therefore, we show that in all cases $a y_i + b \cdot \median(x_j,y_j,z_j) \geq c$ holds.
		This completes the proof.
	\end{proof}

We also use the following observation to show our result.
For $p \in D$,
let $\sqcap_p: D^2 \rightarrow D$ be defined as $\sqcap_p(x,y) = \median(p,x,y)$ for any $x,y \in D$,
where $\median$ is the median operation.
Then $\sqcap_p$ is a semilattice operation, which is shown in the following lemma.
Here, a binary operation $D^2 \rightarrow D$ is \emph{semilattice} if
it is (i) associative, (ii) commutative, and (iii) idempotent.

	\begin{lemma}\label{lem:median-semilattice-ILS}
		For $p \in D$,
		$\sqcap_p: D^2 \rightarrow D$ defined as above is a semilattice operation.
	\end{lemma}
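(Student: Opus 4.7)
The statement has three parts: idempotency, commutativity, and associativity of $\sqcap_p$. I would handle the first two in one line each. Idempotency is immediate from $\sqcap_p(x,x) = \median(p,x,x) = x$, since two of the three inputs to the median agree. Commutativity is immediate from the symmetry of $\median$ in its three arguments: $\median(p,x,y) = \median(p,y,x)$.

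The substantive content is associativity, i.e.,
\begin{equation}
\median(p, \median(p,x,y), z) \;=\; \median(p, x, \median(p,y,z)) \qquad \text{for all } x,y,z \in D.
\end{equation}
My preferred approach is conceptual rather than computational. Since $D$ is totally ordered, I would introduce a partial order $\le_p$ on $D$ by declaring $a \le_p b$ iff $a$ lies in the closed interval between $p$ and $b$, i.e., iff $\min(p,b) \le a \le \max(p,b)$. One checks routinely that $\le_p$ is reflexive, antisymmetric, and transitive. The resulting poset is tree-shaped: $p$ is the unique minimum, the elements of $D$ strictly above $p$ form an upward chain (in the natural order), and the elements strictly below $p$ form a second upward chain (the natural order reversed), the two chains being incomparable above the root $p$.

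Next I would observe that $\sqcap_p(x,y)$ coincides with the meet of $\{x,y\}$ in $(D,\le_p)$. A one-line case check gives: on two elements on the same side of $p$, the median $\median(p,x,y)$ returns the one closer to $p$ (which is the minimum in the natural order above $p$, and the maximum below $p$); on two elements on opposite sides of $p$, the median returns $p$ itself. These are precisely the meets in $(D,\le_p)$. Associativity of $\sqcap_p$ then follows from the general fact that the binary meet in a poset is associative whenever it is everywhere defined.

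The main obstacle is associativity itself — the other two properties are one-liners. If a more direct proof were desired, one could alternatively do case analysis on the signs of $x-p$, $y-p$, $z-p$, giving $2^3 = 8$ cases (reducible by the $x \leftrightarrow z$ symmetry of the claim and by the order-reversing symmetry of $D$ about $p$), in each of which both sides simplify to a common value among $p$, $\min$, or $\max$ of a subset of $\{x,y,z\}$. Either route turns the only nontrivial part into a routine verification.
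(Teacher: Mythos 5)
Your proposal is correct, and the treatment of idempotency and commutativity coincides with the paper's. For associativity you take a genuinely different route: the paper's proof is a direct case analysis on the magnitude relations among $x, y, z, p$ (it spells out one representative case, $x \leq y \leq z \leq p$, and leaves the remaining cases to the reader), whereas you construct the order $\le_p$ first, identify $(D, \le_p)$ as a tree-shaped meet-semilattice with root $p$, verify that $\sqcap_p$ is its meet, and then cite the general associativity of binary meets. Your argument is more conceptual and explains \emph{why} associativity holds rather than merely verifying it; it also inverts the logical flow of the paper, which instead uses this lemma to \emph{derive} the poset $(D,\le_p)$ afterward (immediately following the lemma the paper defines $x \le_p y$ iff $\sqcap_p(x,y) = x$). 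Since you define $\le_p$ independently and then show $\sqcap_p$ is its meet, you get the semilattice axioms for free but must separately check the poset axioms and the meet identification — about the same total work, but your version makes the structure visible up front, which is arguably clearer given that the poset is what actually gets used in the subsequent development. The direct case analysis you mention as an alternative is essentially what the paper does.
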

	\begin{proof}
		We show that each axiom of semilattice operations holds for $\sqcap_p$.

		For associativity, we have to show that
		\begin{equation}
		\sqcap_p(x, \sqcap_p(y, z)) = \median(p,x,\median(p,y,z)) = \median(p,\median(p,x,y),z) = \sqcap_p(\sqcap_p(x, y), z).
		\end{equation}
		The middle equality can be shown by checking all the possibility of the magnitude relations on $x,y,z$ and $p$.
		For example, if $x \leq y \leq z \leq p$ holds,
		then we have $\median(p,x,\median(p,y,z)) = \median(p,x,z) = z$ and
		$\median(p,\median(p,x,y),z) = \median(p, y, z) = z$, and thus the equality holds.
		We leave the reader to check the equality for the other possibilities.

		\if0
		\begin{equation}
		\begin{array}{lll}
		\sqcap_p(x, \sqcap_p(y, z)) &=& \median(p,x,\sqcap_p(y, z))\\
		& = &\median(p,x,\median(p,y,z)) \\
		&= &\median(\median(p,y,z),p,x) \\
		&=&\median(\median(z,p,y),p,x) \\
		&= &\median(z,p,\median(y,p,x)) \\
		&= &\median(p,\median(y,p,x),z) \\
		&=& \median(p,\median(p,x,y),z) \\
		&=& \median(p,\sqcap_p(x,y),z) \\
		&=& \sqcap_p(\sqcap_p(x, y), z).
		\end{array}
		\end{equation}
		\fi

		For commutativity,
		it follows that
		$\sqcap_p(x, y) = \median(p,x,y) = \median(p,y,x) = \sqcap_p(y, x)$.
		Finally, for idempotency,
		we have $\sqcap_p(x, x) = \median(p,x,x) = x$.
		Hence, $\sqcap_p$ is a semilattice operation.
	\end{proof}

From Lemma~\ref{lem:median-semilattice-ILS},
we can construct a poset $(D,\leq_p)$ induced by $\sqcap_p$ for $p \in D$, 
where for any $x,y \in D$, $x \leq_p y$ if and only if $\sqcap_p(x, y) = x$.

From Lemma~\ref{lem:connected_component_closedness} in Subsection~\ref{subsec:basic} and Proposition~\ref{prop:TVPI-median-closed},
each connected component of $G(I)$ is closed under the median operation on $D$ for a TVPI system $I$, since median operations are idempotent.
For any feasible solution $\boldsymbol{t}$ to $I$, define $M_{\boldsymbol{t}}: (D^n)^2 \rightarrow D^n$ as $M_{\boldsymbol{t}}(\boldsymbol{x}, \boldsymbol{y}) = M(\boldsymbol{t},\boldsymbol{x},\boldsymbol{y})$ for any $\boldsymbol{x},\boldsymbol{y} \in D^n$.
Then, as in Lemma~\ref{lem:connected_component_closedness}, we can show that each connected component of $G(I)$ is closed under $M_{\boldsymbol{t}}$, i.e., for two feasible solutions $\boldsymbol{x}, \boldsymbol{y}$ to $I$ in the same connected component $M_{\boldsymbol{t}}(\boldsymbol{x}, \boldsymbol{y})$ is also in the component.
For any two vectors $\boldsymbol{x}$ and $\boldsymbol{y}$,
let $\boldsymbol{x} \geq_{\boldsymbol{t}} \boldsymbol{y}$ hold if and only if $x_j \geq_{t_j} y_j$ holds for all $j$.
Similar to Lemma~\ref{lem:Horn-monotone-path}, we show that there exists a $\boldsymbol{t}$-monotone path from any vertex of $G(I)$ to the $\boldsymbol{t}$-unique minimal solution in the same connected component,
where a $\boldsymbol{t}$-monotone path and unique $\boldsymbol{t}$-minimality are
defined analogously to a monotone path and unique minimality, respectively.

\begin{lemma}\label{lem:TVPI-monotone-path}
	For a TVPI system $I$ and a feasible solution $\boldsymbol{t}$ to $I$,
	each vertex is connected to the unique $\boldsymbol{t}$-minimal solution in the same connected component via a $\boldsymbol{t}$-monotone path.
\end{lemma}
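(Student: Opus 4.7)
The plan is to mirror the proof of Lemma~\ref{lem:Horn-monotone-path}, replacing the minimum operation by the coordinatewise operation $M_{\boldsymbol{t}}(\boldsymbol{x},\boldsymbol{y}) = M(\boldsymbol{t},\boldsymbol{x},\boldsymbol{y})$ and the usual order $\leq$ by the partial order $\leq_{\boldsymbol{t}}$ induced by $\sqcap_{t_j}$ at each coordinate. Two structural ingredients are already in hand: (i) by Proposition~\ref{prop:TVPI-median-closed} combined with Lemma~\ref{lem:connected_component_closedness}, each connected component of $G(I)$ is closed under $M_{\boldsymbol{t}}$, since this is the ternary median with the feasible solution $\boldsymbol{t}$ plugged into one slot (still idempotent, still feasibility-preserving); and (ii) by Lemma~\ref{lem:median-semilattice-ILS} applied coordinatewise, $M_{\boldsymbol{t}}$ is a semilattice operation on $D^n$ whose induced partial order is exactly $\leq_{\boldsymbol{t}}$.

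Fix a feasible solution $\boldsymbol{s}$ and let $\boldsymbol{s}_{\min}$ denote the unique $\boldsymbol{t}$-minimal element of its connected component (whose existence follows from (i) and (ii), analogously to Lemma~\ref{lem:Horn-min-closed}). Pick any path $\boldsymbol{s}=\boldsymbol{s}^0 \to \boldsymbol{s}^1 \to \dots \to \boldsymbol{s}^\ell = \boldsymbol{s}_{\min}$ in $G(I)$ and set $\boldsymbol{u}^0 := \boldsymbol{s}^0$ and $\boldsymbol{u}^k := M_{\boldsymbol{t}}(\boldsymbol{u}^{k-1}, \boldsymbol{s}^k)$ for $k = 1,\dots,\ell$. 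Because $M_{\boldsymbol{t}}$ is the meet of the semilattice $(D^n, \leq_{\boldsymbol{t}})$, one immediately has $\boldsymbol{u}^{k-1} \geq_{\boldsymbol{t}} \boldsymbol{u}^k$, and since $\boldsymbol{s}_{\min} \leq_{\boldsymbol{t}} \boldsymbol{u}^{\ell-1}$ by $\boldsymbol{t}$-minimality, the absorption law forces $\boldsymbol{u}^\ell = M_{\boldsymbol{t}}(\boldsymbol{u}^{\ell-1}, \boldsymbol{s}_{\min}) = \boldsymbol{s}_{\min}$.

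What remains is to verify $\dist(\boldsymbol{u}^{k-1},\boldsymbol{u}^k) \leq 1$ for every $k$. The base case $k = 1$ is immediate: $\boldsymbol{s}^0$ and $\boldsymbol{s}^1$ differ in at most one coordinate, and $M_{\boldsymbol{t}}$ acts coordinatewise with $M(t_j,x,x) = x$ at every coordinate where $s^0_j = s^1_j$. For $k \geq 2$, the plan is to invoke the coordinatewise associativity supplied by Lemma~\ref{lem:median-semilattice-ILS} to rewrite
\[
\boldsymbol{u}^{k-1} = M_{\boldsymbol{t}}(\boldsymbol{u}^{k-2}, \boldsymbol{s}^{k-1}), \qquad \boldsymbol{u}^{k} = M_{\boldsymbol{t}}\!\left(\boldsymbol{u}^{k-2},\, M_{\boldsymbol{t}}(\boldsymbol{s}^{k-1}, \boldsymbol{s}^k)\right),
\]
and then observe that $\dist(\boldsymbol{s}^{k-1}, M_{\boldsymbol{t}}(\boldsymbol{s}^{k-1},\boldsymbol{s}^k)) \leq 1$ (again by coordinatewise idempotence at coordinates where $\boldsymbol{s}^{k-1}$ and $\boldsymbol{s}^k$ agree), from which $\dist(\boldsymbol{u}^{k-1},\boldsymbol{u}^k) \leq 1$ follows in the same coordinatewise manner. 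After deleting consecutive repetitions from $\boldsymbol{u}^0,\dots,\boldsymbol{u}^\ell$, one obtains the desired $\boldsymbol{t}$-monotone $\boldsymbol{s}$-$\boldsymbol{s}_{\min}$ path.

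The main obstacle is bookkeeping rather than creativity: the semilattice laws of Lemma~\ref{lem:median-semilattice-ILS} are stated on $D$, so one must carefully transfer them to the coordinatewise operation $M_{\boldsymbol{t}}$ on $D^n$ and verify that the induced partial order $\leq_{\boldsymbol{t}}$ behaves as expected, in order to justify both the conclusion $\boldsymbol{u}^\ell = \boldsymbol{s}_{\min}$ and the extraction of the monotone path from $\{\boldsymbol{u}^k\}$. Once this is set up, the argument is a direct analog of the Horn case.
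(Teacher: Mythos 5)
Your proposal is correct and takes essentially the same approach as the paper: the paper's proof defines $\boldsymbol{s}^k_* = M(\boldsymbol{t},\boldsymbol{s}^{k-1}_*,\boldsymbol{s}^k)$ recursively (which is exactly your $\boldsymbol{u}^k = M_{\boldsymbol{t}}(\boldsymbol{u}^{k-1},\boldsymbol{s}^k)$) and then appeals to ``the same discussion'' as the Horn case, which you have simply spelled out via the coordinatewise associativity of $\sqcap_{t_j}$. (One tiny terminological nit: the step $M_{\boldsymbol{t}}(\boldsymbol{u}^{\ell-1},\boldsymbol{s}_{\min}) = \boldsymbol{s}_{\min}$ is not the absorption law but just the definition of the partial order induced by the meet operation together with commutativity.)
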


\begin{proof}
	The proof is similar to the one for Lemma~\ref{lem:Horn-monotone-path}.

	Let $I$ be a TVPI integer linear system and $\boldsymbol{t}$ be an arbitrary feasible solution to $I$.
	Let $\boldsymbol{s}$ be an arbitrary feasible solution of $I$ and
	let $\boldsymbol{x}^*$ be the unique $\boldsymbol{t}$-minimal solution in the same component as $\boldsymbol{s}$ on $G(I)$.
	Since they are in the same connected component,
	there exists an $\boldsymbol{s}$-$\boldsymbol{x}^*$ path on $G(I)$.
	Let such a path be $\boldsymbol{s}=\boldsymbol{s}^0 \rightarrow \boldsymbol{s}^1 \rightarrow \dots \rightarrow \boldsymbol{s}^\ell = \boldsymbol{x}^*$.
	Note that this path may not be a $\boldsymbol{t}$-monotone path.

	Now we show that we can construct a $\boldsymbol{t}$-monotone $\boldsymbol{s}$-$\boldsymbol{x}^*$ path.
	For each $k$, $0 \le k \le \ell$, we define $\boldsymbol{s}^k_*$ as follows:
	\begin{equation}
	\boldsymbol{s}^k_* = \begin{cases}
	\boldsymbol{s}^0 & k = 0 \\
	\median(\boldsymbol{t},\boldsymbol{s}^{k-1}_*,\boldsymbol{s}^{k}) & k \ge 1.
	\end{cases}
	\end{equation}
	By the same discussion on Lemma~\ref{lem:Horn-monotone-path}, we have a $\boldsymbol{t}$-monotone $\boldsymbol{s}$-$\boldsymbol{x}^*$ path $\boldsymbol{s}=\boldsymbol{s}^0_* \rightarrow \boldsymbol{s}^1_* \rightarrow \dots \rightarrow \boldsymbol{s}^\ell_* = \boldsymbol{x}^*$ (if needed, we delete the redundant feasible solutions).
	This completes the proof.
\end{proof}

\subsubsection{Decomposition of ILS(1)}\label{subsec:ZIisone_general-property}
We here recall that any instance of ILS(1) can be decomposed into Horn and TVPI systems.
All the result in this subsection are from~\cite{KiM16}.
It is known that an instance $I$ of ILS(1) admits a $QH$-partition
Let $\{ 1, \dots, n \}$ be a variable index set.
A partition $Q \cup H = \{ 1, \dots, n \}$ (and $Q \cap H = \emptyset $) is called a $QH$-\emph{partition} of $\{ 1, \dots, n \}$,
if it satisfies the following three conditions:
\begin{description}
	\item[(a)] Each row $i$ of $A$ contains at most two nonzero elements $A_{ij}$ with $j \in Q$.
	\item[(b)] Each row $i$ of $A$ contains at most one positive element $A_{ij}$ with $j \in H$.
	\item[(c)] If a row $i$ of $A$ contains  a positive element $A_{ij}$ with $j \in H$, then
	the elements $A_{ik}$ with $k \in Q$ are all zeros.
\end{description}
For a $QH$-partition, let $S$ denote the set of rows $i$ of $A$ such that $A_{ij}=0$ for all  $j \in Q$.
Define $\overline{S}:=\{ 1, \dots, m \} \setminus S$.
For a row and column index sets $T$ and $C$,
let $A[T,C]$ denote the submatrix of $A$ whose row and column sets are $T$ and $C$, respectively.
Moreover, for a vector $\boldsymbol{a} \in \mathbb{Q}^k$ and $T \subseteq \{ 1, \dots k \}$,
let $\boldsymbol{a}_T$ denote the restriction of $\boldsymbol{a}$ to $T$.
Then, we can decompose the integer linear system as follows:
\begin{equation}
\left\{
\begin{array}{l}
A[S,H]\boldsymbol{x}_H\geq \boldsymbol{b}_S\\
A[\overline{S},H]\boldsymbol{x}_H + A[\overline{S},Q]\boldsymbol{x}_Q \geq \boldsymbol{b}_{\overline{S}},
\end{array}
\right.
\end{equation}
where we note that $A[S,Q] = 0$ by the definition of $S$.
Moreover, note that by the condition (b) of $QH$-partition, the system
$A[S,H]\boldsymbol{x}_H\geq \boldsymbol{b}_S$ is Horn, i.e.,
each row of $A[S,H]$ contains at most one positive element.
Similarly, the elements of $A[\overline{S},H]$ are nonpositive and
each row of $A[\overline{S},Q]$ contains at most two nonzero elements, respectively
by conditions (c) and (a) of $QH$-partition.

\subsection{Pseudo-polynomial solvability}
In this subsection, we show the following theorem.

\begin{theorem}\label{thm:Z=1}
The reconfiguration problem of $\aP(1)$ is pseudo-polynomially solvable.
\end{theorem}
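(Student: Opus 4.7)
The plan is to design a pseudo-polynomial algorithm for the $\aP(1)$ reconfiguration problem by leveraging the $QH$-decomposition of $I \in \aP(1)$ from Section~\ref{subsec:ZIisone_general-property} together with the monotone-path machinery developed in Sections~\ref{subsec:Horn-property} and~\ref{subsec:TVPI-property}. The skeleton is to reduce ILS(1) reconfiguration to an iterated sequence of Horn-reconfiguration and TVPI-reconfiguration subroutines that together compute a canonical representative of the connected component of $\boldsymbol s$ and of $\boldsymbol t$, after which reachability reduces to checking equality of the two canonical forms.

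I would first handle the Horn and TVPI cases in isolation. For a Horn ILS, Lemma~\ref{lem:Horn-monotone-path} guarantees that every feasible solution reaches the unique minimum of its connected component by a single-coordinate monotone descent, so a greedy descent from $\boldsymbol s$ and from $\boldsymbol t$ produces their component minima in $O(dnm)$ feasibility checks, after which reachability is equivalent to equality of these minima. For a TVPI ILS, Lemma~\ref{lem:TVPI-monotone-path} gives the same picture via a $\boldsymbol t$-monotone descent toward the unique $\boldsymbol t$-minimum of the component; again reachability reduces to equality of the two $\boldsymbol t$-minima.

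For a general $I \in \aP(1)$ with $QH$-partition $Q \cup H$, the key structural observation is that for any fixed $\boldsymbol x_Q$, the constraints on $\boldsymbol x_H$ form a Horn system: the rows in $S$ are Horn in $H$ by condition (b), and the rows in $\overline S$ restrict to $A[\overline S, H]\boldsymbol x_H \geq \boldsymbol b_{\overline S} - A[\overline S, Q]\boldsymbol x_Q$, which is trivially Horn because $A[\overline S, H]$ has no positive entries by condition (c). Symmetrically, for any fixed $\boldsymbol x_H$, the constraints on $\boldsymbol x_Q$ form a TVPI system. A decisive point is that decreasing $\boldsymbol x_H$ can only raise the slack of rows in $\overline S$ (again because $A[\overline S, H] \leq 0$), so reducing $\boldsymbol x_H$ never breaks any constraint of $I$. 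The algorithm then alternates (i) a Horn descent of $\boldsymbol x_H$ (with $\boldsymbol x_Q$ fixed) to the Horn-minimum of the restricted system, followed by (ii) a $\boldsymbol t_Q$-monotone TVPI step in $\boldsymbol x_Q$ (with $\boldsymbol x_H$ fixed), iterating until either $\boldsymbol s$ reaches $\boldsymbol t$ or no further moves are possible. Each Horn step strictly decreases $\sum_j x_j$ and each TVPI step strictly decreases a $\boldsymbol t_Q$-based potential bounded via Lemma~\ref{lem:TVPI-monotone-path}, giving $O(dn)$ outer iterations and hence pseudo-polynomial total complexity.

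The main obstacle will be the correctness argument: showing that this alternating procedure brings $\boldsymbol s$ to $\boldsymbol t$ exactly when the two solutions lie in the same connected component of $G(I)$. One direction is immediate since every move is a valid edge of $G(I)$. For the converse I would argue that the canonical form produced is a component invariant, combining Lemma~\ref{lem:connected_component_closedness} applied to the minimum operation on $H$ (justified by Lemma~\ref{lem:Horn-min-closed}) with the same lemma applied to a $\boldsymbol t_Q$-induced semilattice on $Q$ (Proposition~\ref{prop:TVPI-median-closed} together with Lemma~\ref{lem:median-semilattice-ILS}), so that the canonical form is preserved along arbitrary edges within a single component. Certifying that the two semilattice reductions commute well enough for the interleaving to terminate at the true component invariant, without leaving the component of $\boldsymbol s$ at any intermediate step, is where the bulk of the technical work will lie.
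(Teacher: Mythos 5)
Your skeleton matches the paper's: decompose $I$ via a $QH$-partition, compute a canonical representative of the component of $\boldsymbol{s}$ by a Horn descent on the $H$-variables and a $\boldsymbol{t}_Q$-monotone TVPI descent on the $Q$-variables, and compare with the canonical form of $\boldsymbol{t}$. The Horn and TVPI subroutines (Propositions~\ref{lem:Horn-pseudoP} and \ref{cor:TVPI-pseudoP}) are also used exactly as you suggest.

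Where you diverge is in proposing an \emph{alternating} loop with an $O(dn)$ bound on outer iterations and in flagging, but not resolving, the ``commutation'' of the two semilattice reductions. The paper avoids this entirely by exploiting the $QH$-structure more directly. First, the Horn subproblem $I_H$ is defined using only the rows in $S$, i.e.\ the rows with all-zero $Q$-columns. These rows do not mention $\boldsymbol{x}_Q$ at all, so the component minimum $\boldsymbol{x}^*_H$ of $\boldsymbol{x}_H$ is independent of $\boldsymbol{x}_Q$; there is nothing to iterate. Second, for the rows in $\overline{S}$, condition (c) of the $QH$-partition gives $A[\overline{S},H]\le 0$, so the monotone Horn descent of $\boldsymbol{x}_H$ can only increase the slack of the $\overline{S}$-rows, meaning every intermediate point $(\boldsymbol{x}_H,\boldsymbol{s}_Q)$ along the descent remains feasible for $I$. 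Third, for the only-if direction (the part you call the ``bulk of the technical work''), any path $P$ in $G(I)$ restricts to a path in $G(I_H)$ (since the $S$-rows see only $H$) and to a path in $G(I_Q)$ once you verify that every vertex $\boldsymbol{u}$ on $P$ satisfies $A[\overline{S},Q]\boldsymbol{u}_Q \ge \boldsymbol{b}_{\overline{S}} - A[\overline{S},H]\boldsymbol{x}^*_H$; this follows from $\boldsymbol{u}_H \ge \boldsymbol{x}^*_H$ (unique minimality within the component) and $A[\overline{S},H]\le 0$. With these three observations the algorithm is a simple two-phase procedure (one Horn pass, then one TVPI pass with $\boldsymbol{x}_H=\boldsymbol{x}^*_H$ fixed), and the interleaving/commutation issue you were worried about never arises. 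So your route is essentially correct, but to close the argument you should replace the alternating loop and the commutation claim with the decoupling facts above.
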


To show the theorem, we first consider two subclasses of ILS(1).
Namely, we show that the reconfiguration problems of Horn and two-variable-per-inequality (TVPI) ILS
are pseudo-polynomially solvable in Subsections~\ref{subsec:Horn} and \ref{subsec:TVPI}, respectively.
Then, using these results, we show Theorem~\ref{thm:Z=1} in Subsection~\ref{subsec:ZIisone_general}.

\subsubsection{Horn integer linear systems}\label{subsec:Horn}
In this subsection, we treat Horn ILS,
i.e., ILS where each row of the input matrix has at most one positive element.
To show that the reconfiguration problem of Horn ILS is pseudo-polynomially solvable,
we use Lemma~\ref{lem:Horn-monotone-path} in Subsection~\ref{subsec:Horn-property}.

\begin{proposition}\label{lem:Horn-pseudoP}
The reconfiguration problem of Horn ILS is pseudo-polynomially solvable.
\end{proposition}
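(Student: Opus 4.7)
The plan is to exploit the two structural facts about Horn systems already established in Subsection~\ref{subsec:Horn-property}: the set of feasible solutions is closed under the componentwise minimum (Lemma~\ref{lem:Horn-min-closed}), so each connected component of $G(I)$ has a unique minimal element, and every vertex of $G(I)$ is joined to this minimal element via a monotone descending path (Lemma~\ref{lem:Horn-monotone-path}). Consequently, two feasible solutions $\boldsymbol{s}$ and $\boldsymbol{t}$ lie in the same component if and only if the unique minimal solutions of their respective components coincide. This reduces the reconfiguration problem to the task of computing, for a given feasible solution $\boldsymbol{s}$, the unique minimal solution $\boldsymbol{s}_{\min}$ in the component of $\boldsymbol{s}$.

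The algorithm I propose for computing $\boldsymbol{s}_{\min}$ is a straightforward greedy descent: starting from the current feasible solution $\boldsymbol{x}$ (initially $\boldsymbol{s}$), scan all pairs $(j,v)$ with $v<x_j$; if for some such pair the vector obtained by setting $x_j := v$ is feasible, update $\boldsymbol{x}$ to that vector and repeat. When no strictly smaller neighbor exists, output $\boldsymbol{x}$. To decide the reconfiguration instance, run the descent on both $\boldsymbol{s}$ and $\boldsymbol{t}$, and answer ``yes'' iff the two resulting vectors are equal.

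Correctness is immediate from Lemma~\ref{lem:Horn-monotone-path}: the monotone $\boldsymbol{x}$-to-$\boldsymbol{s}_{\min}$ path guarantees that whenever $\boldsymbol{x}\neq\boldsymbol{s}_{\min}$, some neighbor with a strictly smaller coordinate is feasible (and automatically lies in the same component, as it is adjacent to $\boldsymbol{x}$), so the descent cannot terminate prematurely; and every intermediate vertex the descent visits remains in the component containing $\boldsymbol{s}$. For the running time, each update strictly decreases the nonnegative integer quantity $\sum_j x_j$, which is at most $dn$ initially, so the descent makes at most $dn$ updates. Each update inspects $O(nd)$ candidate pairs and performs a feasibility check in $O(mn)$ time, giving a total of $O(d^2 m n^3)$ arithmetic operations, which is polynomial in $m$, $n$, and $d$, hence pseudo-polynomial in the input size.

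No step presents a real obstacle here; the only thing to be slightly careful about is that ``neighbor with smaller coordinate'' need not mean ``decrement by one'' (the monotone path may decrease a coordinate by several units in a single step), which is why the inner loop ranges over all values $v<x_j$ rather than only $v=x_j-1$. The heavy lifting — closedness under minimum and the existence of monotone paths — has already been done in Subsection~\ref{subsec:Horn-property}, so the proof of Proposition~\ref{lem:Horn-pseudoP} essentially amounts to packaging those facts into the greedy algorithm above and checking the polynomial-in-$d$ bound.
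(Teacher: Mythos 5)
Your proposal is correct and follows the paper's argument exactly: reduce to computing the unique minimal solution in each component via greedy monotone descent, using Lemma~\ref{lem:Horn-min-closed} and Lemma~\ref{lem:Horn-monotone-path}, and observe that the descent terminates in at most $dn$ steps. Your precaution about scanning all $v<x_j$ is harmless but unnecessary: for a fixed setting of the other coordinates the feasible values of $x_j$ form an interval (each linear inequality restricts $x_j$ to a half-line), so a single-unit decrement always suffices.
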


\begin{proof}
Let $I$ be a Horn ILS and let $\boldsymbol{s}$ and $\boldsymbol{t}$ be feasible solutions of $I$.
Let $\boldsymbol{s}_{\min}$ and $\boldsymbol{t}_{\min}$ be the unique minimal solution in the same component as $\boldsymbol{s}$ and $\boldsymbol{t}$ respectively.
Clearly, $\boldsymbol{s}$ and $\boldsymbol{t}$ are connected if and only if $\boldsymbol{s}_{\min} = \boldsymbol{t}_{\min}$ holds.
From Lemma~\ref{lem:Horn-monotone-path},
$\boldsymbol{s}_{\min}$ (resp., $\boldsymbol{t}_{\min}$) can be obtained by greedily following
a smaller feasible solution from $\boldsymbol{s}$ (resp., $\boldsymbol{t}$).
Note that the length of any monotone path is at most $dn$ since a value of one component decreases in each step.
Therefore, we can obtain $\boldsymbol{s}_{\min}$ (resp., $\boldsymbol{t}_{\min}$) in time polynomial in $n,m$ and $d$, which implies that
the reconfiguration problem of Horn ILS is pseudo-polynomially solvable.
\end{proof}

\begin{proposition}\label{cor:Horn-diameter}
	The diameter of each component of $G(I)$ is ${\rm \Theta}(dn)$ for any Horn ILS $I$.
\end{proposition}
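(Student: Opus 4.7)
The plan is to prove matching upper and lower bounds, $O(dn)$ and ${\rm \Omega}(dn)$, giving ${\rm \Theta}(dn)$. For the upper bound I would invoke Lemma~\ref{lem:Horn-monotone-path} directly: fix any component $G'$ of $G(I)$ with unique minimal solution $\boldsymbol{x}^*$; every vertex $\boldsymbol{s}$ of $G'$ admits a monotone path $\boldsymbol{s} = \boldsymbol{u}^0 \geq \boldsymbol{u}^1 \geq \dots \geq \boldsymbol{u}^\ell = \boldsymbol{x}^*$. Adjacent vertices on this path differ in exactly one coordinate and, by monotonicity, that coordinate strictly decreases; summing these drops gives $\ell \leq \sum_{j=1}^n (s_j - x^*_j) \leq dn$. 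Concatenating the monotone paths from two vertices $\boldsymbol{s}, \boldsymbol{t}$ through $\boldsymbol{x}^*$ yields an $\boldsymbol{s}$-$\boldsymbol{t}$ walk of length at most $2dn$, so the diameter of $G'$ is at most $2dn$.

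For the lower bound I would exhibit a Horn ILS attaining ${\rm \Omega}(dn)$. Consider the system on $n$ variables with the $2(n-1)$ chain inequalities
\begin{equation}
x_i - x_{i+1} \geq -1 \quad \text{and} \quad -x_i + x_{i+1} \geq -1 \qquad (i = 1, \dots, n-1),
\end{equation}
each of which has exactly one positive entry and is therefore Horn; together they enforce $|x_i - x_{i+1}| \leq 1$ along the chain. Both $\boldsymbol{s} = (d, \dots, d)$ and $\boldsymbol{t} = (0, \dots, 0)$ are feasible, and a sweep that repeatedly decrements one coordinate at a time from left to right through successive ``levels'' connects them, so they lie in the same component.

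To bound the $\boldsymbol{s}$-$\boldsymbol{t}$ distance from below I would use the potential $\Phi(\boldsymbol{x}) = \sum_j x_j$, for which $\Phi(\boldsymbol{s}) - \Phi(\boldsymbol{t}) = dn$. If $\boldsymbol{x} \to \boldsymbol{x}'$ is a single step changing only coordinate $j$, then for each chain-neighbor $i \in \{j-1, j+1\} \cap \{1, \dots, n\}$ feasibility yields $|x_j - x_i| \leq 1$ and $|x_j' - x_i| \leq 1$, hence $|x_j - x_j'| \leq 2$ by the triangle inequality. So $\Phi$ changes by at most $2$ per step, which forces the $\boldsymbol{s}$-$\boldsymbol{t}$ distance to be at least $dn/2 = {\rm \Omega}(dn)$. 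The main (and essentially the only nonobvious) step is this per-step bound on $\Phi$; once it is in hand the lower bound follows immediately, and combining with the upper bound gives ${\rm \Theta}(dn)$.
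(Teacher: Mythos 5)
Your proposal is correct and follows essentially the same route as the paper: an $O(dn)$ upper bound via two monotone paths through the unique minimal solution in the component (Lemma~\ref{lem:Horn-monotone-path}), and an $\Omega(dn)$ lower bound via a chain-constrained Horn (in fact monotone quadratic, TVPI) example on which any single-coordinate change moves $\sum_j x_j$ by a bounded amount. The paper's example (Example~\ref{ex:Z=1-diameter-lower-bound}) uses the asymmetric constraints $0 \le x_j - x_{j+1} \le 1$ and asserts a per-step change of at most~$1$, whereas your symmetric $|x_i - x_{i+1}| \le 1$ version plus the explicit potential argument yields a per-step change of at most~$2$; both give $\Omega(dn)$, and your potential argument makes rigorous a step the paper states informally.
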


\begin{proof}
Let $I$ be a Horn ILS.
From Lemma~\ref{lem:Horn-monotone-path},
any two vertices of $G(I)$ in the same component are connected
by two monotone paths via the unique minimal solution in the component.
Since the length of any monotone path is at most $dn$, the diameter of each component of $G(I)$ is at most $2dn = {\rm O}(dn)$.

We now show that the diameter of $G(I)$ can be ${\rm \Omega}(dn)$ even for a monotone quadratic ILS,
where an ILS is \emph{monotone quadratic} if each inequality has at most one positive coefficient and at most one negative coefficient.
Note that a monotone quadratic ILS is a Horn (and TVPI) ILS.

\begin{example}\label{ex:Z=1-diameter-lower-bound}
Consider the following monotone quadratic ILS.

	\begin{equation}\label{eq:diameter_lower_bound}
	\left\{
	\begin{array}{ll}
	x_j - x_{j+1} \geq 0 & (j=1, \dots, n-1)\\
	x_{j+1} - x_j \geq -1 & (j=1, \dots, n-1)\\
	\end{array}
	\right.
	\end{equation}
	The diameter of the solution graph of ILS~\eqref{eq:diameter_lower_bound} is ${\rm \Omega}(dn)$.
	Indeed, consider a path from $(0, 0, \dots, 0)$ to $(d, d, \dots, d)$.
Then we can increase a value of a variable at most one in each step, since any two consecutive variables can differ by at most one.
Therefore,	the length of the path is at least $dn$.
Thus, the diameter of the solution graph of system~\eqref{eq:diameter_lower_bound} is ${\rm \Omega}(dn)$.
\if0
Moreover, we have a path
$(0, 0, \dots, 0) \rightarrow (1, 0, \dots, 0) \rightarrow (1, 1, 0, \dots, 0) \rightarrow \dots \rightarrow (1,1, \dots, 1) \rightarrow (2,1,\dots, 1) \rightarrow (2,2,1\dots, 1) \rightarrow \dots \rightarrow (2,2,\dots, 2) \rightarrow \dots \rightarrow (d, d, \dots, d, d-1) \rightarrow (d, d, \dots, d, d)$, whose length is $dn$.
Therefore, the diameter of the graph is $dn$.
\fi
\end{example}

Combining the upper and lower bounds, 
we obtain that the diameter of each component of $G(I)$ is ${\rm \Theta}(dn)$ for any Horn ILS $I$.
\end{proof}

\subsubsection{Two-variable-per-inequality (TVPI) integer linear systems}\label{subsec:TVPI}
In this subsection, we treat TVPI ILS,
i.e., ILS where each row of the input matrix has at most two nonzero elements.
To show that the reconfiguration problem of TVPI ILS is pseudo-polynomially solvable,
we use Lemma~\ref{lem:TVPI-monotone-path} in Subsection~\ref{subsec:TVPI-property}.

\begin{proposition}\label{cor:TVPI-pseudoP}
	The reconfiguration problem of TVPI ILS is pseudo-polynomially solvable.
\end{proposition}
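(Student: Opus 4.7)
The plan is to mirror the argument from Proposition~\ref{lem:Horn-pseudoP} for Horn ILS, using the $\boldsymbol{t}$-based partial order $\leq_{\boldsymbol{t}}$ (defined coordinate-wise from the orders $\leq_{t_j}$ induced by the semilattice operations $\sqcap_{t_j}$) in place of the usual order, and the median-based operation $M_{\boldsymbol{t}}$ in place of the binary minimum.

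The first observation is that $\boldsymbol{t}$ itself is the unique $\boldsymbol{t}$-minimal solution in whichever connected component contains it. Indeed, for every coordinate $j$ and every $x \in D$, $\sqcap_{t_j}(t_j, x) = \median(t_j, t_j, x) = t_j$, so $t_j \leq_{t_j} x$; hence $\boldsymbol{t} \leq_{\boldsymbol{t}} \boldsymbol{x}$ for every feasible $\boldsymbol{x}$. Combined with Lemma~\ref{lem:TVPI-monotone-path}, which guarantees a unique $\boldsymbol{t}$-minimal solution in each connected component, this gives the clean reachability criterion: $\boldsymbol{t}$ is reachable from $\boldsymbol{s}$ in $G(I)$ if and only if the unique $\boldsymbol{t}$-minimal solution in the connected component containing $\boldsymbol{s}$ equals $\boldsymbol{t}$.

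Next I would compute this $\boldsymbol{t}$-minimal solution $\boldsymbol{x}^*$ by greedy descent. Starting from $\boldsymbol{s}$, at each step I look for a neighbor obtained by changing one coordinate $j$ from its current value $s_j$ to some $v \neq s_j$ with $v \leq_{t_j} s_j$; since $v \leq_{t_j} s_j$ forces $v$ to lie in the closed interval between $t_j$ and $s_j$, there are at most $d$ candidate values per coordinate to test for feasibility. When no strictly smaller neighbor exists, the current vertex is the unique $\boldsymbol{t}$-minimal solution of its component, because otherwise Lemma~\ref{lem:TVPI-monotone-path} would provide a nontrivial $\boldsymbol{t}$-monotone path to that minimum solution, whose first edge would supply such a neighbor. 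The algorithm then returns yes iff $\boldsymbol{x}^* = \boldsymbol{t}$.

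For the running time, each descent step strictly decreases the potential $\sum_j |x_j - t_j|$: moving coordinate $j$ from $s_j$ to $v$ with $v <_{t_j} s_j$ forces $|v - t_j| < |s_j - t_j|$. Since this potential is initially at most $dn$ and nonnegative, the number of iterations is $O(dn)$, and each iteration is polynomial in $n, m, d$, yielding pseudo-polynomial total time. The subtle step — and the place I would take the most care — is the correctness of the greedy descent, namely that every local $\leq_{\boldsymbol{t}}$-minimum in a component coincides with the global $\boldsymbol{t}$-minimum of that component; this is exactly what Lemma~\ref{lem:TVPI-monotone-path} supplies.
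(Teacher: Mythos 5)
Your proposal is correct and follows essentially the same route as the paper: both identify the unique $\boldsymbol{t}$-minimal solution of $\boldsymbol{s}$'s component via greedy $\boldsymbol{t}$-monotone descent (justified by Lemma~\ref{lem:TVPI-monotone-path}) and test whether it equals $\boldsymbol{t}$, bounding the descent length by $dn$. You fill in a few details the paper leaves implicit — that $\boldsymbol{t}$ is indeed the $\boldsymbol{t}$-minimum of its own component, and the explicit potential argument via $\sum_j |x_j - t_j|$ — but there is no substantive difference in approach.
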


\begin{proof}
	The proof goes along the same line as that of Lemma~\ref{lem:Horn-pseudoP}.
	Let $I$ be a TVPI integer linear system and $\boldsymbol{s},\boldsymbol{t}$ be solutions of $I$.
	Let $\boldsymbol{x}^*$ be the unique $\boldsymbol{t}$-minimal solution in the same component as $\boldsymbol{s}$.
	Then $\boldsymbol{s}$ and $\boldsymbol{t}$ are connected if and only if $\boldsymbol{x}^* = \boldsymbol{t}$ holds.
	From Lemma~\ref{lem:TVPI-monotone-path},
	$\boldsymbol{x}^*$ can be obtained by greedily following a smaller feasible solution (in terms of $\boldsymbol{t}$) from $\boldsymbol{s}$.
	Note that the length of any $\boldsymbol{t}$-monotone path is at most $dn$ since a value of one component decreases (in terms of $\boldsymbol{t}$) in each step.
	Therefore, we can obtain $\boldsymbol{x}^*$ in time polynomial in $n,m$ and $d$.
	This implies that the reconfiguration problem is pseudo-polynomially solvable.
\end{proof}

\begin{proposition}\label{cor:TVPI-diameter}
	The diameter of each component of $G(I)$ is ${\rm \Theta}(dn)$ for any TVPI integer linear system $I$.
\end{proposition}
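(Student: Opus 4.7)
The plan is to mirror the structure of the proof of Proposition~\ref{cor:Horn-diameter}, replacing the role of the minimum operation by the median-based partial order $\leq_{\boldsymbol{t}}$ supplied by Lemma~\ref{lem:TVPI-monotone-path}. Concretely, for the upper bound, fix any two feasible solutions $\boldsymbol{s}$ and $\boldsymbol{t}$ lying in a common connected component of $G(I)$. The first step is to observe that $\boldsymbol{t}$ is itself the unique $\boldsymbol{t}$-minimal solution of the component containing it: for every coordinate $j$ we have $\sqcap_{t_j}(t_j, x_j) = \median(t_j, t_j, x_j) = t_j$, so $t_j \leq_{t_j} x_j$ for all $x_j \in D$, hence $\boldsymbol{t} \leq_{\boldsymbol{t}} \boldsymbol{x}$ for every feasible $\boldsymbol{x}$. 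Applying Lemma~\ref{lem:TVPI-monotone-path} therefore yields a $\boldsymbol{t}$-monotone $\boldsymbol{s}$-$\boldsymbol{t}$ path in $G(I)$.

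The second step is to bound the length of such a path. Along a $\boldsymbol{t}$-monotone path, each single-coordinate update strictly decreases some $x_j$ in the poset $(D, \leq_{t_j})$. Since $|D| = d+1$, any strictly decreasing chain in $(D, \leq_{t_j})$ has at most $d+1$ elements, so coordinate $j$ can change at most $d$ times. Summing over the $n$ coordinates, the path has length at most $dn$. This gives $\mathrm{dist}(\boldsymbol{s}, \boldsymbol{t}) \leq dn$ in $G(I)$, hence the diameter of each component is ${\rm O}(dn)$.

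For the matching lower bound, I would reuse Example~\ref{ex:Z=1-diameter-lower-bound} verbatim: the system~\eqref{eq:diameter_lower_bound} is monotone quadratic, hence in particular TVPI, and it was already shown there that any $(0,\dots,0)$-to-$(d,\dots,d)$ path has length at least $dn$ because no two consecutive variables may differ by more than one. Combining the two bounds yields the claimed $\Theta(dn)$ diameter.

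There is essentially no technical obstacle beyond correctly instantiating the $\boldsymbol{t}$-monotone machinery; the only place that requires a little care is the verification that $\boldsymbol{t}$ is the bottom element of the relation $\leq_{\boldsymbol{t}}$, so that Lemma~\ref{lem:TVPI-monotone-path} produces a path ending exactly at $\boldsymbol{t}$ (rather than at some other $\boldsymbol{t}$-minimal element), and the observation that a strictly $\leq_{t_j}$-decreasing sequence in the $(d+1)$-element poset has length at most $d+1$. Both points are routine once the median-based order is in hand.
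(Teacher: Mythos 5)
Your proposal is correct and follows essentially the same route as the paper: the upper bound via $\boldsymbol{t}$-monotone paths of length at most $dn$ from Lemma~\ref{lem:TVPI-monotone-path} (the paper cites this bound already established in the proof of Proposition~\ref{cor:TVPI-pseudoP}, whereas you re-derive it and additionally spell out why $\boldsymbol{t}$ is the bottom element of $\leq_{\boldsymbol{t}}$, a step the paper leaves implicit), and the lower bound by reusing Example~\ref{ex:Z=1-diameter-lower-bound}.
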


\begin{proof}
Since any two vertices of $G(I)$ in the same component are connected by a path of length at most $dn$,
the diameter of each component of $G(I)$ is ${\rm O}(dn)$.

Moreover, since ILS~\eqref{eq:diameter_lower_bound} is a TVPI system, 
the diameter of each component of $G(I)$ can be ${\rm \Omega}(dn)$.

Combining the upper and lower bounds, 
we obtain that the diameter of each component of $G(I)$ is ${\rm \Theta}(dn)$ for any TVPI ILS $I$.
\end{proof}

\subsubsection{General case of $Z(I) = 1$}\label{subsec:ZIisone_general}
We now show Theorem~\ref{thm:Z=1}, that is, the reconfiguration problem of ILS(1) is pseudo-polynomially solvable.
We describe our algorithm to solve the reconfiguration problem of ILS(1) in Algorithm~\ref{alg:Z=1}; see Subsection~\ref{subsec:ZIisone_general-property} for notation.
The algorithm first solves the reconfiguration problem on the Horn system $A[S,H]\boldsymbol{x}_H\geq \boldsymbol{b}_S$
and then solves the reconfiguration problem on a certain TVPI system.


\begin{algorithm}
	\caption{Solving the reconfiguration problem of $I=(A,\boldsymbol{b},d)$ with $Z(I)=1$}
	\label{alg:Z=1}
	\begin{algorithmic}[1]
		\STATE compute a $QH$-partition of $\{ 1, \dots, n \}$ and let $I_H = (A[S,H], \boldsymbol{b}_{S},d)$
		\IF{$\boldsymbol{s}_H$ and $\boldsymbol{t}_H$ are \emph{not} connected in $G(I_H)$}
		\STATE output ``NO'' and halt
		\ELSE
		\STATE compute the unique minimal solution $\boldsymbol{x}^*_H$ in the same component as
		\STATE $\boldsymbol{s}_H$ and $\boldsymbol{t}_H$ in $G(I_H)$ and let $I_Q=(A[\overline{S},Q], \boldsymbol{b}_{\overline{S}}-A[\overline{S},H]\boldsymbol{x}_H^*,d)$.
		\ENDIF
		\IF{ $\boldsymbol{s}_Q$ and $\boldsymbol{t}_Q$ are connected in $G(I_Q)$}
		\STATE output ``YES'' and halt
		\ELSE
		\STATE output ``NO'' and halt
		\ENDIF
	\end{algorithmic}
\end{algorithm}

\begin{lemma}
	\label{lem:Z=1_analysis}
	Algorithm~\ref{alg:Z=1} solves the reconfiguration problem of ILS(1) in time polynomial in $n$, $m$ and $d$.
\end{lemma}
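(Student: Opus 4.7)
The plan is to establish correctness of Algorithm~\ref{alg:Z=1} by showing that the two connectivity tests it performs---one on $G(I_H)$ and one on $G(I_Q)$---are jointly necessary and sufficient for $\boldsymbol{t}$ to be reachable from $\boldsymbol{s}$ in $G(I)$, and then to bound the running time by combining Propositions~\ref{lem:Horn-pseudoP} and~\ref{cor:TVPI-pseudoP} with the linear-time computation of a $QH$-partition from~\cite{KiM16}. Throughout, I will use heavily the fact that condition (c) of the $QH$-partition forces $A[\overline{S},H] \le 0$, together with $A[S,Q] = 0$ by definition of $S$.

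For sufficiency, I will build an explicit $\boldsymbol{s}$-$\boldsymbol{t}$ path in $G(I)$ concatenating three segments through $(\boldsymbol{x}^*_H, \boldsymbol{s}_Q)$ and $(\boldsymbol{x}^*_H, \boldsymbol{t}_Q)$. The first segment descends from $\boldsymbol{s}_H$ to $\boldsymbol{x}^*_H$ along the Horn monotone path guaranteed by Lemma~\ref{lem:Horn-monotone-path} while keeping $\boldsymbol{x}_Q = \boldsymbol{s}_Q$; since decreasing $\boldsymbol{x}_H$ never shrinks $A[\overline{S},H]\boldsymbol{x}_H + A[\overline{S},Q]\boldsymbol{s}_Q$ (because $A[\overline{S},H] \le 0$) and the $S$-rows are satisfied by the Horn path itself, every intermediate vertex is feasible for $I$. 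The middle segment freezes $\boldsymbol{x}_H = \boldsymbol{x}^*_H$ and traverses an $\boldsymbol{s}_Q$-$\boldsymbol{t}_Q$ path in $G(I_Q)$; feasibility in $I$ holds since the $S$-rows involve only $H$-variables and are satisfied by $\boldsymbol{x}^*_H$, while the $\overline{S}$-rows are exactly $I_Q$ shifted by $A[\overline{S},H]\boldsymbol{x}^*_H$. The third segment mirrors the first with $\boldsymbol{t}$ in place of $\boldsymbol{s}$, traversing the Horn monotone path from $\boldsymbol{t}_H$ to $\boldsymbol{x}^*_H$ in reverse while holding $\boldsymbol{x}_Q = \boldsymbol{t}_Q$, with feasibility again from $A[\overline{S},H] \le 0$.

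For necessity, I will project any $\boldsymbol{s}$-$\boldsymbol{t}$ path in $G(I)$ onto its $H$- and $Q$-coordinates. The $H$-projection is a walk in $G(I_H)$ because $A[S,H]\boldsymbol{x}_H \ge \boldsymbol{b}_S$ involves only $H$-variables, so $\boldsymbol{s}_H$ and $\boldsymbol{t}_H$ lie in the same component of $G(I_H)$. The $Q$-projection is the more subtle step and is the main obstacle I expect: for each intermediate solution $(\boldsymbol{x}_H, \boldsymbol{x}_Q)$ on the path, its $H$-projection lies in the Horn component of $\boldsymbol{s}_H$, so by Lemma~\ref{lem:Horn-min-closed} it satisfies $\boldsymbol{x}_H \ge \boldsymbol{x}^*_H$; combined with $A[\overline{S},H] \le 0$, this yields $A[\overline{S},Q]\boldsymbol{x}_Q \ge \boldsymbol{b}_{\overline{S}} - A[\overline{S},H]\boldsymbol{x}_H \ge \boldsymbol{b}_{\overline{S}} - A[\overline{S},H]\boldsymbol{x}^*_H$, so $\boldsymbol{x}_Q$ is feasible for $I_Q$, and the $Q$-projection is a walk in $G(I_Q)$ connecting $\boldsymbol{s}_Q$ and $\boldsymbol{t}_Q$.

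Finally, the running time follows by summing the cost of computing a $QH$-partition (linear time by~\cite{KiM16}), applying Proposition~\ref{lem:Horn-pseudoP} both to test connectivity of $\boldsymbol{s}_H, \boldsymbol{t}_H$ in $G(I_H)$ and to extract $\boldsymbol{x}^*_H$, and applying Proposition~\ref{cor:TVPI-pseudoP} to decide connectivity of $\boldsymbol{s}_Q, \boldsymbol{t}_Q$ in $G(I_Q)$; each of these steps runs in time polynomial in $n$, $m$, and $d$, so the overall algorithm is pseudo-polynomial.
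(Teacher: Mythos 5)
Your proof is correct and follows essentially the same two-directional argument as the paper: sufficiency by concatenating a monotone Horn descent on $H$, a $Q$-path in $I_Q$ at $\boldsymbol{x}^*_H$, and a reversed monotone Horn descent, with feasibility from $A[\overline{S},H] \le 0$; necessity by projecting an $\boldsymbol{s}$-$\boldsymbol{t}$ path onto $H$ and $Q$ and using $\boldsymbol{u}_H \ge \boldsymbol{x}^*_H$. The only cosmetic difference is that the paper takes a $\boldsymbol{t}_Q$-monotone path in the middle segment (not actually needed for feasibility), and your explicit running-time paragraph is a welcome addition since the paper's proof leaves that to the cited propositions.
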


\begin{proof}
We show that $\boldsymbol{s}$ and $\boldsymbol{t}$ are connected in $G(I)$ if and only if Algorithm~\ref{alg:Z=1} outputs ``YES''.

We first prove the if direction.
Assume that Algorithm~\ref{alg:Z=1} outputs ``YES''.
Then we can construct an $\boldsymbol{s}$-$\boldsymbol{t}$ path
$\boldsymbol{s} = (\boldsymbol{s}_H,\boldsymbol{s}_Q) \rightarrow \dots \rightarrow (\boldsymbol{x}^*_H,\boldsymbol{s}_Q) \rightarrow \dots \rightarrow (\boldsymbol{x}^*_H,\boldsymbol{t}_Q) \rightarrow \dots \rightarrow (\boldsymbol{t}_H,\boldsymbol{t}_Q) = \boldsymbol{t}$ in $G(I)$ using monotone paths from $\boldsymbol{s}_H$ to $\boldsymbol{x}^*_H$ and from $\boldsymbol{x}^*_H$ to $\boldsymbol{t}_H$, and a $\boldsymbol{t}_Q$-monotone path from $\boldsymbol{s}_Q$ to $\boldsymbol{t}_Q$, which exist since the algorithm outputs ``YES'' and from Lemmas~\ref{lem:Horn-monotone-path} and \ref{lem:TVPI-monotone-path}.
Since we use monotone paths, vectors from $(\boldsymbol{s}_H,\boldsymbol{s}_Q) \rightarrow \dots \rightarrow (\boldsymbol{x}^*_H,\boldsymbol{s}_Q)$ and $(\boldsymbol{x}^*_H,\boldsymbol{t}_Q) \rightarrow \dots \rightarrow (\boldsymbol{t}_H,\boldsymbol{t}_Q)$ are all solutions of $I$, since the elements of $A[\overline{S},H]$ are nonpositive.
Indeed, for $(\boldsymbol{x}_H,\boldsymbol{s}_Q)$ in $(\boldsymbol{s}_H,\boldsymbol{s}_Q) \rightarrow \dots \rightarrow (\boldsymbol{x}^*_H,\boldsymbol{s}_Q)$, we have $\boldsymbol{x}_H \leq \boldsymbol{s}_H$ by monotonicity and thus $A[\overline{S},H]\boldsymbol{x}_H \geq  A[\overline{S},H]\boldsymbol{s}_H$ holds since $A[\overline{S},H]$ is a nonpositive matrix.
Therefore, we have
\begin{equation}
A[\overline{S},H]\boldsymbol{x}_H + A[\overline{S},Q]\boldsymbol{s}_Q \geq A[\overline{S},H]\boldsymbol{s}_H + A[\overline{S},Q]\boldsymbol{s}_Q  \geq \boldsymbol{b}_{\overline{S}},
\end{equation}
where the second inequality holds since $\boldsymbol{s}$ is a solution to $I$.
Since $\boldsymbol{x}_H$ is a solution to $I_H$, this implies that $(\boldsymbol{x}_H,\boldsymbol{s}_Q)$ is a solution to $I$.
Similarly, any vector in $(\boldsymbol{x}^*_H,\boldsymbol{t}_Q) \rightarrow \dots \rightarrow (\boldsymbol{t}_H,\boldsymbol{t}_Q)$ is a solution to $I$.
Moreover, vectors in $(\boldsymbol{x}^*_H,\boldsymbol{s}_Q) \rightarrow \dots \rightarrow (\boldsymbol{x}^*_H,\boldsymbol{t}_Q)$ are solutions of $I$ by the definition of $I_Q$ in the algorithm.
Therefore, we obtain an $\boldsymbol{s}$-$\boldsymbol{t}$ path in $G(I)$,
implying that $\boldsymbol{s}$ and $\boldsymbol{t}$ are connected in $G(I)$.

We next prove the only-if direction.
We assume that $\boldsymbol{s}$ and $\boldsymbol{t}$ are connected in $G(I)$.
Let $P: \boldsymbol{s} = \boldsymbol{s}^0 \rightarrow \boldsymbol{s}^1
\rightarrow \dots \rightarrow \boldsymbol{s}^\ell = \boldsymbol{t}$ be an $\boldsymbol{s}$-$\boldsymbol{t}$ path in $G(I)$.
Then clearly the restriction of $P$ to variable indices $H$ is an $\boldsymbol{s}_H$-$\boldsymbol{t}_H$ path in $G(I_H)$.
Moreover, the restriction of $P$ to variable indices $Q$ is an $\boldsymbol{s}_Q$-$\boldsymbol{t}_Q$ path in $G(I_Q)$.
This is because
for any vector $\boldsymbol{u}$ in $P$
we have
$A[\overline{S},Q]\boldsymbol{u}_Q \ge \boldsymbol{b}_{\overline{S}}-A[\overline{S},H]\boldsymbol{u}_H \ge \boldsymbol{b}_{\overline{S}}-A[\overline{S},H]\boldsymbol{x}^*_H$,
since $\boldsymbol{u}_H \ge \boldsymbol{x}^*_H$ holds by unique minimality of $\boldsymbol{x}^*_H$ and $A[\overline{S},H]$ is a nonpositive matrix.
Therefore, Algorithm~\ref{alg:Z=1} outputs ``YES''.
This completes the proof.
\end{proof}

\begin{proof}[Proof of Theorem~\ref{thm:Z=1}.]
This immediately follows from Lemma~\ref{lem:Z=1_analysis}.
\end{proof}

\begin{corollary}\label{cor:Z=1-diameter}
	The diameter of each component of $G(I)$ is ${\rm O}(dn)$ for instance $I$ of ILS(1).
\end{corollary}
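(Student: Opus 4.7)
The plan is to extract a concrete length bound from the $\boldsymbol{s}$-$\boldsymbol{t}$ path constructed in the proof of Lemma~\ref{lem:Z=1_analysis}. Fix an instance $I$ of ILS(1) and two feasible solutions $\boldsymbol{s},\boldsymbol{t}$ in the same connected component of $G(I)$. Compute a $QH$-partition and form the Horn subsystem $I_H = (A[S,H],\boldsymbol{b}_S,d)$ and the TVPI subsystem $I_Q = (A[\overline{S},Q],\boldsymbol{b}_{\overline{S}}-A[\overline{S},H]\boldsymbol{x}^*_H,d)$, where $\boldsymbol{x}^*_H$ is the unique minimal solution of the component of $I_H$ containing $\boldsymbol{s}_H$ (and $\boldsymbol{t}_H$, since they must be connected in $G(I_H)$). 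The plan is to route from $\boldsymbol{s}$ to $\boldsymbol{t}$ through the three intermediate vertices $(\boldsymbol{x}^*_H,\boldsymbol{s}_Q)$, $(\boldsymbol{x}^*_H,\boldsymbol{t}_Q)$, and $\boldsymbol{t}=(\boldsymbol{t}_H,\boldsymbol{t}_Q)$, as in the if-direction argument of Lemma~\ref{lem:Z=1_analysis}.

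The three segments are bounded as follows. For the segment $(\boldsymbol{s}_H,\boldsymbol{s}_Q)\to(\boldsymbol{x}^*_H,\boldsymbol{s}_Q)$, use the monotone $\boldsymbol{s}_H$-$\boldsymbol{x}^*_H$ path provided by Lemma~\ref{lem:Horn-monotone-path}; along such a path, at each step exactly one coordinate strictly decreases inside $D=\{0,1,\dots,d\}$, so the total length is at most $d|H|\le dn$. Symmetrically, the segment $(\boldsymbol{x}^*_H,\boldsymbol{t}_Q)\to(\boldsymbol{t}_H,\boldsymbol{t}_Q)$ has length at most $dn$. For the middle segment $(\boldsymbol{x}^*_H,\boldsymbol{s}_Q)\to(\boldsymbol{x}^*_H,\boldsymbol{t}_Q)$, apply Lemma~\ref{lem:TVPI-monotone-path} inside $G(I_Q)$ with respect to $\boldsymbol{t}_Q$: concatenating the $\boldsymbol{t}_Q$-monotone path from $\boldsymbol{s}_Q$ down to the unique $\boldsymbol{t}_Q$-minimal solution in its component (which must equal $\boldsymbol{t}_Q$ since the algorithm returned ``YES'') gives a path of length at most $d|Q|\le dn$, because each step strictly decreases one coordinate in the partial order $\le_{t_j}$ on the finite set $D$.

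Summing the three bounds yields an $\boldsymbol{s}$-$\boldsymbol{t}$ walk in $G(I)$ of length at most $3dn = {\rm O}(dn)$; the verification that every intermediate vector is feasible for $I$ is exactly the argument in the if-direction of Lemma~\ref{lem:Z=1_analysis}, relying on the nonpositivity of $A[\overline{S},H]$ to see that decreasing $\boldsymbol{x}_H$ can only relax the $\overline{S}$-inequalities. Since this bound holds for any two connected vertices, every component of $G(I)$ has diameter ${\rm O}(dn)$.

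There is no real obstacle here: once Algorithm~\ref{alg:Z=1} and Lemmas~\ref{lem:Horn-monotone-path}, \ref{lem:TVPI-monotone-path}, and \ref{lem:Z=1_analysis} are in place, the only thing to check is that the monotone segments used in that construction each have length at most $dn$, which follows immediately from the fact that each step of a monotone (respectively, $\boldsymbol{t}_Q$-monotone) path strictly decreases one coordinate in $D$, and $|D|=d+1$.
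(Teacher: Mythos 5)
Your proof is correct and follows essentially the same route as the paper: reuse the $\boldsymbol{s}$-$\boldsymbol{t}$ path constructed in the if-direction of Lemma~\ref{lem:Z=1_analysis} and bound the three monotone segments. The only cosmetic difference is that you invoke Lemmas~\ref{lem:Horn-monotone-path} and \ref{lem:TVPI-monotone-path} directly to bound each segment by $dn$, whereas the paper cites the intermediate Propositions~\ref{cor:Horn-diameter} and \ref{cor:TVPI-diameter}, which are themselves established by exactly that argument.
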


\begin{proof}
	Let $\boldsymbol{s}$ and $\boldsymbol{t}$ be vertices in the same component in $G(I)$.
	Consider the $\boldsymbol{s}$-$\boldsymbol{t}$ path
	$\boldsymbol{s} = (\boldsymbol{s}_H,\boldsymbol{s}_Q) \rightarrow \dots \rightarrow (\boldsymbol{x}^*_H,\boldsymbol{s}_Q) \rightarrow \dots \rightarrow (\boldsymbol{x}^*_H,\boldsymbol{t}_Q) \rightarrow \dots \rightarrow (\boldsymbol{t}_H,\boldsymbol{t}_Q) = \boldsymbol{t}$
	constructed in the proof of Lemma~\ref{lem:Z=1_analysis}.
	Then, this is a path of length ${\rm O}(dn)$ from Propositions~\ref{cor:Horn-diameter} and \ref{cor:TVPI-diameter}.
	Therefore, the diameter of each component of $G(I)$ is ${\rm O}(dn)$.
\end{proof}

\begin{theorem}\label{thm:Z=1-diameter-upper-and-lower-bound}
	The diameter of each component of $G(I)$ is ${\rm \Theta}(dn)$ for instance $I$ of ILS(1).
\end{theorem}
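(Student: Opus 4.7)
The plan is to assemble the theorem from two pieces that are already essentially in hand: an upper bound of $O(dn)$ and a matching lower bound of $\Omega(dn)$, valid for some instance in ILS(1).

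For the upper bound, I would simply invoke Corollary~\ref{cor:Z=1-diameter}, which already established that the diameter of each component of $G(I)$ is $O(dn)$ for any $I \in \aP(1)$. That corollary was proven by concatenating the $\boldsymbol{s}_H$-to-$\boldsymbol{x}^*_H$ monotone path in $G(I_H)$ (of length $O(dn)$ by Proposition~\ref{cor:Horn-diameter}), a $\boldsymbol{t}_Q$-monotone path from $\boldsymbol{s}_Q$ to $\boldsymbol{t}_Q$ in $G(I_Q)$ (of length $O(dn)$ by Proposition~\ref{cor:TVPI-diameter}), and a final monotone path back up to $\boldsymbol{t}_H$, each of which contributes at most $O(dn)$ value changes.

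For the lower bound, I would reuse the monotone quadratic system from Example~\ref{ex:Z=1-diameter-lower-bound}. Since every monotone quadratic ILS is simultaneously Horn and TVPI, it lies in $\aP(1)$ (as noted in the introduction, any Horn ILS satisfies LP~\eqref{LP} with $Z=1$). The example exhibits the instance
\begin{equation*}
x_j - x_{j+1} \geq 0, \qquad x_{j+1} - x_j \geq -1 \qquad (j=1,\dots,n-1),
\end{equation*}
whose solution graph contains $(0,\dots,0)$ and $(d,\dots,d)$; any path between them must change each coordinate by at least $d$ in total (by one at a time), forcing the path length to be at least $dn$. Hence the diameter is $\Omega(dn)$, and this instance belongs to $\aP(1)$.

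Combining the $O(dn)$ upper bound with the $\Omega(dn)$ lower bound witnessed by the above instance yields the claimed $\Theta(dn)$ bound. The main conceptual point to be careful about is that the lower bound only requires the existence of some instance in $\aP(1)$ achieving $\Omega(dn)$ diameter, not that every instance does so; the upper bound is the universal statement. I do not anticipate any real obstacle, since both ingredients have already been developed in earlier propositions, and the theorem is essentially a summary statement tying them together.
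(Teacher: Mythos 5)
Your proof is correct and takes essentially the same approach as the paper: the paper's proof of this theorem likewise just cites Corollary~\ref{cor:Z=1-diameter} for the $O(dn)$ upper bound and Example~\ref{ex:Z=1-diameter-lower-bound} for the $\Omega(dn)$ lower bound. Your additional remarks on why the two bounds combine (universal upper bound, existential lower bound) are accurate but not needed beyond what those cited results already provide.
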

\begin{proof}
This follows from corollary~\ref{cor:Z=1-diameter} and Example~\ref{ex:Z=1-diameter-lower-bound}.
\end{proof}

\subsection{Weak coNP-completeness}\label{subsec:coNP-completeness}

In this subsection, we show the following theorem.
\begin{theorem}\label{thm:Z=1,WcoNP}
The reconfiguration problem of $\aP(1)$ is weakly coNP-complete.
\end{theorem}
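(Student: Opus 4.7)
The proof has two parts: membership in coNP and coNP-hardness, which I would address in turn.

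For coNP membership, the plan is to exhibit a polynomial-size ``no'' certificate that can be verified in polynomial time. Using the $QH$-decomposition of Subsection~\ref{subsec:ZIisone_general-property} and Algorithm~\ref{alg:Z=1}, non-reachability of $\boldsymbol{s}$ and $\boldsymbol{t}$ in $G(I)$ is equivalent (by Lemma~\ref{lem:Z=1_analysis}) to non-reachability in one of the two subsystems: either $\boldsymbol{s}_H$ and $\boldsymbol{t}_H$ have distinct unique minima in $G(I_H)$, or $\boldsymbol{s}_Q$ and $\boldsymbol{t}_Q$ have distinct unique $\boldsymbol{t}_Q$-minima in $G(I_Q)$. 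In either case the witness consists of the two distinct minima of the relevant subsystem; each vector has bit-size $O(n\log d)$ and is therefore polynomially bounded in the input. For polynomial-time verification of a claimed component minimum $\boldsymbol{x}^*$, I would exploit the algebraic closure properties from Subsection~\ref{subsec:basic-propeties}, namely closure under the min operation (Lemma~\ref{lem:Horn-min-closed}) for the Horn case and under the $\boldsymbol{t}$-median operation (Proposition~\ref{prop:TVPI-median-closed}) for the TVPI case. These imply that the unique minimum of a component can be characterized as a feasible local minimum, i.e., a feasible vector admitting no feasible single-step decrease in the appropriate partial order, together with the dominance condition $\boldsymbol{x}^* \leq \boldsymbol{s}$ (respectively $\boldsymbol{x}^* \leq_{\boldsymbol{t}_Q} \boldsymbol{s}_Q$); combined with the monotone-path Lemmas~\ref{lem:Horn-monotone-path} and \ref{lem:TVPI-monotone-path}, this yields a polynomial-time verifier.

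For coNP-hardness, I would adapt the reduction of Lagarias~\cite{Lag85}, which establishes weak NP-hardness of the feasibility problem of monotone quadratic ILSes by reducing from \textsc{Partition}. Given a \textsc{Partition} instance, I plan to build a monotone quadratic ILS $I$ (which lies in ILS(1)) together with two explicit feasible solutions $\boldsymbol{s}$ and $\boldsymbol{t}$ such that $\boldsymbol{s}$ and $\boldsymbol{t}$ are \emph{not} reachable in $G(I)$ if and only if the \textsc{Partition} instance is a yes-instance. Intuitively, a valid partition would encode an intermediate configuration that cleaves the solution graph into two connected components, each containing exactly one of $\boldsymbol{s}$ or $\boldsymbol{t}$. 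Since \textsc{Partition} is weakly NP-complete, this gives weak coNP-hardness of the reconfiguration problem.

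The main obstacle lies in the coNP membership part. Because the solution graph can have diameter $\Theta(dn)$ (Theorem~\ref{thm:Z=1-diameter-upper-and-lower-bound}), writing down an explicit monotone $\boldsymbol{s}$-$\boldsymbol{x}^*$ path as a witness of reachability is not admissible as a polynomial-size certificate. The argument must therefore bypass the path entirely and rely on the algebraic structure of components to certify reachability implicitly through the local-minimality and dominance conditions sketched above. For the coNP-hardness part, the delicate step is tailoring Lagarias's construction so that a partition corresponds to \emph{disconnectedness} rather than feasibility, while keeping the instance within the monotone quadratic class.
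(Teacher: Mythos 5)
Your overall two-part plan (coNP membership via a polynomial-size certificate built from the $QH$-decomposition and the semilattice order, coNP-hardness via Lagarias) matches the paper's structure, but both halves contain gaps that would need repair.

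For coNP membership, the claimed characterization is wrong. You assert that local minimality (no feasible single-step decrease in the $QH$-order) together with dominance $\boldsymbol{x}^* \leq_{QH} \boldsymbol{s}$ pins down \emph{the} unique minimum of $\boldsymbol{s}$'s component. It does not: a min-closed relation can have several locally minimal points dominated by $\boldsymbol{s}$, only one of which lies in $\boldsymbol{s}$'s component (take $R = \{(0,0),(1,1),(2,2)\}$ with $d=2$; all three are isolated local minima below $(2,2)$). Hence a verifier that only checks ``local minimum plus dominance'' can be handed two vectors that pass the check but are not the true component minima, so exhibiting \emph{two distinct such vectors} does not certify disconnectedness. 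The paper sidesteps this entirely: the certificate is a single vector $\boldsymbol{w}$ (Conditions $C_0$--$C_3$), and the verifier does \emph{not} try to confirm that $\boldsymbol{w}$ is the component minimum. Instead, Claim~\ref{cl:certificate} shows that \emph{any} feasible locally $QH$-minimal $\boldsymbol{w}$ with $\boldsymbol{w} \leq_{QH} \boldsymbol{s}$ satisfies $\boldsymbol{w} \leq_{QH} \boldsymbol{s}_{\min}$ (by sliding a $QH$-monotone path through the $\min_{QH}$-closure), while $C_3$ ($w_i >_{QH} t_i$ for some $i$) forbids $\boldsymbol{w} \leq_{QH} \boldsymbol{t}_{\min}$. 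This forces $\boldsymbol{s}_{\min} \neq \boldsymbol{t}_{\min}$ without ever identifying either minimum, and Claim~\ref{cl:exists-certificate} gives completeness. You need this one-sided, inequality-based argument rather than a two-sided characterization.

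For coNP-hardness, two issues. First, the source problem is \emph{weak partition} (seek a nonzero $\boldsymbol{x} \in \{-1,0,1\}^\ell$ with $\sum a_i x_i = 0$), not the standard \textsc{Partition}, and the paper in fact needs an intermediate reduction to a ``bounded weak partition'' variant to force the crucial properties (i) $\boldsymbol{y}\in\{-1,0,1\}^n$ and (ii) $\sum_i y_i = 0$ for any solution, which are what make the cut inequality behave correctly. Second, and more substantively, your claim that the target instance is a \emph{monotone quadratic} ILS is not achievable by the paper's construction and is not obviously achievable at all. Lagarias's feasibility instance is TVPI, but the reconfiguration instance the paper builds (system~\eqref{eq:Z=1}) adds one ``cutting'' inequality involving all $n+2$ variables; the resulting system is dual Horn with complexity index exactly one, but it is emphatically not TVPI. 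That cut inequality is the engine of the reduction: without it, $\boldsymbol{s}$ and $\boldsymbol{t}$ are always connected; the partition solution (if it exists) forces every $\boldsymbol{s}$-$\boldsymbol{t}$ path through a unique ``bottleneck'' point $\boldsymbol{u}$, and the cut is calibrated so that $\boldsymbol{u}$ violates it exactly when the partition solution exists. If you wish to stay inside monotone quadratic (Horn $\cap$ TVPI) you would have to invent a genuinely different gadget, which the paper does not do and which your sketch does not supply.
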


We show this by showing that the reconfiguration problem of $\aP(1)$ is in coNP and weakly coNP-hard.

\begin{proposition}
The reconfiguration problem of $\aP(1)$ is in coNP.
\end{proposition}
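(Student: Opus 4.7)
The plan is to give, for every non-reachable instance $(I,\boldsymbol{s},\boldsymbol{t})$ with $I\in\aP(1)$, a polynomial-size witness together with a polynomial-time verifier. By the $QH$-decomposition of Subsection~\ref{subsec:ZIisone_general-property} and the correctness of Algorithm~\ref{alg:Z=1} (Lemma~\ref{lem:Z=1_analysis}), non-reachability of $\boldsymbol{s}$ and $\boldsymbol{t}$ in $G(I)$ is equivalent to at least one of the following two situations: (a)~the unique component-minima of $\boldsymbol{s}_H$ and $\boldsymbol{t}_H$ in the Horn subsystem $G(I_H)$ are different, or (b)~they coincide on a common value $\boldsymbol{x}_H^*$ but, in the resulting TVPI subsystem $G(I_Q)$, the unique $\boldsymbol{t}_Q$-minimum $\boldsymbol{w}$ of $\boldsymbol{s}_Q$'s component satisfies $\boldsymbol{w}\neq \boldsymbol{t}_Q$.

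The witness I would provide consists of the relevant extremal vectors---$\boldsymbol{v}_s,\boldsymbol{v}_t$ in case~(a) and $\boldsymbol{x}_H^*,\boldsymbol{w}$ in case~(b)---each a vector in $D^n$ of polynomial size in the binary encoding. By Lemmas~\ref{lem:Horn-monotone-path} and~\ref{lem:TVPI-monotone-path}, being the component-minimum in $G(I_H)$ (resp.\ the $\boldsymbol{t}_Q$-component-minimum in $G(I_Q)$) is characterized by feasibility plus local minimality (resp.\ $\boldsymbol{t}_Q$-local minimality), both polynomial-time checkable using $O(n)$ feasibility tests on single-coordinate perturbations. This reduces non-reachability certification to exhibiting two locally minimal vectors that are in the components of $\boldsymbol{s}_H$ and $\boldsymbol{t}_H$ (resp.\ $\boldsymbol{s}_Q$), and showing that they differ (resp.\ from $\boldsymbol{t}_Q$).

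The main obstacle is certifying, in polynomial time, that each witness vector lies in the same connected component as $\boldsymbol{s}_H$ (resp.\ $\boldsymbol{s}_Q$): a monotone descent path from $\boldsymbol{s}_H$ to $\boldsymbol{v}_s$ guaranteed by Lemma~\ref{lem:Horn-monotone-path} can have length $O(dn)$, which is pseudo-polynomial rather than polynomial in the binary input size, and so cannot be written out directly. I would resolve this by augmenting the witness with a compact schedule of the descent---a polynomially-sized sequence of phases, where each phase decrements a chosen coordinate from its current value to a specified target in one shot---together with a closed-form certificate that every intermediate vector of each phase is feasible (in the Horn case this reduces to comparing the target against the minimum of the few explicit linear lower bounds imposed on the decremented coordinate by constraints in which it has a positive coefficient, and analogously for the $\boldsymbol{t}_Q$-descent in TVPI). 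Proving that such a polynomial-size schedule always exists for every $\aP(1)$-instance, by exploiting the Horn (resp.\ TVPI) structure together with the monotone-path constructions of Lemmas~\ref{lem:Horn-monotone-path} and~\ref{lem:TVPI-monotone-path}, is the technical heart of the argument.
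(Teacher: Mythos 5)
Your proposal correctly identifies the relevant objects (the $QH$-decomposition, the component-minimal solutions in the Horn and TVPI subsystems, the monotone-path lemmas), and you correctly spot the central obstacle: the descent path from $\boldsymbol{s}$ to a candidate witness vector has length $O(dn)$, which is pseudo-polynomial, so you cannot certify ``$\boldsymbol{w}$ lies in $\boldsymbol{s}$'s component'' by writing out the path. However, your proposed fix---a ``compact schedule'' of phase-wise decrements with a closed-form feasibility certificate per phase---is left unproved, and this is exactly the part you acknowledge as ``the technical heart.'' As stated, this is a genuine gap, not a completed proof.

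The paper's proof avoids the obstacle entirely, by a structural observation rather than by compressing the path. It introduces a $QH$-order $\ge_{QH}$ (componentwise $\ge$ on $H$, $\ge_{t_i}$ on $Q$), reduces WLOG to $\boldsymbol{s} \ge_{QH} \boldsymbol{t}$, and takes as certificate any $\boldsymbol{w}$ satisfying four \emph{locally checkable} conditions: $\boldsymbol{w}$ is feasible ($C_0$), $\boldsymbol{w} \le_{QH} \boldsymbol{s}$ ($C_1$), $\boldsymbol{w}$ is \emph{locally} $QH$-minimal, i.e.\ no single-coordinate $QH$-decrement stays feasible ($C_2$), and $\boldsymbol{w} \not\le_{QH} \boldsymbol{t}$ ($C_3$). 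Crucially, the verifier never checks that $\boldsymbol{w}$ is in $\boldsymbol{s}$'s component. Instead, Claim~\ref{cl:certificate} shows that $C_0$--$C_2$ force $\boldsymbol{w} \le_{QH} \boldsymbol{s}_{\min}$ for the $QH$-minimum $\boldsymbol{s}_{\min}$ of $\boldsymbol{s}$'s component (by taking the $\min_{QH}$ of $\boldsymbol{w}$ with a $QH$-monotone path from $\boldsymbol{s}$ to $\boldsymbol{s}_{\min}$ and using local minimality to conclude this $\min_{QH}$ stays at $\boldsymbol{w}$), while $C_3$ forces $\boldsymbol{w} \not\le_{QH} \boldsymbol{t}_{\min}$; hence $\boldsymbol{s}_{\min} \neq \boldsymbol{t}_{\min}$. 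Completeness (Claim~\ref{cl:exists-certificate}) is by taking $\boldsymbol{w} = \boldsymbol{s}_{\min}$ itself. All four conditions are checkable with $O(n)$ feasibility tests, so the certificate is polynomially verifiable without ever compressing a pseudo-polynomially long path. That structural shortcut is the missing idea in your proposal.
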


\begin{proof}
To show that $\aP(1)$ reconfiguration is in coNP,
we show how to find a polynomial certificate for no instances, i.e., 
a polynomial certificate such that $\boldsymbol{s}$ and $\boldsymbol{t}$ are disconnected.

Let $(I,\boldsymbol{s}, \boldsymbol{t})$ be a \emph{no} instance of $\aP(1)$ reconfiguration.
Let $Q\cup H =V$ be a $QH$-partition of the set of variables.
We define $\boldsymbol{x} \ge_{QH} \boldsymbol{y}$ if
$x_i \ge y_i$ for $i \in H$ and $x_i \ge_{t_i} y_i$ for $i \in Q$.

We first observe that we can assume $\boldsymbol{s} \ge_{QH} \boldsymbol{t}$ without loss of generality.
Indeed, since $\boldsymbol{s}$ and $\boldsymbol{t}$ are disconnected,
at least one of $\boldsymbol{s}$ and $\boldsymbol{t}$ is not connected to $\min_{QH}(\boldsymbol{s},\boldsymbol{t})$,
where $\min_{QH}(\boldsymbol{x},\boldsymbol{y})_i$ equals to $\min(x_i,y_i)$ if $i \in H$ and
$M(t_i,x_i,y_i)$ if $i \in Q$.
Thus, we may find a certificate to the disconnectivity of $\boldsymbol{s}$ and $\min_{QH}(\boldsymbol{s},\boldsymbol{t})$ without loss of generality.
By resetting $\boldsymbol{t} := \min_{QH}(\boldsymbol{s},\boldsymbol{t})$, we obtain $\boldsymbol{s} \ge_{QH} \boldsymbol{t}$.

Let $\boldsymbol{e_i}$ be the $i$-th unit vector, i.e.,
the $i$-th coordinate of $\boldsymbol{e_i}$ is one and the other coordinates are all zeros.
Let $R$ be the set of solutions of $I$, i.e., $R= \{ \boldsymbol{x} \in D^n \mid  A\boldsymbol{x} \ge \boldsymbol{b} \}$.
Consider a vector $\boldsymbol{w}$ satisfying the following four conditions.\\
$C_0$: $\boldsymbol{w} \in R$, \\
$C_1$: $\boldsymbol{w} \le_{QH} \boldsymbol{s}$, \\
$C_2$: $\boldsymbol{w}$ is locally $QH$-minimal, i.e.,
$\boldsymbol{w} - \boldsymbol{e_i} \not \in R$ for any $i \in H$ and $\boldsymbol{w} - \sigma_i\boldsymbol{e_i} \not \in R$ for any $i \in Q \setminus \{ i \mid w_i=t_i \}$,
where $\sigma_i = 1$ if $w_i > t_i$ and $-1$ if $w_i < t_i$, and \\
$C_3$: there exists $i \in \{ 1, \dots, n \}$ such that $w_i >_{QH} t_i$.

The following claim shows that $\boldsymbol{w}$ satisfying the four conditions above
is a certificate that $\boldsymbol{s}$ and $\boldsymbol{t}$ are disconnected.

\begin{claim}\label{cl:certificate}
If we have a $\boldsymbol{w}$ satisfying the four conditions $C_i$ $(i=0,1,2,3)$,
then $\boldsymbol{s}$ and $\boldsymbol{t}$ are disconnected.
\end{claim}

\begin{proof}
We first observe that the unique $QH$-minimal element $\boldsymbol{s}_{\min}$ in the same component as $\boldsymbol{s}$ satisfies $\boldsymbol{s}_{\min} \ge_{QH} \boldsymbol{w}$.
Indeed, let
$\boldsymbol{s}=\boldsymbol{s}^0 \rightarrow \boldsymbol{s}^1 \rightarrow \dots \rightarrow \boldsymbol{s}^\ell = \boldsymbol{s}_{\min}$ be a
$QH$-monotone path, which exists by the same argument as Lemmas~\ref{lem:Horn-monotone-path} and \ref{lem:TVPI-monotone-path}.
Consider a path
$\min_{QH}(\boldsymbol{s},\boldsymbol{w}) \rightarrow \min_{QH}(\boldsymbol{s}^1,\boldsymbol{w}) \rightarrow \dots \rightarrow \min_{QH}(\boldsymbol{s}_{\min},\boldsymbol{w})$.
Since $\boldsymbol{w} \in R$ from condition $C_0$ and the solution set of $I$ is $\min_{QH}$-closed, this is a path in $R$.
Moreover, from condition $C_1$, we have $\min_{QH}(\boldsymbol{s},\boldsymbol{w}) = \boldsymbol{w}$ and thus
this is a path from $\boldsymbol{w}$ to $\min_{QH}(\boldsymbol{s}_{\min},\boldsymbol{w})$.
Since $\boldsymbol{w}$ is locally minimal by $C_2$,
we have $\min_{QH}(\boldsymbol{s}_{\min},\boldsymbol{w}) = \boldsymbol{w}$.
Therefore, $\boldsymbol{w} \le \boldsymbol{s}_{\min}$ holds.

On the other hand,
the unique $QH$-minimal element $\boldsymbol{t}_{\rm min}$ in the same component as $\boldsymbol{t}$
does not satisfy $\boldsymbol{t}_{\min} \ge_{QH} \boldsymbol{w}$.
This is because $\boldsymbol{t} \ge_{QH} \boldsymbol{t}_{\min}$ and the condition $C_3$ of $\boldsymbol{w}$.

Therefore, $\boldsymbol{s}_{\rm min} \neq \boldsymbol{t}_{\rm min}$ holds, implying that
$\boldsymbol{s}$ and $\boldsymbol{t}$ belong to different components in $R$.
Thus, they are disconnected.
\end{proof}

We next show that
if $\boldsymbol{s}$ and $\boldsymbol{t}$ are disconnected,
then we always have a certificate satisfying conditions $C_i$ $(i=0,1,2,3)$.

\begin{claim}\label{cl:exists-certificate}
If $\boldsymbol{s}$ and $\boldsymbol{t}$ are disconnected,
then there exists a vector $\boldsymbol{w}$ satisfying conditions $C_i$ $(i=0,1,2,3)$.
\end{claim}
\begin{proof}
We show that the unique $QH$-minimal element $\boldsymbol{s}_{\min}$ in the same component as $\boldsymbol{s}$ satisfies conditions $C_i$ $(i=0,1,2,3)$.
Since $\boldsymbol{s}_{\min} \in R$, Condition $C_0$ is clearly satisfied.
Moreover, since $\boldsymbol{s}_{\min}$ is the unique $QH$-minimal element in the same component $\boldsymbol{s}$,
Condition $C_1$ holds.
Condition $C_2$ follows since $\boldsymbol{s}_{\min}$ is the unique $QH$-minimal element in a component.

For Condition $C_3$, assume otherwise that $\boldsymbol{s}_{\min} \le_{QH} \boldsymbol{t}$ holds.
Consider a path from $\boldsymbol{s}$ to $\boldsymbol{s}_{\min}$ in $R$.
Then, path from $\min_{QH}(\boldsymbol{s},\boldsymbol{t})$ to $\min_{QH}(\boldsymbol{s}_{\min},\boldsymbol{t})$ is also in $R$.
This path is a path from $\boldsymbol{t}$ to $\boldsymbol{s}_{\min}$, since we have $\boldsymbol{s} \ge_{QH} \boldsymbol{t}$ and $\boldsymbol{s}_{\rm min} \le_{QH} \boldsymbol{t}_{\rm min}$ by assumption.
Then $\boldsymbol{t}$ and $\boldsymbol{s}_{\min}$ are in the same component, implying that $\boldsymbol{t}$ and $\boldsymbol{s}$ are in the same component.
This contradicts disconnectivity of $\boldsymbol{t}$ and $\boldsymbol{s}$.
Therefore, $\boldsymbol{s}_{\rm min} \not\le_{QH} \boldsymbol{t}$, and Condition $C_3$ holds.
\end{proof}

From Claims~\ref{cl:certificate} and \ref{cl:exists-certificate},
we obtain a polynomial certificate for disconnectivity of $\boldsymbol{s}$ and $\boldsymbol{t}$.
Indeed, the size of the certificate is bounded in polynomial in the size of the input, and
we can check if conditions from C0 to C3 are satisfied by the certificate in polynomial time.
Therefore, $\aP(1)$ reconfiguration is in coNP.
\end{proof}

We then show our hardness result for $\aP(1)$ reconfiguration.

\begin{proposition}
The reconfiguration problem of $\aP(1)$ is weakly coNP-hard.
\end{proposition}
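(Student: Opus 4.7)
The plan is to establish weak coNP-hardness by reducing from the weakly NP-hard feasibility problem for monotone quadratic ILSes (Lagarias~\cite{Lag85}) to the complement of $\aP(1)$ reconfiguration. Since monotone quadratic systems lie in $\aP(1)$ (each row has at most one positive and at most one negative coefficient, placing them simultaneously in the Horn and TVPI families), this is a natural source. Given an instance $I=(A,\boldsymbol{b},d)$ of monotone quadratic ILS, I would construct in polynomial time, and with only polynomial blow-up of numerical parameters, an $\aP(1)$ reconfiguration instance $(I',\boldsymbol{s},\boldsymbol{t})$ whose answer is ``no'' (that is, $\boldsymbol{t}$ is unreachable from $\boldsymbol{s}$ in $G(I')$) if and only if $I$ is feasible.

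The construction augments $I$ with auxiliary switch variables together with Horn or TVPI linking constraints, engineered so that $\boldsymbol{s}$ and $\boldsymbol{t}$ remain feasible for $I'$ irrespective of whether $I$ is feasible, while every reconfiguration path between them is forced to traverse a region where $I$'s original constraints become active. The critical design goal is that any feasible solution $\boldsymbol{y}^*$ of $I$ lifts, in $G(I')$, to a vector $\boldsymbol{w}$ satisfying the four certificate conditions $C_0$--$C_3$ of Claim~\ref{cl:certificate}; by that claim, $\boldsymbol{w}$ then witnesses the disconnection of $\boldsymbol{s}$ and $\boldsymbol{t}$. The converse direction uses Claim~\ref{cl:exists-certificate}: if $I$ has no solution, then no such certificate can materialize in $I'$, and $\boldsymbol{s}$ and $\boldsymbol{t}$ therefore lie in the same component. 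Throughout, I must check that $Z(I') \le 1$ by exhibiting a valid assignment in LP~\eqref{LP}; the $QH$-partition framework of Subsection~\ref{subsec:ZIisone_general-property} naturally accommodates this, with the original variables of $I$ placed in the $Q$-part (covered by the TVPI rows of $I$) and the switch variables in the $H$-part (covered by Horn linking constraints).

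The main obstacle is the precise gadget design. Expressing ``activate $I$'s constraints only in an intermediate region'' through Horn or TVPI inequalities is delicate, since ILSes cannot directly encode conditional implications; the standard big-$M$ penalty trick must be applied carefully so as not to add a second positive entry to an $I$-row (which would break the Horn property) or a third nonzero entry (which would break TVPI). Equally delicate is arranging that a solution of $I$ genuinely becomes \emph{locally} $QH$-minimal in $I'$ (condition $C_2$) and strictly exceeds $\boldsymbol{t}$ in some coordinate (condition $C_3$), rather than merely feasible; this constrains the choice of auxiliary domain and of $\boldsymbol{s},\boldsymbol{t}$. Once the gadget topology is pinned down, verification of $C_0$--$C_3$ for lifted solutions of $I$, and verification of their non-existence when $I$ is infeasible, is routine, and the numerical blow-up is clearly polynomial, preserving Lagarias's weak hardness.
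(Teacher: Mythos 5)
Your high-level strategy---reduce from the weakly NP-hard feasibility of monotone quadratic ILSes to the \emph{complement} of $\aP(1)$ reconfiguration, so that feasibility of $I$ corresponds to disconnection of $\boldsymbol{s}$ and $\boldsymbol{t}$---matches the paper's direction. However, the route you sketch has two significant gaps. First, you treat Lagarias's feasibility problem as a black box, but the paper must open it up: it goes back to weak partition and introduces a tailored intermediate problem (``bounded weak partition'') precisely to guarantee that any witness $\boldsymbol{y}$ lies in $\{-1,0,1\}^n\setminus\{\boldsymbol{0}\}$ and is balanced ($\sum_i y_i = 0$). These quantitative properties are explicitly called out as ``crucial'' because the final system~\eqref{eq:Z=1} adds a single penalty inequality whose tightness window is calibrated to the ranges $\frac{1}{Q_i^T}$; without that control, one cannot show that the penalty inequality is violated exactly at the solution point and satisfied everywhere else along the ``obvious'' path. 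A black-box reduction from Lagarias will not hand you those bounds. Second, your constraint of staying within Horn and TVPI shapes (so that ``switch variables'' live in $H$ and the original variables in $Q$) is a self-imposed and misdirected restriction: the paper's system~\eqref{eq:Z=1} contains an inequality with $n+2$ nonzero coefficients, i.e., it is neither TVPI nor Horn. The paper instead observes the system is \emph{dual Horn}, which suffices for $Z(I)\le 1$ by taking all $\alpha_j=0$ in LP~\eqref{LP}; insisting on Horn/TVPI-decomposable gadgets would likely make the construction strictly harder for no benefit.

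There is also a logical error in how you plan to invoke the coNP certificates. You propose to use Claim~\ref{cl:certificate} for the forward direction and Claim~\ref{cl:exists-certificate} for the converse (``if $I$ has no solution, then no certificate can materialize''). But Claim~\ref{cl:exists-certificate} only says that disconnection implies the existence of a certificate; it does not tell you that infeasibility of $I$ forbids certificates in $I'$---for that you would need to argue that \emph{every} vector $\boldsymbol{w}$ satisfying $C_0$--$C_3$ in your engineered $I'$ must encode a genuine solution of $I$, which is a separate and nontrivial structural property of your gadget. The paper avoids this entirely by arguing directly about paths in $G(I'')$: in the infeasible case it shows that every vector on the candidate $\boldsymbol{s}$--$\boldsymbol{t}$ path already satisfies the penalty inequality, so the path survives; in the feasible case it shows the path is forced through the unique lifted solution of system~\eqref{eq:Lagarias}, where the penalty inequality fails. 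Your final sentence that the verification ``is routine'' underestimates the work: the uniqueness of the path's passage point, the tightness window of the penalty row, and the sign-pattern bookkeeping all require the specific numerics supplied by the bounded-weak-partition reduction.
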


\begin{proof}
	We show this by a reduction from weak partition.
	Here, weak partition is, given distinct positive integers $a_1, \dots, a_\ell$, to determine if
	there exists a nonzero integer vector $\boldsymbol{x} \in \{-1,0,1\}^\ell \setminus \{\boldsymbol{0}\}$ such that
	$\sum_{i=1}^{\ell}a_ix_i = 0$.
	It is known that weak partition is weakly NP-hard~\cite{Sha79}.
	We first reduce weak partition to the following variant of partition problem, which we call \emph{bounded weak partition}.
	In bounded weak partition, we are given positive integers $b_1, \dots, b_n$ and
	asked to determine if there exists a nonzero integer vector $\boldsymbol{y} \in \{-1,0,1,2, \dots, 3n\}^n \setminus \{\boldsymbol{0}\}$ such that $\sum_{i=1}^{n}b_iy_i = 0$.
	We then reduce bounded weak partition to the 
	\emph{feasibility} problem of monotone quadratic ILS (i.e., ILS that is Horn and TVPI at the same time), using the idea of the proof of Theorem C in~\cite{Lag85}, 
	which shows the NP-hardness of monotone quadratic ILS\footnote{We note that \emph{good simultaneous approximation} is shown to be NP-hard in~\cite{Lag85}, 
		however, it is described how to formulate good simultaneous approximation as monotone quadratic ILS in the last part of~\cite{Lag85}.}.
	Finally, we reduce this feasibility problem to 
	the reconfiguration problem of $\aP(1)$.
	We can then show that 
	an instance of weak partition is a yes instance 
	if and only if 
	the instance of the reconfiguration problem of $\aP(1)$ is a \emph{no} instance.
	This yields the weak coNP-hardness of the reconfiguration problem of $\aP(1)$.

We first reduce an instance of weak partition to an instance of bounded weak partition defined above.
Given an instance $I=(a_1,\dots, a_\ell)$ of weak partition with $\ell > 53$,
we can reduce it to bounded weak partition as follows.
Let $W = \sum_{i=1}^{\ell} a_i$.
Then we define an instance $I'=(b_1,\dots, b_n)$ of bounded weak partition as follows.
Set $n=2\ell$, $b_i = a_i + W^i$ for $i=1,\dots, \ell$, and $b_{i} = W^i$ for $i=\ell+1,\dots, 2\ell$.
We can then show that $I$ is a yes instance if and only if $I'$ is a yes instance.
Indeed, let $\boldsymbol{x}$ be a solution to $I$.
Define $\boldsymbol{y}$ as $y_i = x_i$ for $i=1,\dots, \ell$ and $y_i = -x_i$ for all $i=\ell+1,\dots, 2\ell(=n)$.
Then we have $\sum_{i=1}^{n}b_iy_i = \sum_{i=1}^{\ell}(a_i+W^i)x_i - \sum_{i=1}^{\ell}W^ix_i = \sum_{i=1}^{\ell}a_ix_i = 0$.
Therefore, $I'$ also has a solution.
For the other direction, let $\boldsymbol{y}$ be a solution to $I'$.
We first observe that $y_i = -y_{\ell+i}$ (or, equivalently, $y_i + y_{\ell+i} =0$) holds for all $i=1,\dots, \ell$.
Assume otherwise that $y_i \neq -y_{\ell+i}$ for some $i \in \{1,\dots, \ell \}$ and let $k$ be the maximum index among such indices.
Then $\sum_{i=1}^{n}b_iy_i = \sum_{i=1}^{k-1}(b_iy_i + b_{\ell+i}y_{\ell+i}) + (b_ky_k + b_{\ell+k}y_{\ell+k}) = \sum_{i=1}^{k-1}(b_iy_i + b_{\ell+i}y_{\ell+i}) + a_ky_k + (y_k + y_{\ell+k})W^k$.
We now show that $|(y_k + y_{\ell+k})W^k| > |\sum_{i=1}^{k-1}(b_iy_i + b_{\ell+i}y_{\ell+i}) + a_ky_k|$, which implies that $\sum_{i=1}^{n}b_iy_i \neq 0$, i.e., $\boldsymbol{y}$ is not a solution.
Indeed,
\begin{equation*}
\begin{array}{lll}
|\sum_{i=1}^{k-1}(b_iy_i + b_{\ell+i}y_{\ell+i}) + a_ky_k| &\le & \sum_{i=1}^{k-1}(|b_iy_i| + |b_{\ell+i}y_{\ell+i}|) + |a_ky_k|\\
&\le& 3n(\sum_{i=1}^{k-1}(|b_i| + |b_{\ell+i}|) + a_k)\\
&=& 3n(\sum_{i=1}^{k-1}(a_i+W^i + W^i) + a_k)\\
&=& 3n(\sum_{i=1}^{k-1}(2W^i) + \sum_{i=1}^{k}a_i)\\
&\le& 3n(2\frac{W^{k}-1}{W-1} + W)\\
&\le& 3n(2\frac{W^{k}}{W/2} + W)\\
&\le& 13nW^{k-1}.
\end{array}
\end{equation*}
Now, $13nW^{k-1} < W^k$ holds if $13n < W$, i.e., $26\ell < W$.
Note that $W = \sum_{i=1}^{\ell}a_{\ell} \ge \sum_{i=1}^{\ell}i = \frac{\ell(\ell-1)}{2}$ since $a_i$'s are distinct.
Hence, $26\ell < W$ holds if $26\ell < \frac{\ell(\ell-1)}{2}$, i.e., if $\ell >53$.
Therefore, we have $y_i = -y_{\ell+i}$ (or, equivalently, $y_i + y_{\ell+i} =0$) holds for all $i=1,\dots, \ell$.
Then we have $0 = \sum_{i=1}^{n}b_iy_i = \sum_{i=1}^{\ell}(b_iy_i - b_{\ell+i}y_i) = \sum_{i=1}^{\ell}((a_i+W^i)y_i - W^iy_i) = \sum_{i=1}^{\ell}a_iy_i$.
Moreover, since $\boldsymbol{y} \in \{-1,0,1,2, \dots, 3n\}^n \setminus \{\boldsymbol{0}\}$,
we obtain that $y \in \{-1,0,1\}^n \setminus \{\boldsymbol{0}\}$.
Therefore, if we define $\boldsymbol{x}$ as $x_i = y_i$ for $i=1,\dots, \ell$,
then it is a solution to $I$.
This complete the proof.
Note that 
from the discussion above any solution $\boldsymbol{y}$ to $I'$ satisfies
(i) $\boldsymbol{y} \in \{-1,0,1\}^n \setminus \{\boldsymbol{0}\}$ and (ii) $\sum_{i=1}^{n}y_i = 0$.
These properties are crucial for our reduction to $\aP(1)$ reconfiguration.

We then reduce the instance $I'$ of bounded weak partition to $\aP(1)$ reconfiguration.
For this, we use the reduction by Lagarias~\cite{Lag85},
which shows the NP-hardness of monotone quadratic systems,
where an ILS is \emph{monotone quadratic} if each inequality has at most one positive coefficient and at most one negative coefficient
\footnote{We note that \emph{good simultaneous approximation} is shown to be NP-hard in~\cite{Lag85}, 
	however, it is described how to formulate good simultaneous approximation as monotone quadratic systems in the last part of~\cite{Lag85}.}.
For the sake of clarity, we use notation used in the proof of Theorem C in~\cite{Lag85} except that $Z$ in~\cite{Lag85} is replaced by $z$.
From the proof in~\cite{Lag85}, it can be seen that
$I'$ has a solution if and only if the following ILS has a solution.
\if0
\begin{equation}\label{eq:Lagarias-original}
\left\{
\begin{array}{ll}
x_0-\frac{1}{p_0^R}z = 0 & \\
x_i -\frac{\theta^*_i}{Q_i^T}z \ge -\frac{1}{Q_i^T} & (i=1, \dots, n) \\
- x_i + \frac{\theta^*_i}{Q_i^T}z \ge - \frac{1}{Q_i^T} & (i=1, \dots, n) \\
1 \le z \le \sum_{i=1}^{n}\theta_i.
\end{array}
\right.
\end{equation}
\fi
\begin{equation}\label{eq:Lagarias}
\left\{
\begin{array}{ll}
x_0-\frac{1}{p_0^R}z = 0 & \\
x_i -\frac{\theta^*_i}{Q_i^T}z \ge -\frac{1}{Q_i^T} & (i=1, \dots, n) \\
- x_i + \frac{\theta^*_i}{Q_i^T}z \ge - \frac{3n}{Q_i^T} & (i=1, \dots, n) \\
1 \le z \le \sum_{i=1}^{n}\theta_i.
\end{array}
\right.
\end{equation}
Here, $p_0, Q_1,\dots, Q_n$ are primes and $R, T, \theta_1, \dots, \theta_n, \theta^*_{1}, \dots, \theta^*_n$ are integers 
determined by $b_i$'s in bounded weak partition.

Then our instance $I''$ of $\aP(1)$ reconfiguration is defined as follows.
Our ILS is given by

\begin{equation}\label{eq:Z=1}
\left\{
\begin{array}{ll}
x_0-\frac{1}{p_0^R}z \ge 0 & \\
 - x_0 + \frac{1}{p_0^R}z \ge -1 & \\
x_i -\frac{\theta^*_i}{Q_i^T}z \ge -\frac{1}{Q_i^T} & (i=1, \dots, n) \\
- x_i + \frac{\theta^*_i}{Q_i^T}z \ge -1 - \frac{\theta^*_i-2}{Q_i^T} & (i=1, \dots, n) \\
2n\frac{p_0^R}{Q_1^T}(x_0-\frac{1}{p_0^R}z) + \sum_{i=1}^{n}\left( x_i - \frac{\theta^*_i}{Q_i^T}z \right) \ge \frac{n+1}{2Q_1^T}, &
\end{array}
\right.
\end{equation}
and we set
\begin{equation}
\begin{array}{ccl}
\boldsymbol{s}&=&(x_0,x_1,\dots, x_n,z) \\
&=& \left(\left\lceil \frac{\sum_{i=1}^{n}\theta_i}{p_0^R} \right\rceil, \left\lceil \frac{\theta_1^*\sum_{i=1}^{n}\theta_i}{Q_1^T} - \frac{1}{Q_1^T} \right\rceil, \dots, \left\lceil \frac{\theta_n^*\sum_{i=1}^{n}\theta_i}{Q_n^T} - \frac{1}{Q_n^T} \right\rceil, \sum_{i=1}^{n}\theta_i \right)
\end{array}
\end{equation}
and $\boldsymbol{t}=\boldsymbol{0}$.
Note that the last inequality of system~\eqref{eq:Z=1} is equivalent to
\begin{equation}
2n\frac{p_0^R}{Q_1^T}x_0 + \sum_{i=1}^{n}x_i - \left(\frac{2n}{Q_1^T} + \sum_{i=1}^{n}\frac{\theta^*_i}{Q_i^T}\right)z \ge \frac{n+1}{2Q_1^T}
\end{equation}
and thus system~\eqref{eq:Z=1} is dual Horn,
i.e., each inequality has at most one negative coefficient.
We actually have that the complexity index of system~\eqref{eq:Z=1} is exactly one.

We now show that $I'$ is a yes instance if and only if $I''$ is a no instance,
namely,
$I'$ has a solution
if and only if
$\boldsymbol{s}$ and $\boldsymbol{t}$ are \emph{disconnected} in the solution graph of system~\eqref{eq:Z=1}.

First, observe that if we do not have the last inequality in system~\eqref{eq:Z=1},
then we can reach from $\boldsymbol{s}$ to $\boldsymbol{t}$.
Indeed, we can reach from $(s_1,s_{n+2})$ to $(0,0)$ in the solution graph of
the first and second inequalities in system~\eqref{eq:Z=1}.
For each $i=1, \dots, n$,
we can also reach from $(s_i,s_{n+2})$ to $(0,0)$ in the solution graph of
the third and fourth inequalities in system~\eqref{eq:Z=1}.
Since $x_i's$ do not have inequality in common except the last inequality in system~\eqref{eq:Z=1},
we can reach from $\boldsymbol{s}$ to $\boldsymbol{t}$ if
we do not have the last inequality in system~\eqref{eq:Z=1}.

Now, suppose that $I'$ has a solution $\boldsymbol{y}$.
Then,
system~\eqref{eq:Lagarias} has
an integer solution $\boldsymbol{u} = (x_0, x_1, \dots, x_n,z)$.
We first observe that any path from $\boldsymbol{s}$ to $\boldsymbol{t}$ in $G(I'')$ 
without the last inequality in system~\eqref{eq:Z=1}
must path through $\boldsymbol{u}$.
This is because for a fixed value of $z$,
there exists exactly one value for $x_0$ (resp., for each $x_i$) that satisfies the first inequality (resp., both the second and third inequalities) in system~\eqref{eq:Lagarias}.
Indeed, this is clear for $x_0$.
For $x_i$ ($i=1, \dots, n$), from the second and third inequalities in system~\eqref{eq:Lagarias},
we have
$-\frac{1}{Q_i^T} \le x_i -\frac{\theta^*_i}{Q_i^T}z \le \frac{3n}{Q_i^T} < 1 - \frac{1}{Q_i^T}$,
where the last inequality follows from the size condition of $Q_i$ such that $Q_i^T > Q_1^T \ge 4(n+1)p_0^R$.
Therefore, the interval of $x_i$ is shorter than one for a fixed $z$, and
thus we have exactly one integer in the interval.
Second, from our reduction to bounded weak partition,
we have (i) $-\frac{1}{Q_i^T} \le x_i - \frac{\theta^*_i}{Q_i^T}z \le \frac{1}{Q_i^T}$ and
(ii) $|\{ i \mid x_i - \frac{\theta^*_i}{Q_i^T}z = -\frac{1}{Q_i^T} \}| = |\{ i \mid x_i - \frac{\theta^*_i}{Q_i^T}z = \frac{1}{Q_i^T} \}|$.
Therefore, it holds that
\begin{equation}
\begin{array}{lll}
\sum_{i=1}^{n}\left( x_i - \frac{\theta^*_i}{Q_i^T}z \right) &=&\sum_{i: y_i=1}\frac{1}{Q_i^T} - \sum_{i: y_i=-1}\frac{1}{Q_i^T}\\
&\le&\frac{n}{Q_1^T} - \frac{n}{Q_n^T}\\
&\le&\frac{n}{Q_1^T} - \frac{n}{2Q_1^T}\\
&=& \frac{n}{2Q_1^T},\\
&<& \frac{n+1}{2Q_1^T},
\end{array}
\end{equation}
where we use
$\frac{1}{Q_n^T} < \frac{1}{Q_{n-1}^T} < \dots < \frac{1}{Q_1^T} < \frac{2}{Q_n^T}$.
Thus, the last inequality in system~\eqref{eq:Z=1} is not satisfied.
Therefore, $\boldsymbol{s}$ cannot reach to $\boldsymbol{t}$ in the solution graph of system~\eqref{eq:Z=1}.

We then show the converse holds.
Suppose that $I'$ does not have a solution.
Let $P$ be an $\boldsymbol{s}$-$\boldsymbol{t}$ path in $G(I)$ without last inequality in system~\eqref{eq:Z=1}.
We show that all the vectors in $P$
satisfy the last inequality in system~\eqref{eq:Z=1},
implying that the $\boldsymbol{s}$-$\boldsymbol{t}$ path is also a path in system~\eqref{eq:Z=1}.
Let $\boldsymbol{u} = (x_0, x_1, \dots, x_n,z)$ be an arbitrary vector in $P$.
If $x_0-\frac{1}{p_0^R}z > 0$ holds, then the last inequality in system~\eqref{eq:Z=1} is satisfied since we have
\begin{equation*}
\begin{array}{lll}
2n\frac{p_0^R}{Q_1^T}(x_0-\frac{1}{p_0^R}z) + \sum_{i=1}^{n}\left( x_i - \frac{\theta^*_i}{Q_i^T}z \right) &\ge& 2n\frac{p_0^R}{Q_1^T}\frac{1}{p_0^R} - \sum_{i=1}^{n}\frac{1}{Q_i^T}\\
&\ge& \frac{2n}{Q_1^T} - \sum_{i=1}^{n}\frac{1}{Q_1^T}\\
&=& \frac{n}{Q_1^T}\\
&\ge& \frac{n+1}{2Q_1^T}.
\end{array}
\end{equation*}
Hence, assume that $x_0-\frac{1}{p_0^R}z = 0$ holds.
Recall that if $I'$ has no solution, then system~\eqref{eq:Lagarias} has no integer solution.
Therefore,
at least one $j$ satisfies that
$x_j - \frac{\theta_j^*}{Q_j^T}z \ge \frac{3n+1}{Q_j^T}$.
Thus, it follows that
\begin{equation}
\begin{array}{lll}
2n\frac{p_0^R}{Q_1^T}(x_0-\frac{1}{p_0^R}z) + \sum_{i=1}^{n}\left( x_i - \frac{\theta^*_i}{Q_i^T}z \right) &=&\frac{3n+1}{Q_j^T} - \sum_{i\neq j, 1 \le i \le n}\frac{1}{Q_i^T} \\
&\ge& \frac{3n+1}{2Q_1^T} - \frac{n-1}{Q_1^T}\\
&=& \frac{n+3}{2Q_1^T},\\
&\ge& \frac{n+1}{2Q_1^T},
\end{array}
\end{equation}
where the second inequality holds since
$\frac{1}{Q_n^T} < \frac{1}{Q_{n-1}^T} < \dots < \frac{1}{Q_1^T} < \frac{2}{Q_n^T}$.
Therefore, the last inequality in system~\eqref{eq:Z=1} is satisfied by $\boldsymbol{u}$.
This completes the proof.
\end{proof}

\section{Tractable subclasses of $Z(I) = 1$}
\label{sec:Z=1-tractable}
In this section, we show that
certain subclasses of $\aP(1)$ reconfiguration are solvable in polynomial time.

\subsection{Unit systems}
An integer linear system $I = (A,\boldsymbol{b},d)$ is called \emph{unit} if $A \in \{ 0, \pm 1 \}^{m \times n}$ holds for positive integers $m$ and $n$.
For the feasibility problem, it is known that ILS$(1)$ restricted to unit integer linear systems, denoted by unit ILS(1), is polynomially solvable~\cite{KiM16}.
In this subsection, we consider the reconfiguration problem of
unit ILS(1).
We note that unit ILS(1) includes a well-studied subclass of ILS such as
unit Horn ILS (e.g.,~\cite{Cha84,ChS13,SuW15}) and unit TVPI (UTVPI) ILS (e.g., \cite{JMS94,Sub04,LaM05}).
In this subsection, we will show the following theorem.

\begin{theorem}\label{thm:UILS-Z=1-P}
The reconfiguration problem of unit ILS(1) is polynomially solvable.
\end{theorem}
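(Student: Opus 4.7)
The plan is to follow the $QH$-decomposition scheme of Algorithm~\ref{alg:Z=1} but to replace its pseudo-polynomial subroutines with genuinely polynomial ones that exploit the hypothesis that every entry of $A$ lies in $\{0,\pm 1\}$. First, compute a $QH$-partition of the variables in linear time, which splits the system into a unit Horn subsystem $I_H$ on the variables indexed by $H$ and a unit TVPI (UTVPI) subsystem (built from the $Q$-columns and the $\overline{S}$-rows) on the variables indexed by $Q$, with the coupling pattern described in Subsection~\ref{subsec:ZIisone_general-property}. Following the proof of Lemma~\ref{lem:Z=1_analysis}, the reconfiguration instance is a yes-instance if and only if (i) $\boldsymbol{s}_H$ and $\boldsymbol{t}_H$ share the same component-minimum in $G(I_H)$ and (ii) in the induced UTVPI system $I_Q$, the $\boldsymbol{t}_Q$-component-minimum of $\boldsymbol{s}_Q$ equals $\boldsymbol{t}_Q$.

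For (i), I would compute the component-minimum of a given feasible $\boldsymbol{u}$ for the unit Horn system $I_H$ as follows. By Lemma~\ref{lem:Horn-monotone-path}, this minimum coincides with the unique minimum of the augmented Horn system obtained by adjoining the unit upper-bound inequalities $x_j \leq u_j$ for each $j \in H$. The augmented system remains unit Horn, and its unique minimum can be computed in time polynomial in $n$, $m$ and $\log d$ via a shortest-path-style fixed-point computation on the constraint graph, using the $\pm 1$ structure to avoid any pseudo-polynomial iteration. Applying this subroutine to both $\boldsymbol{s}_H$ and $\boldsymbol{t}_H$ and comparing the two outputs decides (i).

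For (ii), by Lemma~\ref{lem:TVPI-monotone-path} and the proof of Lemma~\ref{lem:Z=1_analysis}, $\boldsymbol{s}_Q$ and $\boldsymbol{t}_Q$ are connected in $G(I_Q)$ precisely when the $\boldsymbol{t}_Q$-component-minimum of $\boldsymbol{s}_Q$ equals $\boldsymbol{t}_Q$. Using Lemma~\ref{lem:polarity_change}, flip every column $j \in Q$ with $s_j < t_j$; the resulting system is still UTVPI since column flips preserve the $\{0,\pm 1\}$ pattern, and after the accompanying substitution we may assume $\boldsymbol{s}_Q \geq \boldsymbol{t}_Q$, whereupon the $\boldsymbol{t}_Q$-ordering restricted to the box $\{\boldsymbol{x}_Q : \boldsymbol{t}_Q \leq \boldsymbol{x}_Q \leq \boldsymbol{s}_Q\}$ coincides with the usual componentwise order. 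It therefore suffices to compute the standard minimum of $I_Q$ augmented by the unit box constraints $t_j \leq x_j \leq s_j$, which is again a UTVPI feasibility/optimization problem and can be solved in time polynomial in $n$, $m$ and $\log d$ via classical shortest-path algorithms on the UTVPI constraint graph.

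The main obstacle is that the diameter of $G(I)$ can be $\Theta(dn)$, exponential in the input size $\log d$, so one cannot afford to walk the monotone paths step-by-step as in the pseudo-polynomial algorithm of Section~\ref{sec:Z=1}; in particular, the intermediate solutions along the $\boldsymbol{s}$-$\boldsymbol{t}$ path promised by Lemma~\ref{lem:Z=1_analysis} cannot be materialized. What saves the argument is that the two sub-problems to which connectivity has been reduced---unit Horn minimization and UTVPI minimization subject to unit bounds---both admit genuinely polynomial-time (i.e., $\mathrm{poly}(n,m,\log d)$) algorithms via shortest paths on the associated constraint graph, thanks to the $\{0,\pm 1\}$ coefficient restriction. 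Once the YES/NO answer has been determined, a reconfiguration sequence can still be output implicitly by recording, for each leg of the two monotone paths, the affected coordinate together with the total number of unit changes to apply to it; this succinct description has polynomial length even though the actual path has exponentially many steps, which matches the caveat stated in the introduction.
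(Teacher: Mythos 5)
Your reduction to the two subproblems (unit Horn on $H$, UTVPI on $Q$ with the lifted right-hand side) is the right structural frame and matches the paper, but both of your polynomial subroutines are based on a false identification: you replace the \emph{component-minimum} in $G(I_H)$ (respectively, the $\boldsymbol{t}_Q$-component-minimum in $G(I_Q)$) with the \emph{global} minimum of the system augmented by the box constraints $x_j\le u_j$ (respectively $t_j\le x_j\le s_j$). These are not the same thing, and the distinction is the entire difficulty here. The paper's own Example~\ref{ex:Z=1-disconnected} is a counterexample: for the unit monotone quadratic system $x_1-x_2\ge 0$, $-x_1+x_2\ge 0$ with $d\ge 2$, the solution set is $\{(0,0),(1,1),\dots,(d,d)\}$, the solution graph is totally disconnected, and the component-minimum of $(d,d)$ is $(d,d)$ itself, whereas the box-constrained global minimum of the same system is $(0,0)$. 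Since that system is simultaneously unit Horn and UTVPI, your shortest-path/constraint-graph subroutines would return $(0,0)$ and incorrectly report that $(d,d)$ and $(0,0)$ are connected. Lemma~\ref{lem:Horn-monotone-path} gives you a \emph{monotone} path from $\boldsymbol{u}$ to its component-minimum, but it does not assert that every solution $\le\boldsymbol{u}$ of the box-constrained system is reachable, which is what your claim needs.

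What the paper actually does is precisely designed to get around this. For unit Horn (Proposition~\ref{thm:UHorn-P}, Algorithm~\ref{alg:unit-Horn}) it restricts to a width-one box $[\boldsymbol{u}-\boldsymbol{e}_{V\setminus T},\boldsymbol{u}]$, where the pseudo-polynomial greedy descent is genuinely polynomial, computes that \emph{local} component-minimum, and then exploits the $\{0,\pm1\}$ coefficients to take a big jump $\boldsymbol{u}\mapsto\boldsymbol{u}-p\boldsymbol{e}_U$ for the largest feasible $p$, proving that the coordinates that become tight are fixed for good; the index set $T$ of fixed coordinates grows each round, giving at most $n$ iterations. For UTVPI (Proposition~\ref{thm:UTVPI-P}) the key is Lemma~\ref{lem:UTVPI-connectivity-equivalece}, which shows that connectivity of $\boldsymbol{s}$ and $\boldsymbol{t}$ is equivalent to a single reconfiguration step from $\boldsymbol{u}^0=\boldsymbol{s}$ to $\boldsymbol{u}^1=\max(\boldsymbol{s}-\boldsymbol{1},\boldsymbol{t})$; since that step lives in a width-one box, the pseudo-polynomial algorithm again becomes polynomial. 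Neither step is a constraint-graph shortest-path computation, and your proposal is missing both of these ideas. (Your remarks about outputting the path implicitly are fine but do not address the decision-procedure gap.)
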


Any ILS in unit ILS(1) can be decomposed to a unit Horn ILS and a UTVPI ILS, 
using the decomposition of general ILS(1).
We thus first provide polynomial time algorithms to solve the reconfiguration problems of these two ILSes and
then combine them to solve that of unit ILS(1).

We first show the following result.

\begin{proposition}\label{thm:UHorn-P}
The reconfiguration problem of unit Horn ILS is polynomially solvable.
\end{proposition}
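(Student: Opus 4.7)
The plan is to reduce the problem to an equality test between two component-minima and then to compute each such minimum in polynomial time. By Lemma~\ref{lem:Horn-min-closed} combined with Lemma~\ref{lem:connected_component_closedness}, every connected component of $G(I)$ is closed under the coordinatewise minimum and hence admits a unique minimal element, so $\boldsymbol{s}$ and $\boldsymbol{t}$ lie in the same component if and only if the component-minima $\boldsymbol{s}^*$ and $\boldsymbol{t}^*$ coincide. Proposition~\ref{lem:Horn-pseudoP} already provides such an algorithm in pseudo-polynomial time by walking the monotone descent of Lemma~\ref{lem:Horn-monotone-path} one coordinate at a time; the task here is to replace the single-step decrements by jumps that can be encoded in $\operatorname{poly}(\log d)$ bits, so that the descent from $\boldsymbol{s}$ to $\boldsymbol{s}^*$ (which has length $\Theta(dn)$ in general) is simulated without explicitly traversing every edge.

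The first ingredient is a parallel batched step. For a feasible $\boldsymbol{x}$ and index $j$, let $k_j(\boldsymbol{x})$ be the largest integer $k\in\{0,\ldots,x_j\}$ with $\boldsymbol{x}-k\boldsymbol{e}_j$ still feasible. Because $A\in\{0,\pm 1\}^{m\times n}$ and each row has at most one $+1$, only constraints with $A_{ij}=+1$ are tightened when $x_j$ is decreased, and their slack drops linearly; hence $k_j(\boldsymbol{x})$ equals the smallest slack over such rows, clipped at $x_j$, and is computable in $O(mn)$ time. Moreover, decrementing $x_{j_1}$ by $k_{j_1}(\boldsymbol{x})$ never tightens any constraint whose positive variable is a different column $j_2$, because the row for $j_2$ has non-positive entries elsewhere by the unit-Horn structure. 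Consequently the decrements $k_1(\boldsymbol{x}),\ldots,k_n(\boldsymbol{x})$ can be realised sequentially in any order as a monotone sub-path, collapsing each macro-iteration of the greedy descent to $O(mn)$ work.

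The remaining and technically main obstacle is that even this parallel batching may require $\Theta(d)$ macro-iterations, since paired constraints such as $x_i-x_j\ge 0$ and $x_j-x_i\ge -c$ produce an oscillating descent that peels off only a bounded amount per step. The plan is to detect the periodicity. I would track the pattern $P(\boldsymbol{x}) = (\operatorname{supp}(\boldsymbol{k}(\boldsymbol{x})),\,B(\boldsymbol{x}))$ formed by the support of the decrement vector and the set $B(\boldsymbol{x})$ of currently binding constraints; once two consecutive macro-iterations share the same pattern, the same integer vector $\boldsymbol{v}$ is being subtracted at every further repetition. The largest multiplier $\lambda$ for which $\boldsymbol{x}-\lambda\boldsymbol{v}$ keeps every coordinate nonnegative and every non-binding constraint un-violated can then be written in closed form from $\boldsymbol{x}$, $A$, and $\boldsymbol{b}$, and evaluated in $O(mn\log d)$ time, so one batched jump implicitly performs $\lambda$ macro-iterations without ever writing them down. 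Each jump must end either with a coordinate hitting $0$ or with a previously non-binding constraint becoming binding, so the pattern changes at most $O(n+m)$ times before termination. Computing $\boldsymbol{t}^*$ by the same procedure and comparing the two minima then finishes the algorithm. The hardest part of the write-up will be to verify that the closed-form multiplier is correct with respect to the monotone reachability of Lemma~\ref{lem:Horn-monotone-path}, and that the pattern really does change only polynomially many times before $\boldsymbol{s}^*$ is reached.
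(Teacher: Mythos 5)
Your opening reduction is the same as the paper's: by Lemmas~\ref{lem:Horn-min-closed}, \ref{lem:connected_component_closedness} and \ref{lem:Horn-monotone-path}, $\boldsymbol{s}$ and $\boldsymbol{t}$ are connected iff their component-minima coincide, so the task is to reach the component minimum of $\boldsymbol{s}$ in polynomial (not merely pseudo-polynomial) time. You also correctly identify the real obstruction, namely that paired inequalities like $x_i - x_j \ge 0$, $x_j - x_i \ge -c$ force a zig-zag descent of length $\Theta(d)$. The gap is in the acceleration step. Your criterion fires only when two \emph{consecutive} macro-iterations produce the same pattern, i.e.\ when the oscillation has period one. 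But the canonical hard instance has period two and never stabilises to period one. Take $n=2$, constraints $x_1 - x_2 \ge 0$ and $x_2 - x_1 \ge -1$, and start at $\boldsymbol{s} = (d,d)$. Your decrement vector $\boldsymbol{k}(\boldsymbol{x})$ (computed componentwise against the current $\boldsymbol{x}$) is $(0,1)$ at $(d,d)$, then $(1,0)$ at $(d,d-1)$, then $(0,1)$ at $(d-1,d-1)$, and so on; the support alternates $\{2\},\{1\},\{2\},\{1\},\dots$, so no two consecutive macro-iterations ever share a pattern and the claimed jump never triggers. The algorithm then performs $\Theta(d)$ macro-iterations, which is exponential in $\log d$. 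Patching this to detect period two is ad hoc, and for chains of length $n$ the induced oscillations can have longer and less regular periods, so pattern-matching on recent history does not obviously bound the number of macro-iterations polynomially.

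The paper sidesteps periodicity detection entirely. Rather than decrementing each coordinate individually, it first locates the set $U$ of variables that must decrease \emph{together}: it restricts the system to the box $[\boldsymbol{u}-\boldsymbol{e}_{V\setminus T},\boldsymbol{u}]$, where the effective domain size is one, so the pseudo-polynomial Horn procedure becomes polynomial; the coordinates that move in the box minimum form $U$. It then takes a single diagonal jump $\boldsymbol{u}\mapsto\boldsymbol{u}-p\,\boldsymbol{e}_U$ with $p=\max\{k:\boldsymbol{u}-k\boldsymbol{e}_U\in R\}$. Because the matrix is unit Horn, decreasing all of $U$ by the same amount keeps every inequality satisfied, and $p$ is computable in closed form. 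In your example this yields $U=\{1,2\}$ and $p=d$, so the descent finishes in one iteration. After each jump some coordinate or constraint becomes fixed, giving the $O(n)$ bound on iterations that your approach lacks. In short, the missing idea is to discover the correct \emph{set} of coordinates from a local $\pm 1$ search and decrement them uniformly, rather than to accelerate a per-coordinate greedy by cycle detection.
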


\begin{proof}
	Given a Horn ILS $I=(A,\boldsymbol{b},d)$ and a solution $\boldsymbol{s}$ of $I$,
	we provide an algorithm to obtain a unique minimal solution $\boldsymbol{s}_{\min}$ in the same component as $\boldsymbol{s}$ in polynomial time.
	Once this is done, the reconfiguration problem can be solved in polynomial time
	by checking whether unique minimal solutions in the same components as given two solutions are the same or not.

\if0 
For two vectors $\boldsymbol{u}$ and $\boldsymbol{v}$,
define $[\boldsymbol{u},\boldsymbol{v}]:=\{ \boldsymbol{w} \mid \boldsymbol{u} \le \boldsymbol{w} \le \boldsymbol{v} \}$.
Our algorithm first finds a unique minimal solution $\boldsymbol{s}_{\min}^1$ in $[\boldsymbol{s} - \boldsymbol{1}, \boldsymbol{s}] \cap R(I)$ that is in the same component as $\boldsymbol{s}$, 
where $R(I) = \{ \boldsymbol{x} \mid A\boldsymbol{x} \ge \boldsymbol{b} \}$.
We note that since $s_j-1 \le x_j \le s_j$ for each $j \in \{1, \dots, n\}$ is Horn,
we also have a unique minimal solution in $[\boldsymbol{s} - \boldsymbol{1}, \boldsymbol{s}] \cap R(I)$.
Then, we (possibly exponentially) decrease simultaneously the values of those variables $j$ for which $s_j$ and $s_{\min}^1(j)$ differs 
	until some variables coincides with their lower bounds.
	Then, we fix the values of these variables and iterate the above procedure for the ILS with less variables,
	eventually get to the unique minimal solution $\boldsymbol{s}_{\min}$ in at most $n$ iterations.
We formally describe the algorithm in Algorithm~\ref{alg:unit-Horn}.
\fi 

Consider the following Horn ILS $I'$ with matrix representation $\{ A\boldsymbol{x} \ge \boldsymbol{b}, \boldsymbol{x} \ge \boldsymbol{s}-\boldsymbol{1}, -\boldsymbol{x} \ge -\boldsymbol{s} \}$, where the second and last inequalities are equivalent to $s_j-1 \le x_j \le s_j$ for each $j \in \{1, \dots, n\}$.
Our algorithm finds in $I'$ a unique minimal solution $\boldsymbol{s}_{\min}^1$ in the same component as $\boldsymbol{s}$.
	Then, we (possibly exponentially) decrease simultaneously the values of those variables $j$ for which $s_j$ and $s_{\min}^1(j)$ differs
	until some variables coincides with their lower bounds.
	Then, we fix the values of these variables and iterate the above procedure for the ILS with less variables,
	eventually get to the unique minimal solution $\boldsymbol{s}_{\min}$ in at most $n$ iterations.

We formally describe our algorithm in Algorithm~\ref{alg:unit-Horn}.
For a vector $\boldsymbol{u}$ we also denote by $u(j)$ the $j$-th component of $\boldsymbol{u}$.
For index set $V$ of variables and $U \subseteq V$,
let $\boldsymbol{e}_{U}$ be a vector such that
$\boldsymbol{e}_{U}(j) = 1$ if $j \in U$ and $\boldsymbol{e}_{U}(j) = 0$ otherwise (i.e., $j \in V \setminus U$).
For simplicity, we assume that inequalities $x_j \ge 0$ for $j \in \{ 1, \dots, n \}$ are included in $A\boldsymbol{x}\ge \boldsymbol{b}$.
We also assume that $\boldsymbol{b} \in \mathbb{Z}^m$ for simplicity,
however the result also holds for an arbitrary $\boldsymbol{b}$.
\LinesNumberedHidden
\begin{algorithm}

	\caption{Computing a unique minimal solution $\boldsymbol{s}_{\min}$ in the same component as a solution $\boldsymbol{s}$ in a unit Horn system $I=(A,\boldsymbol{b},d)$}
	\label{alg:unit-Horn}
	\ShowLn $V:=\{1, \dots, n\}$\\
	\ShowLn $T:=\emptyset,U:=\emptyset,W:=\emptyset$\\
	\ShowLn $\boldsymbol{u}:=\boldsymbol{s}$\\
	\ShowLn $R:=\{\boldsymbol{x} \in D^V \mid A\boldsymbol{x}\ge \boldsymbol{b} \}$\\
	\ShowLn $p:=0$\\
	\ShowLn \While{$T \neq V$}{
		\ShowLn Find a unique minimal solution $\boldsymbol{u}_{\min}$ in the same component as $\boldsymbol{u}$ in $G([\boldsymbol{u} - \boldsymbol{e}_{V\setminus T}, \boldsymbol{u}] \cap R)$\\
		\ShowLn $U \leftarrow \{ j \in V \setminus T \mid u_j \neq u_{\min}(j) \}$\\
		\ShowLn $p \leftarrow \max\{k \mid \boldsymbol{u} - k\boldsymbol{e}_U \in R \}$\\
		\ShowLn $\boldsymbol{u} \leftarrow \boldsymbol{u} - p\boldsymbol{e}_U$\\
		\ShowLn $W \leftarrow \{ j \in U \mid u_j = 0\ \text{or}$\\
		\ \ \ \ \ \ \ \ \ \  $\exists i(a_{ij} = 1 \wedge \{ j' \in V \mid a_{ij'} = -1  \} \subseteq V \setminus U \wedge \sum_{j \in V}a_{ij}u_j = b_i) \}$\\
		\setcounter{AlgoLine}{11}
		\ShowLn $T \leftarrow (V \setminus U) \cup W$
	}
	\setcounter{AlgoLine}{12}
	\ShowLn output $\boldsymbol{u}$ and halt
\end{algorithm}
\LinesNumbered
Below, we show the correctness of Algorithm~\ref{alg:unit-Horn} and analyze its running time.

For the correctness of Algorithm~\ref{alg:unit-Horn},
we claim that
at the end of each iteration in the while loop,
(i) we have $u_j = s_{\min}(j)$ for all $j \in T$ and
(ii) we can reach from $\boldsymbol{s}$ to $\boldsymbol{u}$ in $G(R)$.
Then, since $T=V$ holds at the end of the algorithm, we know that the output $\boldsymbol{u}$ of the algorithm equals to $\boldsymbol{s}_{\min}$.
We show the claim by induction on the number of iterations in the while loop.

Consider the first iteration.
Let $\boldsymbol{u}_{\rm begin}$ (resp., $\boldsymbol{u}_{\rm end}$) be the vector $\boldsymbol{u}$ at the beginning (resp., end) of the iteration.
For (i),
assume otherwise that $u_{\rm end}(j) > s_{\min}(j)$ holds for some $j \in T$.
Consider a monotone $\boldsymbol{u}_{\rm begin}-\boldsymbol{s}_{\min}$ path $P$ in $G(R)$, which exists from Lemma~\ref{lem:Horn-monotone-path}.
Let $j \in T$ be the index such that the value of $x_j$ first decreases in path $P$ among the indices in $T$,
and let $\boldsymbol{v}$ in $P$ be the vector exactly before the value of $x_j$ decreases.
We show that such $j$ cannot exist.
Indeed, assume first that $j \in V \setminus U$.
Then there exists $i \in \{ 1, \dots, m \}$
such that $a_{ij} = 1$ and $\sum_{j'=1}^{n}a_{ij'}u_{\min}(j') = b_i$, since otherwise we can decrease the value of $x_j$ in $\boldsymbol{u}_{\min}$ without violating any inequality,
obtaining a contradiction that $\boldsymbol{u}_{\min}$ is a unique minimal solution.
Note that $a_{ij'} \le 0$ for $j' \neq j$ since $I$ is unit Horn.
Moreover, if $a_{ij'} = -1$ then $j' \in V \setminus U$ holds,
since otherwise we have $\sum_{j'=1}^{n}a_{ij'}u(j') < \sum_{j'=1}^{n}a_{ij'}u_{\min}(j') = b_i$, which contradicts that  $\boldsymbol{u}$ is a solution to $I$.
Therefore, we have $\sum_{j'=1}^{n}a_{ij'}u_{\min}(j') = \sum_{j' \in V \setminus U}a_{ij'}u_{\min}(j') + u_{\min}(j) = b_i$.
However, since $j \in T$ be the index such that the value of $x_j$ first decreases in path $P$ among the indices in $T$ and $U \setminus V \subseteq T$,
we have $\sum_{j' \in V \setminus U}a_{ij'}u_{\min}(j') + u_{\min}(j) = \sum_{j' \in V \setminus U}a_{ij'}v_{j'} + v_j = b_i$.
Therefore, we cannot decrease the value of $x_j$ in $\boldsymbol{v}$, obtaining a contradiction.
Assume next that $j \in W$.
Then by definition it follows that $u_{\rm end}(j) = 0$; or
there exists $i$ such that $a_{ij} = 1$, $\{ j' \in V \mid a_{ij'} = -1 \} \subseteq V \setminus U$ and $\sum_{j \in V}a_{ij}u_{\rm end}(j) = b_i$.
The former does not occur, since we cannot have $s_{\min}(j) < u_{\rm end}(j) = 0$.
The latter does not occur either,
since the latter implies that we have an inequality $x_j + \sum_{j' \in V \setminus U}a_{ij'}x_{j'} \ge b_i$ in the system
and that $u_{\rm end}(j) + \sum_{j' \in V \setminus U}a_{ij'}u_{\rm end}(j') = u_{\rm end}(j) + \sum_{j' \in V \setminus U}a_{ij'}s_{\min}(j') = b_i$,
since $u_{\rm end}(j') = s_{\min}(j')$ for $j' \in V \setminus U$.
Thus, we cannot have $s_{\min}(j) < u_{\rm end}(j)$.
Therefore, we have $u_{\rm end}(j) = s_{\min}(j)$ for all $j \in T$ and (i) is proven.

For (ii), let $P': \boldsymbol{u}_{\rm begin} = \boldsymbol{u}^0 \rightarrow \boldsymbol{u}^1
\rightarrow \dots \rightarrow \boldsymbol{u}^\ell = \boldsymbol{u}_{\min}$
be a monotone $\boldsymbol{u}_{\rm begin}$-$\boldsymbol{u}_{\min}$ path
in $G([\boldsymbol{u}_{\rm begin} - \boldsymbol{e}_{V\setminus T}, \boldsymbol{u}_{\rm begin}] \cap R)$,
which exists by Lemma~\ref{lem:Horn-monotone-path}.
We show that for each $1 \le q \le p-1$,
$P'_q:(\boldsymbol{u}^0 - q\boldsymbol{e}_U) \rightarrow (\boldsymbol{u}^1 - q\boldsymbol{e}_U)
\rightarrow \dots \rightarrow (\boldsymbol{u}^\ell- q\boldsymbol{e}_U)$ is
a $(\boldsymbol{u}_{\rm begin} - q\boldsymbol{e}_U)$-$(\boldsymbol{u}_{\rm begin} - (q+1)\boldsymbol{e}_U)$ path in $G(R)$.
Then we can reach from $\boldsymbol{u}_{\rm begin} (=\boldsymbol{s})$ to $(\boldsymbol{u}_{\rm begin} - \boldsymbol{e}_U)$,
from $(\boldsymbol{u}_{\rm begin} - \boldsymbol{e}_U)$ to $(\boldsymbol{u}_{\rm begin} - 2\boldsymbol{e}_U)$, $\cdots$,
from $(\boldsymbol{u}_{\rm begin} - (p-1)\boldsymbol{e}_U)$ to $(\boldsymbol{u}_{\rm begin} - p\boldsymbol{e}_U) = \boldsymbol{u}_{\rm end}$,
thus showing (ii).
To see that $P'_q$ is a path in $G(R)$ for each $q$,
we show that each vector in $P'_q$ is contained in $R$, that is, each vector in $P'_q$ is a solution to $I$.
Fix $q \in \{ 1,\dots, p-1 \}$ and $k \in \{1, \dots, \ell \}$, and consider vector $\boldsymbol{u}^k- q\boldsymbol{e}_U$.
Choose $i \in  \{ 1, \dots, m\}$ arbitrarily and consider the $i$-th inequality $\sum_{j=1}^{n}a_{ij}x_j\ge b_i$ of $I$.
We examine two cases: (a) $a_{ij} \ge 0$ for all $j \in U$, (b) $a_{ij} = -1$ for some $j \in U$.
For (a), since path $P'_q$ is monotone and $q \le  p-1$,
we have $\boldsymbol{u}^k- q\boldsymbol{e}_U \ge \boldsymbol{u}^\ell- q\boldsymbol{e}_U \ge \boldsymbol{u}^\ell- (p-1)\boldsymbol{e}_U = \boldsymbol{u}_{\rm end}$.
Since $\boldsymbol{u}_{\rm end}$ is a solution by definition, it satisfies the inequality.
Therefore, $\sum_{j=1}^{n}a_{ij}(u^k_j - qe_U(j)) \ge \sum_{j=1}^{n}a_{ij}u_{\rm end}(j)\ge b_i$ holds,
and thus $\boldsymbol{u}^k - q\boldsymbol{e}_U$ satisfies the inequality.
For (b), note that there exists at most one $j'$ with $a_{ij'} = 1$ since $I$ is unit Horn.
Therefore, we have $\sum_{j=1}^{n}a_{ij}(u^k_j - qe_U(j)) = \sum_{j=1}^{n}a_{ij}u^k_j - \sum_{j \in U:a_{ij} = 1} qe_U(j) + \sum_{j \in U:a_{ij} = -1} qe_U(j) \ge \sum_{j=1}^{n}a_{ij}u^k_j \ge b_i$,
where the last inequality holds since $\boldsymbol{u}^k$ is a solution to $I$.
Thus, $\boldsymbol{u}^k- q\boldsymbol{e}_U$ satisfies the inequality.
Since $i$ is arbitrary, $\boldsymbol{u}^k- q\boldsymbol{e}_U$ satisfies all the inequalities of $I$.
Thus, $\boldsymbol{u}^k- q\boldsymbol{e}_U$ is a solution to $I$.
Hence, (ii) is proven.

We then consider the $r$-th iteration of the algorithm, where $r \ge 2$.
Let $\boldsymbol{u}_{\rm begin}$ (resp., $\boldsymbol{u}_{\rm end}$) be the vector $\boldsymbol{u}$ at the beginning (resp., end) of the iteration.
For (i), assume otherwise that $u_j > s_{\min}(j)$ holds for some $j \in T$.
Consider a monotone $\boldsymbol{u}_{\rm begin}-\boldsymbol{s}_{\min}$ path $P$ in $R$, which exists since $\boldsymbol{u}_{\rm begin}$ and $\boldsymbol{s}$ are
in the same component from the inductive hypothesis for (ii).
Then since for the indices in $T$ at the beginning of the $r$-th iteration the values of
$\boldsymbol{u}_{\rm begin}$ and $\boldsymbol{s}_{\min}$ are the same by inductive hypothesis,
we can use the same argument as that to show (i) for the case of $r=1$.
We can also prove (ii) in a similar way as the proof for $r=1$.
Therefore, (i) and (ii) hold at the end of every iteration by induction.

Finally, we analyze the running time of Algorithm~\ref{alg:unit-Horn}.
Since $T$ increases at least one in each iteration of the while loop,
the number of iterations is at most $n$.
In each iteration of the while loop,
we can find a unique minimal solution $\boldsymbol{u}_{\min}$ in polynomial time
by using the pseudo-polynomial time algorithm for Horn systems in Subsection~\ref{subsec:Horn}.
Moreover, $p$ can be computed in polynomial time
since for each inequality $\sum_{j=1}^{n}a_{ij}x_j \ge b_i$
we can compute how much the left hand side changes if we change $\boldsymbol{x}$ to $\boldsymbol{x} -\boldsymbol{e}_U$.
Therefore, each iteration of the while loop can be done in polynomial time.
Hence, Algorithm~\ref{alg:unit-Horn} runs in polynomial time.
This completes the proof.
\end{proof}

We next show the following result.

\begin{proposition}\label{thm:UTVPI-P}
The reconfiguration problem of UTVPI ILS is polynomially solvable.
\end{proposition}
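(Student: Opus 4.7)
The plan is to mirror the strategy of Proposition~\ref{thm:UHorn-P} (Algorithm~\ref{alg:unit-Horn}) for unit Horn ILS, using the median-closure of TVPI (Proposition~\ref{prop:TVPI-median-closed}) and the $\boldsymbol{t}$-monotone paths of Lemma~\ref{lem:TVPI-monotone-path} in place of the corresponding Horn tools. First, I would apply Lemma~\ref{lem:polarity_change} to flip every column $j$ with $s_j < t_j$, producing an equivalent UTVPI instance $I'$ with solutions $\boldsymbol{s}', \boldsymbol{t}'$ satisfying $\boldsymbol{s}' \geq \boldsymbol{t}'$ componentwise; this preserves both the unit and the TVPI properties. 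The key observation is that on the box $[\boldsymbol{t}', \boldsymbol{s}']$ we have $\median(t'_j, x_j, y_j) = \min(x_j, y_j)$ whenever $x_j, y_j \geq t'_j$, so $\leq_{\boldsymbol{t}'}$ coincides with the componentwise order there, and by Proposition~\ref{prop:TVPI-median-closed} the feasible set of $I'$ restricted to $[\boldsymbol{t}', \boldsymbol{s}']$ is closed under componentwise minimum.

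By Lemma~\ref{lem:TVPI-monotone-path}, it then suffices to compute the unique $\boldsymbol{t}'$-minimal solution $\boldsymbol{x}^*$ of $I'$ in the component of $\boldsymbol{s}'$ (which lies in $[\boldsymbol{t}', \boldsymbol{s}']$) and return YES iff $\boldsymbol{x}^* = \boldsymbol{t}'$. I would compute $\boldsymbol{x}^*$ by a variant of Algorithm~\ref{alg:unit-Horn} starting from $\boldsymbol{u} := \boldsymbol{s}'$. Each outer iteration: (a) finds the unique minimum $\boldsymbol{u}_{\min}$ of $I'$ in the slab $[\boldsymbol{u} - \boldsymbol{e}_{V \setminus T}, \boldsymbol{u}]$ within the component of $\boldsymbol{u}$ by invoking the pseudo-polynomial algorithm of Proposition~\ref{cor:TVPI-pseudoP} on the restricted instance (whose effective domain bound is one, so the algorithm runs in polynomial time); (b) sets $U := \{j \in V \setminus T : u_j \neq u_{\min}(j)\}$, replaces $\boldsymbol{u}$ with $\boldsymbol{u} - p\boldsymbol{e}_U$ for the largest $p$ that keeps the vector feasible (easily read off from the slacks of the ${\rm O}(m)$ inequalities), and freezes into $T$ every coordinate that either hit $t'_j$ or became tight in a constraint whose remaining coefficients lie outside $U$.

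The main obstacle is to prove two correctness invariants at the end of each iteration: (i) every frozen coordinate already agrees with $\boldsymbol{x}^*$, and (ii) each intermediate vector $\boldsymbol{u} - q\boldsymbol{e}_U$ for $0 \leq q \leq p$ is feasible and lies in the component of $\boldsymbol{s}'$. Invariant (i) follows by a case analysis on the freezing criterion combined with transporting a $\boldsymbol{t}'$-monotone path from $\boldsymbol{u}$ to $\boldsymbol{x}^*$ using Lemma~\ref{lem:TVPI-monotone-path}, in direct analogy with the Horn argument. Invariant (ii) is the delicate point and is established by examining each UTVPI inequality $ax_i + bx_j \geq c$ with $a,b \in \{-1,0,+1\}$: when no index of $U$ carries a $+1$ coefficient, subtracting $\boldsymbol{e}_U$ cannot decrease the left-hand side; the remaining case is absorbed using the min-closure of the feasible set on $[\boldsymbol{t}', \boldsymbol{s}']$, lifted from the size-one slab to the full decrement $p$ by the same path-transport argument as in Proposition~\ref{thm:UHorn-P}. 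Since each outer iteration freezes at least one coordinate, the algorithm terminates in at most $n$ iterations and runs in polynomial time, yielding the theorem.
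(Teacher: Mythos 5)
Your proposal takes a genuinely different route from the paper, and the difference matters. The paper does not mirror Algorithm~\ref{alg:unit-Horn}; instead it proves Lemma~\ref{lem:UTVPI-connectivity-equivalece}: after normalising to $\boldsymbol{s} \ge \boldsymbol{t}$, set $\boldsymbol{u}^i := \max(\boldsymbol{s} - i\mathbf{1}, \boldsymbol{t})$ and show that reachability of $\boldsymbol{t}$ from $\boldsymbol{s}$ is equivalent to reachability of $\boldsymbol{u}^1$ from $\boldsymbol{u}^0$, a \emph{single} thickness-one check handed to Proposition~\ref{cor:TVPI-pseudoP}. There is no outer iteration, no set $U$, no maximal decrement $p$, no freezing set $W$, and no inductive invariants to maintain. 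The paper's argument transports a monotone $\boldsymbol{s}$-$\boldsymbol{t}$ path through $\max(\,\cdot\,,\boldsymbol{u}^1)$ and then descends iteratively through $\max(\,\cdot - \mathbf{1},\boldsymbol{t})$, all inside the proof of a lemma, so the algorithm itself is one line. Your plan identifies the same polynomiality lever (restricting the pseudo-polynomial TVPI routine to a box of thickness one), which is good, but wraps it in the far heavier unit-Horn scaffolding.

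The heavier scaffolding is where your proposal has a genuine gap. You assert that invariant (i) — that frozen coordinates agree with the unique $\boldsymbol{t}'$-minimal solution $\boldsymbol{x}^*$ — follows ``in direct analogy with the Horn argument.'' It does not. The Horn argument (in Proposition~\ref{thm:UHorn-P}) leans hard on the fact that a Horn row has at most one positive coefficient: when $j \in V\setminus U$ is blocked in the slab, the blocking row $x_j - x_{j'} \ge b_i$ must have $j' \notin U$ (otherwise $\boldsymbol{u}$ would already violate the row), so $j' \in T$, so $j'$ has not moved before $j$ on a monotone path to $\boldsymbol{x}^*$, so the row stays tight and $j$ cannot be the first coordinate in $T$ to decrease. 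For a UTVPI row $x_j + x_{j'} \ge b_i$ this step collapses: the blocking row at $\boldsymbol{u}_{\min}$ can perfectly well have $j' \in U$ (tightness at $\boldsymbol{u}_{\min}$ together with $u_{j'} = u_{\min}(j')+1$ gives slack one at $\boldsymbol{u}$, not infeasibility), so $j'$ need not be in $T$ yet, the row need not be tight at the vertex $\boldsymbol{v}$ just before $j$ decreases, and the contradiction evaporates. Your freezing rule inherits the same blind spot: ``tight in a constraint whose remaining coefficients lie outside $U$'' never fires on a $(+,+)$ row whose other endpoint is also in $U$, even though such a row, once tight, pins both coordinates. Similarly, the claim that the remaining case of invariant (ii) ``is absorbed using the min-closure\ldots by the same path-transport argument'' conflates two distinct sub-cases ($(+,+)$ rows with one or both endpoints in $U$, versus $(+,-)$ rows with both in $U$) that in the Horn proof are handled by different calculations (dominance by $\boldsymbol{u}_{\rm end}$ versus exact $\pm 1$ cancellation), and you would need to carry both out for UTVPI coefficients. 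None of these steps is a matter of rewording the Horn proof; each needs a new UTVPI-specific argument, and the proposal does not supply one. The paper's route via Lemma~\ref{lem:UTVPI-connectivity-equivalece} sidesteps all of this, which is precisely why it is the proof of record.
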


We first note that if $\boldsymbol{t}$ is reachable from $\boldsymbol{s}$,
then there exists a $\boldsymbol{t}$-monotone path by Lemma~\ref{lem:TVPI-monotone-path}.
Then, by replacing $x_j$ with $x_j'=d-x_j$ for each $j$ with $s_j < t_j$,
we obtain a monotone path from $\boldsymbol{s}'$ to $\boldsymbol{t}'$, where we note that $\boldsymbol{s}' \ge \boldsymbol{t}'$ holds.
Therefore, we assume that $\boldsymbol{s} \ge \boldsymbol{t}$ holds without loss of generality.

The following lemma is a key lemma to prove Proposition~\ref{thm:UTVPI-P}.
Let $p = |\boldsymbol{s} - \boldsymbol{t}|_{\infty}$, i.e., $p = \max_{j} |s_j - t_j|$.
Define $\boldsymbol{u}^i:= \max(\boldsymbol{s} - i\mathbf{1},\boldsymbol{t})$ for $0 \le i \le p$.
We note that $\boldsymbol{u}^0 = \boldsymbol{s}$ and $\boldsymbol{u}^p = \boldsymbol{t}$ hold.

\begin{lemma}\label{lem:UTVPI-connectivity-equivalece}
Let $I$ be an instance UTVPI system and $\boldsymbol{s},\boldsymbol{t}$ be solutions of $I$.
Then the following are equivalent:
\begin{description}
\item[(1)] $\boldsymbol{t}$ is reachable from $\boldsymbol{s}$ in $G(I)$.
\item[(2)] $\boldsymbol{u}^{i+1}$ is reachable from $\boldsymbol{u}^i$ in $G(I)$ for all $0 \le i \le p-1$.
\item[(3)] $\boldsymbol{u}^1$ is reachable from $\boldsymbol{u}^0$ in $G(I)$.
\end{description}
\end{lemma}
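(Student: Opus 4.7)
The plan is to arrange the three statements in a cycle: prove $(1) \Rightarrow (2)$, note that $(2) \Rightarrow (3)$ is just the specialization $i = 0$, and close the loop with $(3) \Rightarrow (2)$; the remaining implication $(2) \Rightarrow (1)$ is just concatenation of the paths $\boldsymbol{u}^0 \to \cdots \to \boldsymbol{u}^1 \to \cdots \to \boldsymbol{u}^p$. Before doing either real direction, I would record the key preliminary fact that every $\boldsymbol{u}^i$ is itself a feasible solution of $I$. This is a four-case check on the sign pattern of each UTVPI inequality $\epsilon_1 x_a + \epsilon_2 x_b \ge c$: when both coefficients share a sign, the chain $\boldsymbol{t} \le \boldsymbol{u}^i \le \boldsymbol{s}$ immediately gives the inequality from feasibility of $\boldsymbol{t}$ (both nonnegative) or of $\boldsymbol{s}$ (both nonpositive); in the mixed case, say $x_a - x_b \ge c$, split on whether $u^i_b$ equals $s_b - i$ or $t_b$ and observe that the restriction $|\epsilon_j| = 1$ makes the $-i$ offsets in the two variables cancel exactly. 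This is the first place the UTVPI hypothesis is genuinely needed.

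For $(1) \Rightarrow (2)$ I would use median closure. Invoke Lemma~\ref{lem:TVPI-monotone-path} to pick a $\boldsymbol{t}$-monotone $\boldsymbol{s}$-$\boldsymbol{t}$ path $\boldsymbol{s} = \boldsymbol{v}^0 \to \cdots \to \boldsymbol{v}^N = \boldsymbol{t}$, which is standard-order monotone since we have normalized to $\boldsymbol{s} \ge \boldsymbol{t}$, and define $\boldsymbol{w}^k := \median(\boldsymbol{u}^i, \boldsymbol{v}^k, \boldsymbol{u}^{i+1})$ componentwise. The chain $\boldsymbol{t} \le \boldsymbol{u}^{i+1} \le \boldsymbol{u}^i \le \boldsymbol{s}$ yields $\boldsymbol{w}^0 = \boldsymbol{u}^i$ and $\boldsymbol{w}^N = \boldsymbol{u}^{i+1}$; each $\boldsymbol{w}^k$ is feasible by Proposition~\ref{prop:TVPI-median-closed} (median of three solutions); and consecutive $\boldsymbol{w}^k$'s inherit Hamming-distance-at-most-one from the $\boldsymbol{v}^k$'s. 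Deleting duplicates produces the required $\boldsymbol{u}^i$-$\boldsymbol{u}^{i+1}$ path.

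For $(3) \Rightarrow (2)$ I would induct on $i$, using a shift construction. The base $i = 0$ is $(3)$ itself. For the inductive step, take a $\boldsymbol{u}^i$-monotone path $\boldsymbol{u}^{i-1} = \boldsymbol{w}^0 \to \cdots \to \boldsymbol{w}^L = \boldsymbol{u}^i$ from Lemma~\ref{lem:TVPI-monotone-path}; unfolding $x \le_{u^i_j} y \Leftrightarrow \median(u^i_j, x, y) = x$ together with $\boldsymbol{u}^{i-1} \ge \boldsymbol{u}^i$ forces this path to lie in the standard-order interval $[\boldsymbol{u}^i, \boldsymbol{u}^{i-1}]$. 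Now set $\boldsymbol{y}^k := \max(\boldsymbol{w}^k - \mathbf{1}, \boldsymbol{t})$. The componentwise identity $\max(\boldsymbol{u}^j - \mathbf{1}, \boldsymbol{t}) = \boldsymbol{u}^{j+1}$ gives the correct endpoints $\boldsymbol{y}^0 = \boldsymbol{u}^i$ and $\boldsymbol{y}^L = \boldsymbol{u}^{i+1}$; consecutive $\boldsymbol{y}^k$'s differ in at most one coordinate because the $\boldsymbol{w}^k$'s do; and feasibility of each $\boldsymbol{y}^k$ follows by repeating the UTVPI case analysis used for $\boldsymbol{u}^i$, now with $\boldsymbol{w}^k$ in place of $\boldsymbol{s}$.

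The main obstacle I anticipate is the feasibility check in the shift step: the fact that $\max(\boldsymbol{w} - \mathbf{1}, \boldsymbol{t})$ preserves feasibility relies crucially on coefficients being in $\{0, \pm 1\}$, since for a general TVPI constraint such as $x_a - 2 x_b \ge 0$ a unit shift can destroy feasibility by producing a discrepancy of $1$ between the two sides. This is precisely why the lemma is stated for UTVPI rather than for arbitrary TVPI. A secondary technicality is unpacking the componentwise poset $\le_{u^i_j}$ to confirm that the $\boldsymbol{u}^i$-monotone path used in the induction stays within the standard-order interval $[\boldsymbol{u}^i, \boldsymbol{u}^{i-1}]$, so that the shift operator is well defined and behaves compatibly with $\boldsymbol{t}$ along the entire path.
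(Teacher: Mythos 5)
Your proposal is correct and takes essentially the same route as the paper. The paper proves the cycle $(1) \Rightarrow (3) \Rightarrow (2) \Rightarrow (1)$; for $(1) \Rightarrow (3)$ it maps a monotone $\boldsymbol{s}$-$\boldsymbol{t}$ path through $\boldsymbol{s}^k \mapsto \max(\boldsymbol{s}^k, \boldsymbol{u}^1)$ and checks feasibility by the same three-way case split on the sign pattern of each UTVPI row; since $\boldsymbol{t} \le \boldsymbol{u}^1 \le \boldsymbol{s}$ and each $\boldsymbol{s}^k \in [\boldsymbol{t}, \boldsymbol{s}]$, that $\max$ is exactly $\median(\boldsymbol{u}^1, \boldsymbol{s}^k, \boldsymbol{s})$, so your $(1) \Rightarrow (2)$ via $\boldsymbol{w}^k := \median(\boldsymbol{u}^i, \boldsymbol{v}^k, \boldsymbol{u}^{i+1})$ is the same construction, just stated for general $i$ and justified by invoking Proposition~\ref{prop:TVPI-median-closed} rather than a hand check of row-level max/min closedness. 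Your $(3) \Rightarrow (2)$ shift $\boldsymbol{y}^k := \max(\boldsymbol{w}^k - \mathbf{1}, \boldsymbol{t})$ is identical to the paper's $\boldsymbol{u}^{i,k} := \max(\boldsymbol{u}^{i-1,k} - \mathbf{1}, \boldsymbol{t})$, including the same endpoint identity and the same three-case feasibility check. Two small differences of presentation: you prove the feasibility of every $\boldsymbol{u}^i$ up front as a standalone fact, whereas the paper establishes it implicitly through the $(3) \Rightarrow (2)$ induction; and you explicitly flag the two places where the $\{0, \pm 1\}$ coefficient restriction is load-bearing (the unit-shift feasibility check and the mixed-sign cancellation), which the paper uses but does not call out. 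Both are sound; nothing essential is missing.
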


\begin{proof}
We show the following chain of implications: (1) $\Rightarrow$ (3) $\Rightarrow$ (2) $\Rightarrow$ (1).
Note that (2) $\Rightarrow$ (1) is straightforward.
Hence, we show (1) $\Rightarrow$ (3) and (3) $\Rightarrow$ (2) in the following.

We first show (1) $\Rightarrow$ (3).
Since $\boldsymbol{t}$ is reachable from $\boldsymbol{s}$,
there exists an $\boldsymbol{s}$-$\boldsymbol{t}$ path in $G(I)$.
Let such a path be $\boldsymbol{s}=\boldsymbol{s}^0 \rightarrow \boldsymbol{s}^1 \rightarrow \dots \rightarrow \boldsymbol{s}^\ell = \boldsymbol{t}$, denoted by $P$.
From the discussion before the lemma,
we assume that $P$ is monotone.
Thus, $\boldsymbol{s} \ge \boldsymbol{s}^i \ge \boldsymbol{t}$ for all $0 \le i \le \ell$.
Now, consider a sequence $\max(\boldsymbol{s}^0,\boldsymbol{u}^1) \rightarrow \max(\boldsymbol{s}^1,\boldsymbol{u}^1) \rightarrow \dots \rightarrow \max(\boldsymbol{s}^\ell,\boldsymbol{u}^1)$.
We show that this sequence is indeed a path from $\boldsymbol{u}^0$ to $\boldsymbol{u}^1$ in $G(I)$, which implies that (3) holds.
First, we have $\max(\boldsymbol{s}^0,\boldsymbol{u}^1) =\max(\boldsymbol{s},\max(\boldsymbol{s}-\mathbf{1},\boldsymbol{t})) = \boldsymbol{s} =\boldsymbol{u}^0$ and
$\max(\boldsymbol{s}^\ell,\boldsymbol{u}^1) = \max(\boldsymbol{t},\max(\boldsymbol{s}-\mathbf{1},\boldsymbol{t})) = \max(\boldsymbol{s}-\mathbf{1},\boldsymbol{t}) = \boldsymbol{u}^1$.
Next, we have $\dist(\max(\boldsymbol{s}^i,\boldsymbol{u}^1), \max(\boldsymbol{s}^{i+1},\boldsymbol{u}^1)) \le 1$ for all $0 \le i \le \ell-1$,
since $\boldsymbol{s}^i$ and $\boldsymbol{s}^{i+1}$ differs at one position.
Finally, we show that $\max(\boldsymbol{s}^i,\boldsymbol{u}^1)$ is a solution to $I$ for all $0 \le i \le \ell$.
Fix $i \in \{ 0, \dots, \ell\}$.
We show that each inequality of $I$ is satisfied by $\max(\boldsymbol{s}^i,\boldsymbol{u}^1)$.
We have three types of inequalities to consider, namely, $x_q + x_r \ge c$, $x_q - x_r \ge c$, and $-x_q - x_r \ge c$.
Observe first that $x_q + x_r \ge c$ is satisfied by $\max(\boldsymbol{s}^i,\boldsymbol{u}^1)$,
since $\max(\boldsymbol{s}^i,\boldsymbol{u}^1) \ge \boldsymbol{t}$ and $\boldsymbol{t}$ satisfies the inequality.
Likewise, $-x_q - x_r \ge c$ is satisfied by $\max(\boldsymbol{s}^i,\boldsymbol{u}^1)$,
since $\max(\boldsymbol{s}^i,\boldsymbol{u}^1) \le \boldsymbol{s}$ and $\boldsymbol{s}$ satisfies the inequality.
For $x_q - x_r \ge c$,
we show that $\boldsymbol{u}^1$ satisfies this inequality.
Then $\max(\boldsymbol{s}^i,\boldsymbol{u}^1)_q +\max(\boldsymbol{s}^i,\boldsymbol{u}^1)_r \ge c$ follows since $x_q - x_r \ge c$ is closed under $\max$ operation.
To show that $\boldsymbol{u}^1 (= \max(\boldsymbol{s} - \mathbf{1},\boldsymbol{t}))$ satisfies $x_q - x_r \ge c$,
assume first that $s_r-1 \le t_r$ holds.
Then we have $u_q^1 -u_r^1 = \max(s_q-1,t_q) - \max(s_r-1,t_r) = \max(s_q-1,t_q) - t_r \ge t_q -t_r \ge c$.
Next, assume otherwise that $s_r-1 > t_r$ holds.
Then if $s_q-1 \ge t_q$, then we have $u_q^1 -u_r^1 = s_q-1 - (s_r-1) = s_q -s_r \ge c$,
and if $s_q-1 < t_q$, then we have $s_q = t_q$ since $s_q \ge t_q$ and thus $u_q^1 -u_r^1 = t_q - (s_r-1) = s_q - (s_r-1) \ge s_q - s_r \ge c$.
Therefore, $\boldsymbol{u}^1$ satisfies $x_q - x_r \ge c$.
This complete the proof for (1) $\Rightarrow$ (3).

We next show (3) $\Rightarrow$ (2).
We show that $\boldsymbol{u}^{i+1}$ is reachable from $\boldsymbol{u}^i$ in $G(I)$ for all $0 \le i \le p-1$ by induction on $i$.
For $i=0$, $\boldsymbol{u}^{1}$ is reachable from $\boldsymbol{u}^0$ by (3).
For $i>0$, assume that there exists a monotone $\boldsymbol{u}^{i-1}$-$\boldsymbol{u}^{i}$ path in $G(I)$.
Let such a path be $\boldsymbol{u}^{i-1}=\boldsymbol{u}^{i-1,0} \rightarrow \boldsymbol{u}^{i-1,1} \rightarrow \dots \rightarrow \boldsymbol{u}^{i-1,\ell} = \boldsymbol{u}^i$.
Define $\boldsymbol{u}^{i,k} := \max((\boldsymbol{u}^{i-1,k} - \mathbf{1}), \boldsymbol{t})$ for $0 \le k \le  \ell$.
We show that $\boldsymbol{u}^{i,0} \rightarrow \boldsymbol{u}^{i,1} \rightarrow \dots \rightarrow \boldsymbol{u}^{i,\ell}$ is a $\boldsymbol{u}^i$-$\boldsymbol{u}^{i+1}$path in $G(I)$.

First, we have
\begin{align*}
\boldsymbol{u}^{i,0} &= \max((\boldsymbol{u}^{i-1,0} - \mathbf{1}), \boldsymbol{t}) \\
&= \max((\boldsymbol{u}^{i-1} - \mathbf{1}), \boldsymbol{t}) \\
&= \max(\max(\boldsymbol{s}-(i-1)\mathbf{1},\boldsymbol{t}) - \mathbf{1}, \boldsymbol{t}) \\
&= \max(\max(\boldsymbol{s}-i\mathbf{1},\boldsymbol{t} - \mathbf{1}), \boldsymbol{t}) \\
&= \max((\boldsymbol{s}-i\mathbf{1}), \boldsymbol{t}) = \boldsymbol{u}^i
\end{align*}
 and, similarly,
$\boldsymbol{u}^{i,\ell} = \max((\boldsymbol{u}^{i-1,\ell} - \mathbf{1}), \boldsymbol{t}) = \max((\boldsymbol{u}^i - \mathbf{1}), \boldsymbol{t}) = \boldsymbol{u}^{i+1}$.

Next, it follows that $\dist(\boldsymbol{u}^{i,k}, \boldsymbol{u}^{i,k+1}) \le 1$ for all $0 \le k \le \ell-1$,
since $\boldsymbol{u}^{i-1,k}$ and $\boldsymbol{u}^{i-1,k+1}$ differs at one position.

Finally, we show that $\boldsymbol{u}^{i,k}$ is a solution to $I$ for all $0 \le k \le \ell$.
Fix $k \in \{ 0, \dots, \ell\}$.
We show that each inequality of $I$ is satisfied by $\boldsymbol{u}^{i,k}$.
We have three types of inequalities to consider, namely, $x_q + x_r \ge c$, $x_q - x_r \ge c$, and $-x_q - x_r \ge c$.
Observe first that $x_q + x_r \ge c$ is satisfied by $\boldsymbol{u}^{i,k}$, since $\boldsymbol{u}^{i,k} \ge \boldsymbol{t}$ and $\boldsymbol{t}$ satisfies the inequality.
Likewise, $-x_q - x_r \ge c$ is satisfied by $\boldsymbol{u}^{i,k}$, since $\boldsymbol{u}^{i,k} \le \boldsymbol{s}$ and $\boldsymbol{s}$ satisfies the inequality.
Here, $\boldsymbol{u}^{i,k} (= \max((\boldsymbol{u}^{i-1,k} - \mathbf{1}),\boldsymbol{t}) \le \boldsymbol{s}$ holds
since we have $\boldsymbol{t} \le \boldsymbol{s}$ and $\boldsymbol{u}^{i-1,k} \le \boldsymbol{u}^{i-1} = \max(\boldsymbol{s} - (i-1)\mathbf{1},\boldsymbol{t}) \le \boldsymbol{s}$ from monotonicity of the $\boldsymbol{u}^{i-1}$-$\boldsymbol{u}^{i}$ path.
For $x_q - x_r \ge c$, we want to show that $u^{i,k}_q - u^{i,k}_r \ge c$ holds.
By definition, we have $u^{i,k}_q - u^{i,k}_r = \max(u^{i-1,k}_q - 1, t_q) - \max(u^{i-1,k}_r - 1, t_r)$.
If $u^{i-1,k}_r - 1 \le t_r$,
then $\max(u^{i-1,k}_q - 1, t_q) - \max(u^{i-1,k}_r - 1, t_r) = \max(u^{i-1,k}_q - 1, t_q) - t_r \ge t_q - t_r \ge c$ holds.
If $u^{i-1,k}_r - 1 > t_r$,
then $\max(u^{i-1,k}_q - 1, t_q) - \max(u^{i-1,k}_r - 1, t_r) = \max(u^{i-1,k}_q - 1, t_q) - u^{i-1,k}_r - 1 \ge u^{i-1,k}_q - 1 - (u^{i-1,k}_r - 1) = u^{i-1,k}_q - u^{i-1,k}_r \ge c$ holds.
Therefore, we have $u^{i,k}_q - u^{i,k}_r \ge c$.
This completes the proof.
\end{proof}

\begin{proof}[Proof of Proposition~\ref{thm:UTVPI-P}]
From Lemma~\ref{lem:UTVPI-connectivity-equivalece},
it suffices to check if $\boldsymbol{u}^1$ is reachable from $\boldsymbol{u}^0$ in $G(I)$.
This can be done in polynomial time by using the pseudo-polynomial time algorithm in Proposition~\ref{cor:TVPI-pseudoP} for (unit) TVPI constraint with $D = \{0, 1\}$.
\end{proof}

We are now ready to show the main theorem of this subsection.

\begin{proof}[Proof of Theorem~\ref{thm:UILS-Z=1-P}]
We use Algorithm~\ref{alg:Z=1} but this time,
we use the polynomial time algorithms in Proposition~\ref{thm:UHorn-P} and \ref{thm:UTVPI-P}
to solve the problems in unit Horn and unit TVPI systems.
Then the theorem follows.
\end{proof}

Note that even for a unit monotone quadratic system (which is unit Horn and UTVPI at the same time),
the diameter of the solution graph is ${\rm \Theta}(dn)$ as Example~\ref{ex:Z=1-diameter-lower-bound} shows.
Therefore,
we obtain a polynomially solvable result for a reconfiguration problem in which the diameter of the solution graph can be exponential in the input size.

\subsection{Bounded number of variables}

In this subsection, we show the following theorem.

\begin{theorem}
If the number of variables is a fixed constant,
then $\aP(1)$ reconfiguration is solvable in polynomial time.
\end{theorem}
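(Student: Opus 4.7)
The plan is to follow the $QH$-decomposition framework of Algorithm~\ref{alg:Z=1}: first compute a $QH$-partition of the $n$ variables, and then solve reconfiguration on the Horn subsystem $I_H$ on variables $H$ and on the TVPI subsystem $I_Q$ on variables $Q$. Since $|H| + |Q| = n$ is a fixed constant, both subsystems live in a fixed-dimensional lattice, and the remaining task is to turn the pseudo-polynomial routines from Subsections~\ref{subsec:Horn} and \ref{subsec:TVPI} into ones that are polynomial in the full input size, in particular polynomial in $\log d$.

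For the Horn subroutine, I would follow Proposition~\ref{lem:Horn-pseudoP} but replace its unit-step greedy descent by a \emph{maximal-jump} descent. At the current vector $\boldsymbol{u}$, I pick a coordinate $j \in H$ and decrease $u_j$ by the largest $\Delta \ge 1$ such that $\boldsymbol{u} - \Delta \boldsymbol{e}_j$ is feasible; by the Horn structure, each constraint is monotone in $u_j$ when the other coordinates are held fixed, so this $\Delta$ is determined by the first constraint that becomes tight and is computable in $O(m)$ arithmetic operations. Iterating this until no further decrease is possible yields, by Lemma~\ref{lem:Horn-monotone-path}, the component-minimum $\boldsymbol{s}_{H,\min}$; I compare it with $\boldsymbol{t}_{H,\min}$ computed analogously to decide connectivity. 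For the TVPI subroutine the plan is analogous, using Lemma~\ref{lem:TVPI-monotone-path}: compute the unique $\boldsymbol{t}_Q$-minimum in the component of $\boldsymbol{s}_Q$ by greedy $\boldsymbol{t}_Q$-monotone descent with maximal jumps along each coordinate. Plugging the two resulting subroutines into Algorithm~\ref{alg:Z=1} then yields the claimed algorithm.

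The main obstacle is bounding the number of maximal jumps by a polynomial in $m$ (and not in $d$), since by Theorem~\ref{thm:Z=1-diameter-upper-and-lower-bound} the diameter of $G(I)$ can be $\Theta(dn)$ and a polynomial-time algorithm must take only polynomially many large jumps rather than pseudo-polynomially many unit steps. I expect to obtain such a bound by a combinatorial potential argument on tight constraints: each jump makes some constraint tight on the coordinate being decreased, and with $|H|$ (respectively $|Q|$) fixed, the combinatorial pattern of currently tight constraints takes only polynomially many distinct values before the descent stabilizes. If this combinatorial bound proves delicate, a backup plan is to invoke Lenstra's polynomial-time algorithm for integer programming in fixed dimension as a black-box subroutine for each maximal-jump computation, which immediately yields polynomial running time since the relevant optimization problems live in fixed dimension.
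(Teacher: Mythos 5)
Your greedy ``maximal-jump'' descent does not run in polynomial time, and the potential argument you sketch does not hold. Consider the monotone quadratic system with $n=2$ fixed,
\begin{equation*}
x_1 - x_2 \ge 0, \qquad x_2 - x_1 \ge -1,
\end{equation*}
starting at $\boldsymbol{s}=(d,d)$ with the unique minimal solution $(0,0)$. From $(d,d)$ the only feasible single-coordinate decrease is $x_2 \mapsto d-1$, a jump of size~$1$; from $(d,d-1)$ the only feasible decrease is $x_1 \mapsto d-1$, again a jump of size~$1$; and so on. Every maximal jump has size exactly~$1$, so the descent takes $2d$ steps even though $n=2$ is fixed. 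Your intended potential---that the pattern of tight constraints takes only polynomially many values---is true in the trivial sense that there are finitely many patterns, but the patterns \emph{recur}: in the example the two constraints alternate being tight forever, so no monotonicity is available. Your backup plan (using Lenstra's algorithm to \emph{compute} each maximal jump) is beside the point, since the bottleneck is the \emph{number} of jumps, not the cost per jump. This is exactly the phenomenon behind Theorem~\ref{thm:Z=1-diameter-upper-and-lower-bound}: the monotone path really can have length $\Theta(dn)$, and any polynomial algorithm must avoid simulating it coordinate step by coordinate step (the unit case, Algorithm~\ref{alg:unit-Horn}, evades it by decreasing several coordinates simultaneously, a move your one-coordinate-at-a-time descent cannot make).

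The paper's proof follows a genuinely different route and does not traverse the solution graph at all. It re-uses the coNP certificate of Subsection~\ref{subsec:coNP-completeness}: the instance is a ``no'' instance iff there exists a vector $\boldsymbol{w}$ satisfying the conditions $C_0$--$C_3$. For each fixed guess of which inequality certifies the local minimality of $\boldsymbol{w}$ in each coordinate (an $m^n$-size enumeration, polynomial when $n$ is constant) and which coordinate witnesses $C_3$, the existence of such a $\boldsymbol{w}$ is the feasibility of a fixed-dimension integer linear system, decidable in polynomial time by Lenstra's algorithm. This ``guess the certificate structure and feasibility-check'' scheme sidesteps path length entirely, which is the missing idea in your proposal.
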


\begin{proof}
To solve the reconfiguration problem of ILS(1),
		we determine if there exists a certificate for disconnectivity, which is used to show that the reconfiguration problem of ILS(1) is in coNP in Subsection~\ref{subsec:coNP-completeness}.
		
		Note that an instance is a no instance if and only if the exists a vector $\boldsymbol{w}$ satisfying the following conditions.\\
		$C_0$: $\boldsymbol{w} \in R$, \\
		$C_1$: $\boldsymbol{w} \le_{QH} \boldsymbol{s}$, \\
		$C_2$: $\boldsymbol{w}$ is locally $QH$-minimal, i.e.,
		$\boldsymbol{w} - \boldsymbol{e_j} \not \in R$ for any $j \in H$ and $\boldsymbol{w} - \sigma_j\boldsymbol{e_j} \not \in R$ for any $j \in Q \setminus \{ \ell \mid w_\ell=t_\ell \}$,
		where $\sigma_j = 1$ if $w_j > t_j$ and $-1$ if $w_j < t_j$, and \\
		$C_3$: there exists $j \in \{ 1, \dots, n \}$ such that $w_j >_{QH} t_j$.
		
		Note that $\boldsymbol{w} - \boldsymbol{e_j} \not \in R$ if and only if
		there exists an inequality $A_{i.} \boldsymbol{x} \ge b_i$ in $Ax \ge b$ such that
		$A_{i.} (\boldsymbol{w} - \boldsymbol{e_j}) < b_i$, i.e., $A_{i.} \boldsymbol{w} < A_{ij} + b_i$.
		
		To find a vector satisfying conditions $C_i$ ($0 \le i \le 3$),
		we guess for each $j$ the inequality that $\boldsymbol{w} - \boldsymbol{e_j}$ violates.
		We also guess which coordinate $j$ satisfies Condition $C_3$.
		We summarize our algorithm in Algorithm~\ref{alg:Z=1-bounded-number-of-variables_conf}.
		
		\begin{algorithm}
			\caption{Solving the reconfiguration problem of ILS(1) $I=(A,\boldsymbol{b},d)$ with fixed number of variables}
			\label{alg:Z=1-bounded-number-of-variables_conf}
			\For{$(i_1,\dots,i_n) \in \{1, \dots, m \}^n$ {\bf and} $j \in \{ 1, \dots, n \}$}{
				\If{there exists an integer vector $\boldsymbol{w}$ satisfying the following inequalities
					\begin{equation}\label{eq:certificate_conf}
					\left\{\begin{array}{ll}
					A\boldsymbol{w} \ge \boldsymbol{b}&\\
					\boldsymbol{w} \le_{QH} \boldsymbol{s}&\\
					A_{i_k.} \boldsymbol{w} < A_{i_k,k} + b_{i_k}& \text{{\rm (for all $k = 1, \dots, n$)}}\\
					w_j \ge_{QH} t_j + 1.&
					\end{array}
					\right.
					\end{equation}
				}{output ``NO'' and halt}
			}
			output ``YES'' and halt
		\end{algorithm}
		
		Note that Eq.~\eqref{eq:certificate_conf} is a feasibility problem of linear inequalities with
		strict inequalities.
		However, we can replace $A_{i_k.} \boldsymbol{w} < A_{i_k,k} + b_{i_k}$ with $A_{i_k.} \boldsymbol{w} \le  A_{i_k,k} + b_{i_k} - \varepsilon$ for
		sufficiently small $\varepsilon >0$ without changing the feasibility.
		Therefore, we can check the feasibility of Eq.~\eqref{eq:certificate_conf} by solving the feasibility problem of an ILS.
		This can be done in polynomial time using the polynomial time algorithm for ILS with fixed number of variables~\cite{Len83}.
		Moreover, the number of loops in Algorithm~\ref{alg:Z=1-bounded-number-of-variables_conf} is
		polynomial in $m$ if the number of variable $n$ is a fixed constant.
		Therefore, the reconfiguration problem is solvable in polynomial time.
\end{proof}

\section{The case of $Z(I) < 1$}
\label{sec:Z<1}

In this section, we consider the case of $Z(I)<1$ and show the following theorem.

\begin{theorem}\label{thm:Z<1}
	The reconfiguration problem of $\aP(\gamma)$ is always yes for any $\gamma < 1$.
\end{theorem}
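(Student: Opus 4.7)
My plan is to prove the stronger statement that the solution graph $G(I)$ is connected for every ILS $I$ with $Z(I) < 1$. Theorem~\ref{thm:Z<1} follows immediately: since there exists a path in $G(I)$ between any pair of feasible solutions, every instance of the reconfiguration problem is trivially a YES-instance. The approach is to exhibit a canonical feasible solution $\boldsymbol{x}^* \in D^n$ and show that every feasible $\boldsymbol{x}$ can be transformed into $\boldsymbol{x}^*$ by a monotone sequence of single-coordinate changes.

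I will begin by taking an optimal LP solution $\alpha^* \in [0,1]^n$ of~\eqref{LP} of value $Z(I) < 1$, and applying Lemma~\ref{lem:polarity_change} to negate the columns $j$ with large $\alpha_j^*$. After this polarity normalization, a reformulation of the structural characterization of $\aP(<1)$ from~\cite{KiM16} shows that the normalized coefficient matrix admits a canonical monotone structure: each row has its nonzero coefficients arranged so that every variable has a single ``preferred direction'' (increase or decrease) along which every inequality's LHS weakly increases. The strict inequality $Z(I) < 1$ is what rules out the mixed-sign ``tight'' rows that would obstruct such a monotone structure.

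Given this structure, I define $\boldsymbol{x}^*$ by pushing each coordinate as far as possible in its preferred direction consistent with the known feasibility of $I$ (witnessed, say, by $\boldsymbol{s}$); since monotone moves in the preferred direction only increase the LHS of each inequality, $\boldsymbol{x}^*$ is feasible. For any feasible $\boldsymbol{x}$, I construct an $\boldsymbol{x}$-$\boldsymbol{x}^*$ path in $G(I)$ by sweeping each coordinate one unit at a time toward its canonical value. Every intermediate vector remains feasible by the monotone sign pattern. Pulling back through the polarity changes via Lemma~\ref{lem:polarity_change}, we conclude that in the original $G(I)$, every feasible solution is connected to a common canonical solution, and in particular $\boldsymbol{s}$ and $\boldsymbol{t}$ lie in the same component.

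The main obstacle is establishing the structural reformulation precisely: quantitatively extracting from the strict LP slack $Z(I) < 1$ the existence of a preferred direction for each variable such that canonical $\boldsymbol{x}^*$ is feasible and unit moves toward it always preserve feasibility. The strict inequality is essential; at $Z(I) = 1$ the analogous structure (the $QH$-partition of Subsection~\ref{subsec:ZIisone_general-property}) is not strong enough to guarantee connectivity of $G(I)$, as witnessed by the coNP-completeness of the $\aP(1)$ reconfiguration problem (Theorem~\ref{thm:Z=1,WcoNP}).
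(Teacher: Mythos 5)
Your high-level plan (exhibit a canonical feasible solution $\boldsymbol{x}^*$ and connect every feasible solution to it via a coordinate-by-coordinate sweep) is exactly right, and the use of Lemma~\ref{lem:polarity_change} for polarity normalization also matches the paper. But the structural claim on which your sweep rests is false, and the gap is substantial.

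You assert that after polarity normalization, ``every variable has a single `preferred direction' (increase or decrease) along which every inequality's LHS weakly increases,'' i.e., that each variable occurs with a single sign throughout the normalized matrix, so that the coordinate sweep toward $\boldsymbol{x}^*$ is feasibility-preserving regardless of the order of coordinates. This is not what $Z(I)<1$ gives. Take the two-inequality system $x_1 \ge 0$, $-x_1 + x_2 \ge 0$: here $Z(I)=1/2$ (set $\alpha_1 = 1/2$, $\alpha_2=0$), yet $x_1$ appears with coefficient $+1$ in one row and $-1$ in another, so increasing $x_1$ helps one inequality and hurts the other; no sign normalization of the columns removes this. Concretely, starting from $\boldsymbol{s}=(5,5)$ with unique minimal solution $\boldsymbol{x}^*=(0,0)$, you cannot lower $x_2$ before lowering $x_1$. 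The sweep must be carried out in a specific order of coordinates, and your proposal never produces one.

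What $Z(I)<1$ actually gives (Lemma 3 of~\cite{KiM16}, restated in the paper as an elimination ordering and property (P1) of Lemma~\ref{lem:Z<1-sign-pattern}) is the weaker statement that, after polarity normalization and a reordering of the columns, in each row the unique positive entry (if any) is the \emph{last} nonzero entry. This makes the matrix Horn, so the unique minimal solution $\boldsymbol{x}^*$ exists, but crucially one must change coordinates in the elimination order. The feasibility of each intermediate vector then splits into two cases: if the current coordinate has a positive coefficient in a row, all later entries of that row are zero, so the row's value already equals that of $\boldsymbol{x}^*$; if it has a nonpositive coefficient, decreasing that coordinate toward $x^*_k$ only increases the LHS relative to the previous vector, which is feasible by induction. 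Your proposal conflates this ordered, row-local monotonicity with a global, order-free monotonicity per variable, and the latter simply does not hold for $Z(I)<1$.
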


From Lemma 3 in \cite{KiM16}, the input matrix $A$ admits an elimination ordering if $Z(I) < 1$.
Here, an \emph{elimination ordering} is a linear ordering of the columns of $A$ such that
$A$ becomes empty by repeatedly eliminating the columns $j$ that satisfies one of the following conditions.
\begin{description}
	\item[(i)]
	$a_{ij} > 0$ implies $a_{ij'} = 0$ with $j' \neq j$ for all $i=1, \dots , m$.
	\item[(ii)]
	$a_{ij} < 0$ implies $a_{ij'} = 0$ with $j' \neq j$ for all $i=1, \dots , m$.
\end{description}
From this property, we show the following useful lemma.

\begin{lemma}\label{lem:Z<1-sign-pattern}
	Let $I=(A,\boldsymbol{b},d)$ be an ILS with $Z(I) < 1$.
	Then the reconfiguration problem of $I$ can be reduced to
	that of $I'=(A',\boldsymbol{b}',d)$ such that
	$A'$ has the following property: 
	(P1) for any $i = 1, \dots, m$,  $A'_{ij} > 0$ implies that (a) $A'_{ij'} \leq 0$ for $j'<j$ and (b) $A'_{ij'}=0$ for $j'>j$.
\end{lemma}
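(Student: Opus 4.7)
The plan is to combine the existence of an elimination ordering (guaranteed by $Z(I)<1$ via Lemma~3 of~\cite{KiM16}) with the polarity-change reduction of Lemma~\ref{lem:polarity_change}. After relabeling variables (which trivially preserves reachability), I may assume the columns $1, 2, \ldots, n$ of $A$ are already arranged in elimination order; that is, for every $k$, after removing columns $1, \ldots, k-1$, column $k$ satisfies condition (i) or condition (ii) in the remaining submatrix.

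Next, I process the columns in order: whenever column $k$ satisfies condition (ii) at its time of elimination, I apply Lemma~\ref{lem:polarity_change} to replace $A_{.k}$ by $-A_{.k}$ and update $\boldsymbol{b}$ to $\boldsymbol{b} - dA_{.k}$. After negation, the signs in column $k$ are flipped, and so it now satisfies condition (i) in the same submatrix. The key observation is that negating a single column alters only the signs within that column and preserves every zero entry; hence the elimination conditions for all other columns, which depend only on the sign patterns in their respective submatrices, are not disturbed. Iterating over all columns yields a matrix $A'$ in which every column $j$ satisfies condition (i) with respect to the submatrix on columns $\{j, j+1, \ldots, n\}$, and by iterated application of Lemma~\ref{lem:polarity_change}, the reconfiguration problem for $I$ reduces to that for $I' = (A', \boldsymbol{b}', d)$ for a suitable $\boldsymbol{b}'$.

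Finally, I verify (P1) for $A'$. Fix $i$ and $j$ with $A'_{ij} > 0$. Condition (i) for column $j$ on the submatrix $\{j, j+1, \ldots, n\}$ directly gives $A'_{ij'} = 0$ for all $j' > j$, which is (b). For (a), suppose toward contradiction that $A'_{ij'} > 0$ for some $j' < j$. Condition (i) for column $j'$ on the submatrix $\{j', j'+1, \ldots, n\}$, which contains column $j$, then forces $A'_{ij} = 0$, contradicting $A'_{ij} > 0$. Hence $A'_{ij'} \leq 0$ for every $j' < j$, as required.

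The only subtle point is ensuring that the iterative column negations do not disturb previously established elimination conditions. This reduces to the observation that negating one column cannot turn a zero entry into a nonzero entry elsewhere, and does not alter the sign pattern of any other column; since the elimination conditions are purely sign-pattern properties, they are preserved throughout the construction. Everything else is a routine verification using the reductions already established in Section~\ref{sec:preliminaries}.
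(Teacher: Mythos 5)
Your proof is correct and follows essentially the same approach as the paper: relabel columns to assume the elimination ordering is $(1,\dots,n)$, negate the columns eliminated by condition (ii) via iterated application of Lemma~\ref{lem:polarity_change}, observe that this preserves the elimination structure, and then read off (P1)(a) by contradiction and (P1)(b) directly from condition (i). The only (welcome) difference is that you make explicit the bookkeeping check that negating one column does not disturb the elimination conditions of the others, which the paper leaves implicit.
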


\begin{proof}
	By rearranging the column vectors, we assume that $A$ admits an elimination ordering $(1, \dots , n)$.
	Let $A'$ be a matrix obtained from $A$
	by replacing the $j$-th columns of $A$ with their opposite vectors for all $j$ eliminated by (ii) in the definition of elimination ordering.
	Then the reconfiguration problem of $I$ can be reduced to that of $I' = (A', \boldsymbol{b} - dA_{.j}, d)$ by Lemma~\ref{lem:polarity_change} in Section~\ref{sec:preliminaries}.
	Moreover, $A'$ admits an elimination ordering $(1, \dots , n)$ such that all columns are eliminated by (i) in the definition of elimination ordering.
	We now show that $A'$ satisfies the property (P1).
	For (a), assume that $A'_{ij'} > 0$ holds for some $j'<j$.
	However, when $j'$ is eliminated, we have $A_{ij''} = 0$ for remaining $j''$s including $j$, which contradict $A_{ij} > 0$.
	Hence, (a) holds.
	(b) follows directly from the definition of elimination ordering.
	This completes the proof.
\end{proof}

To better appreciate the above lemma,
consider an ILS $I=(A,\boldsymbol{b},d)$ where all elements of $A$ are positive.
Then clearly $Z(I) = 0$ and, in fact, each column is eliminated by the condition (ii), which is vacantly holds for each column.
Moreover, let $A'$ be a matrix obtained from $A$ by replacing every column of $A$ with its opposite vector.
Then the elements of $A'$ are all negative, and (P1) vacantly holds for $A'$.

From Lemma~\ref{lem:Z<1-sign-pattern}, we assume that the input matrix $A$ has property (P1) in the following lemma.
In particular, $A$ is a Horn matrix since each row of it has at most one positive element.
Therefore, the feasible solutions of an ILS with input matrix $A$ has
a unique minimal solution from Lemma~\ref{lem:Horn-min-closed} in Subsection~\ref{subsec:basic}.
Now, we show the following lemma.

\begin{lemma}\label{lem:Z<1-connectivity}
	Let $I$ be an instance of ILS which has at least one feasible solution, and with $Z(I) < 1$.
	Let $\boldsymbol{x}^*$ be a unique minimal solution to $I$ and $\boldsymbol{s}$ be any feasible solution to $I$.
	Then, there exists a path from $\boldsymbol{s}$ to $\boldsymbol{x}^*$ on $G(I)$.
	Consequently, $G(I)$ is a connected graph.
\end{lemma}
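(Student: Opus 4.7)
The plan is to produce, for every feasible solution $\boldsymbol{s}$, an explicit monotone walk $\boldsymbol{s} = \boldsymbol{v}^0 \to \boldsymbol{v}^1 \to \dots \to \boldsymbol{v}^\ell = \boldsymbol{x}^*$ in $G(I)$ in which each step decreases exactly one coordinate by one. Note that property (P1) makes $A$ Horn (at most one positive entry per row), so by Lemma~\ref{lem:Horn-min-closed} the feasible set has a unique minimum and this minimum is necessarily the $\boldsymbol{x}^*$ in the statement. Thus the walk terminates at the right vertex if it terminates at all.

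The entire proof reduces to the following one-step claim: if $\boldsymbol{v}$ is feasible and $\boldsymbol{v} \ne \boldsymbol{x}^*$, then setting $j^{\ast} := \min\{\,j : v_j > x^*_j\,\}$ and $\boldsymbol{v}' := \boldsymbol{v} - \boldsymbol{e}_{j^{\ast}}$, the vector $\boldsymbol{v}'$ is again feasible. Granting this, $\|\boldsymbol{v}'-\boldsymbol{x}^*\|_1 = \|\boldsymbol{v}-\boldsymbol{x}^*\|_1-1$, so iterating starting at $\boldsymbol{s}$ reaches $\boldsymbol{x}^*$ in $\|\boldsymbol{s}-\boldsymbol{x}^*\|_1$ steps along consecutive neighbors in $G(I)$.

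To prove the claim I would check row by row that $A\boldsymbol{v}' \ge \boldsymbol{b}$ and that $v'_{j^{\ast}} \in D$. The domain bound is immediate, since $v'_{j^{\ast}} = v_{j^{\ast}}-1 \ge x^*_{j^{\ast}} \ge 0$ and $v'_{j^{\ast}} \le d$. For a row $i$ with $a_{i,j^{\ast}} \le 0$ the left-hand side of $a_i^{\top}\boldsymbol{v} \ge b_i$ does not decrease under the update, so the constraint is preserved. The only nontrivial case is $a_{i,j^{\ast}} > 0$: property (P1) then forces $a_{ij'} \le 0$ for $j'<j^{\ast}$ and $a_{ij'}=0$ for $j'>j^{\ast}$, while the minimality of $j^{\ast}$ gives $v_{j'} = x^*_{j'}$ for every $j' < j^{\ast}$. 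Combining these facts with $v_{j^{\ast}}-1 \ge x^*_{j^{\ast}}$ (integrality) and feasibility of $\boldsymbol{x}^*$ in row $i$ yields
\[
\sum_{j' < j^{\ast}} a_{ij'}\,v'_{j'} + a_{i,j^{\ast}}\,v'_{j^{\ast}} \;=\; \sum_{j' < j^{\ast}} a_{ij'}\,x^*_{j'} + a_{i,j^{\ast}}\,(v_{j^{\ast}}-1) \;\ge\; \sum_{j' \le j^{\ast}} a_{ij'}\,x^*_{j'} \;\ge\; b_i,
\]
so row $i$ remains satisfied.

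The main delicacy is the choice of $j^{\ast}$. Picking the \emph{smallest} differing coordinate is exactly what allows us, via (P1)(a), to assert $v_{j'} = x^*_{j'}$ for every $j'<j^{\ast}$ occurring with nonpositive coefficient in the critical positive row; this is the lever that closes the row-wise estimate above. A more naive choice (e.g.\ the largest differing coordinate, or an arbitrary one) fails because the values $v_{j'}$ for $j'<j$ could exceed $x^*_{j'}$ while contributing negatively, and the bookkeeping would not go through. Once this one-step claim is in place, iterating it from $\boldsymbol{s}$ produces the desired path to $\boldsymbol{x}^*$, and since any vertex is joined to $\boldsymbol{x}^*$, the graph $G(I)$ is connected.
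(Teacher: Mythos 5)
Your proof is correct and follows essentially the same approach as the paper: process the coordinates in the elimination order guaranteed by (P1), observing that in a row with $a_{ij^*}>0$ property (P1) kills all later coefficients and makes the earlier ones nonpositive, so that having already driven the earlier coordinates down to $\boldsymbol{x}^*$ is exactly what makes the positive row survive, while rows with $a_{ij^*}\le 0$ can only improve. The one difference is step size: the paper sets coordinate $k$ directly to $x^*_k$ (still a single edge in $G(I)$, since the Hamming distance between $\boldsymbol{s}^{k-1}$ and $\boldsymbol{s}^k$ is at most one), giving a path of length at most $n$, whereas you decrement by one unit at a time, giving a path of length $\|\boldsymbol{s}-\boldsymbol{x}^*\|_1 \le dn$. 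Both prove the lemma, but the paper's shorter path is what feeds Corollary~\ref{cor:Z<1-diameter}'s ${\rm O}(n)$ diameter bound; your version alone would only yield ${\rm O}(dn)$ there. Your argument also implicitly uses integrality (to get $v_{j^*}-1\ge x^*_{j^*}$), which is available but which the paper's jump avoids needing.
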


\begin{proof}
	Recall that $A$ is assumed to have the property (P1) in Lemma~\ref{lem:Z<1-sign-pattern}.
	Let $\boldsymbol{x}^* = (x^*_1, \dots, x^*_{n})$ be a unique minimal solution to $I$, and $\boldsymbol{s} = (s_1, \dots, s_{n})$ be any feasible solution to $I$.
	For each $k = 0, \dots, n$, we define $\boldsymbol{s}^k := (x^*_1, \dots, x^*_{k}, s_{k+1}, \dots, s_n)$, that is,
	$\boldsymbol{s}^k = (s^k_1, \dots, s^k_n)$ such that
	\begin{equation}
	s^k_i = \begin{cases}
	x^*_i & i \le k\\
	s_i & i > k.
	\end{cases}
	\end{equation}
		If we can show that $\boldsymbol{s}^k$ is a feasible solution to $I$ for each $k = 0, \dots, n$,
		then we immediately have that $\boldsymbol{s}$ and $\boldsymbol{x}^*$ are connected via a path
		$\boldsymbol{s} = \boldsymbol{s}^0 \rightarrow \boldsymbol{s}^1
		\rightarrow \dots \rightarrow \boldsymbol{s}^n = \boldsymbol{x}^*$.

		In the rest of this proof, we show the fact by induction on $k$.
		For $k=0$, $\boldsymbol{s}^0 (= \boldsymbol{s})$ is a feasible solution to $I$ by assumption.
		For $k>0$, assume that $\boldsymbol{s}^{k-1}$ is a feasible solution to $I$.
		We then show that each inequality of $I$ is satisfied by $\boldsymbol{s}^{k}$.
		Consider the $i$-th inequality $\sum_{j=1}^n a_{ij}x_j \geq b_i$ of $I$.
		There are following three cases:
		\begin{description}
			\item{\textbf{Case 1:} $a_{ik} > 0$}\\
			In this case, $\sum_{j=1}^n a_{ij}x_j = \sum_{j=1}^k a_{ij}x_j$ holds from property (P1).
			Therefore, $\sum_{j=1}^n a_{ij}s^k_j = \sum_{j=1}^k a_{ij}s^k_j = \sum_{j=1}^k a_{ij}x^*_j \geq b_i$ holds.
			Thus,
			$\boldsymbol{s}^{k}$ satisfies the $i$-th inequality of $I$.
			\item{\textbf{Case 2:} $a_{ik} < 0$}\\
			In this case, $\sum_{j=1}^n a_{ij}s^k_j \geq \sum_{j=1}^n a_{ij}s^{k-1}_j$ holds.
			This is because $\boldsymbol{s}^{k}$ and $\boldsymbol{s}^{k-1}$ differ only on variable $x_k$,
			and $(a_{ik}s^k_k = )a_{ik}x^*_k \geq a_{ik}s_k (= a_{ik}s^{k-1}_k)$ holds
			since we have $x^*_k \leq s_k$ (by unique minimality of $\boldsymbol{x}^*$) and $a_{ik} < 0$.
			From the inductive assumption $\boldsymbol{s}^{k-1}$ is a feasible solution to $I$,
			which implies that $\sum_{j=1}^n a_{ij}s^k_j \geq \sum_{j=1}^n a_{ij}s^{k-1}_j \geq b_i$ holds.
			Therefore, $\boldsymbol{s}^{k}$ satisfies the $i$-th inequality of $I$.
			\item{\textbf{Case 3:} $a_{ik} = 0$}\\
			In this case, since $\boldsymbol{s}^{k}$ and $\boldsymbol{s}^{k-1}$ differ only on variable $x_k$,
			we have $\sum_{j=1}^n a_{ij}s^k_j = \sum_{j=1}^n a_{ij}s^{k-1}_j \geq b_i$.
		\end{description}
		Hence, $\boldsymbol{s}^{k}$ satisfies the $i$-th inequality of $I$ in all cases.
		Since $i$ is arbitrary, $\boldsymbol{s}^{k}$ is a feasible solution to $I$.
		This completes the proof.
\end{proof}

\if0
\begin{corollary}
	The reconfiguration problem of $\aP(1-\varepsilon)$ is trivial for any $\varepsilon > 0$.
\end{corollary}
\fi

Now we are ready to show Theorem~\ref{thm:Z<1}.

\begin{proof}[Proof of Theorem~\ref{thm:Z<1}.]
	From Lemma~\ref{lem:Z<1-connectivity},
	$G(I)$ is a connected graph.
	Therefore, the reconfiguration problem is always yes.
\end{proof}

\begin{corollary}\label{cor:Z<1-diameter}
	The diameter of $G(I)$ is ${\rm O}(n)$ if $Z(I) < 1$.
\end{corollary}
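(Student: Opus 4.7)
The plan is to read off the corollary directly from the construction already carried out in the proof of Lemma~\ref{lem:Z<1-connectivity}. That lemma, assuming the matrix has been put into the form (P1) via Lemma~\ref{lem:Z<1-sign-pattern}, builds, for any feasible solution $\boldsymbol{s}$ and the unique minimal solution $\boldsymbol{x}^*$, the explicit sequence $\boldsymbol{s}=\boldsymbol{s}^0, \boldsymbol{s}^1, \dots, \boldsymbol{s}^n=\boldsymbol{x}^*$ in which $\boldsymbol{s}^k$ is obtained from $\boldsymbol{s}$ by overwriting its first $k$ coordinates with the corresponding coordinates of $\boldsymbol{x}^*$. Consecutive elements $\boldsymbol{s}^{k-1}$ and $\boldsymbol{s}^k$ differ only (possibly) in the $k$-th coordinate, and both are proven feasible, so after deleting any repetitions the sequence is a walk in $G(I)$ of length at most $n$.

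Consequently $\dist_{G(I)}(\boldsymbol{s},\boldsymbol{x}^*)\leq n$ for every feasible $\boldsymbol{s}$. For an arbitrary pair of feasible solutions $\boldsymbol{s},\boldsymbol{t}$, routing through $\boldsymbol{x}^*$ and applying the triangle inequality in $G(I)$ gives
\begin{equation*}
\dist_{G(I)}(\boldsymbol{s},\boldsymbol{t})\;\leq\;\dist_{G(I)}(\boldsymbol{s},\boldsymbol{x}^*)+\dist_{G(I)}(\boldsymbol{x}^*,\boldsymbol{t})\;\leq\;2n,
\end{equation*}
so the diameter of $G(I)$ is $\mathrm{O}(n)$. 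Combined with the trivial $\Omega(n)$ lower bound coming from an instance with $n$ independent free variables, this already gives the $\Theta(n)$ entry listed in Table~\ref{table:results-ILS}.

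The only point requiring care, and the closest thing to an obstacle, is a piece of bookkeeping: Lemma~\ref{lem:Z<1-connectivity} is proven after the matrix has been normalized to satisfy (P1) via the column-flipping reduction of Lemma~\ref{lem:polarity_change}. One must observe that this reduction preserves Hamming distances (since $x_j\neq y_j \Leftrightarrow d-x_j\neq d-y_j$), so the solution graphs of $I$ and of the normalized instance are isomorphic as graphs. Hence the diameter bound derived for the normalized instance immediately transfers back to $G(I)$, completing the argument.
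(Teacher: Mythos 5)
Your proof is correct and takes essentially the same approach as the paper: the paper's proof of this corollary likewise routes an arbitrary pair of feasible solutions through the unique minimal solution $\boldsymbol{x}^*$ using the explicit length-$\le n$ path from Lemma~\ref{lem:Z<1-connectivity}, yielding the $2n$ bound. Your remark that the column-flipping normalization of Lemma~\ref{lem:polarity_change} preserves Hamming distances (hence the solution graphs are isomorphic) is a valid and careful addition, though the paper treats this as implicit.
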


\begin{proof}
	From Lemma~\ref{lem:Z<1-connectivity},
	any two vertices of $G(I)$ are connected via the unique minimal solution $\boldsymbol{x}^*$ of $I$
	by a path of length at most $2n$.
	Therefore, the diameter of $G(I)$ is ${\rm O}(n)$.
\end{proof}

We now show that the diameter of $G(I)$ can be ${\rm \Omega}(n)$ for ILS with index less than one.

\begin{example}\label{ex:Z<1-diameter-lower-bound}

Consider the following ILS with $n$ variables.

\begin{equation}\label{eq:Z<1-diameter_lower_bound}
\left\{
-x_j \ge -1 \ \ \  (j = 1, \dots, n)\\
\right.
\end{equation}
This ILS has index zero since $(\alpha_1, \alpha_2, \dots, \alpha_n,Z) = (1, 1, \dots, 1,0)$ is an optimal solution of LP \eqref{LP}.
Clearly, the set of solution of ILS~\eqref{eq:Z<1-diameter_lower_bound} is $\{0,1\}^n$.
Then the diameter of the solution graph of ILS~\eqref{eq:Z<1-diameter_lower_bound} is ${\rm \Omega}(n)$.
Indeed, consider a path from $(0, 0, \dots, 0)$ to $(1, 1, \dots, 1)$.
We can change a value of only one variable in each step.
Therefore,	the length of the path is at least $n$.
Thus, the diameter of the solution graph of ILS~\eqref{eq:Z<1-diameter_lower_bound} is ${\rm \Omega}(n)$.
\end{example}

\begin{theorem}\label{thm:Z<1-diameter-upper-and-lower-bound}
	The diameter of $G(I)$ is ${\rm \Theta}(n)$ if $Z(I) < 1$.
\end{theorem}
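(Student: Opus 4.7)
The plan is to combine the upper bound already obtained in Corollary~\ref{cor:Z<1-diameter} with the lower bound witnessed by Example~\ref{ex:Z<1-diameter-lower-bound}. Specifically, Corollary~\ref{cor:Z<1-diameter} gives the ${\rm O}(n)$ upper bound on the diameter for every instance $I$ with $Z(I) < 1$, derived from the fact that every feasible solution is connected to the unique minimal solution $\boldsymbol{x}^*$ via a path of length at most $n$ (Lemma~\ref{lem:Z<1-connectivity}). For the matching lower bound, Example~\ref{ex:Z<1-diameter-lower-bound} exhibits an explicit family of instances (namely $-x_j \ge -1$ for $j=1,\dots,n$) with $Z(I) = 0$ whose solution graph has diameter at least $n$, since traveling from $(0,\dots,0)$ to $(1,\dots,1)$ requires flipping each coordinate at least once.

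Concretely, I would first invoke Corollary~\ref{cor:Z<1-diameter} to assert the ${\rm O}(n)$ upper bound for every $I$ with $Z(I)<1$. Then I would invoke Example~\ref{ex:Z<1-diameter-lower-bound} to assert that there exists an instance $I$ with $Z(I) < 1$ whose solution graph has diameter ${\rm \Omega}(n)$. Putting these two bounds together yields diameter ${\rm \Theta}(n)$, as claimed.

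There is essentially no obstacle here since the two ingredients are already in place; the theorem is a one-line consequence of combining them.

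\begin{proof}
The upper bound follows from Corollary~\ref{cor:Z<1-diameter}, and the lower bound follows from Example~\ref{ex:Z<1-diameter-lower-bound}. Combining these gives diameter ${\rm \Theta}(n)$.
\end{proof}
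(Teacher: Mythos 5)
Your proof is correct and takes precisely the same approach as the paper's own proof: both simply combine the ${\rm O}(n)$ upper bound from Corollary~\ref{cor:Z<1-diameter} with the ${\rm \Omega}(n)$ lower bound witnessed by Example~\ref{ex:Z<1-diameter-lower-bound}.
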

\begin{proof}
	This follows from corollary~\ref{cor:Z<1-diameter} and Example~\ref{ex:Z<1-diameter-lower-bound}.
\end{proof}

We finally note that the following example shows that
the solution graph might be disconnected if the index is one, which contrasts to Lemma~\ref{lem:Z<1-connectivity}.

\begin{example}\label{ex:Z=1-disconnected}
Consider the following ILS with $n=2$.

\begin{equation}\label{eq:Z=1-disconnected}
\left\{
\begin{array}{l}
x_1 - x_2 \ge 0 \\
-x_1 + x_2 \ge 0
\end{array}
\right.
\end{equation}

Then LP~\eqref{LP} is
\begin{eqnarray}\label{LP:Z=1-disconnected}
\begin{array}{cll}
\rm{minimize}     &Z\\
\rm{subject\ to}     &\alpha_1 + (1- \alpha_2) \leq Z &\\
&(1-\alpha_1) + \alpha_2 \leq Z  &\\
&0 \leq \alpha_1,\alpha_2 \leq 1. &
\end{array}
\end{eqnarray}
By summing up the first and second constraints in LP~\eqref{LP:Z=1-disconnected},
we obtain $2 \le 2Z$, namely $1 \le Z$.
On the other hand, $(\alpha_1,\alpha_2,Z) = (1/2,1/2,1)$ is a feasible solution to LP~\eqref{LP:Z=1-disconnected}.
Hence, the optimal value of LP~\eqref{LP:Z=1-disconnected} is one and the ILS has index one.

From ILS~\eqref{eq:Z=1-disconnected}, we have $x_1=x_2$ and thus the set of solutions of ILS~\eqref{eq:Z=1-disconnected} is $\{ (0,0), (1,1), \dots, (d,d) \}$.
Therefore, the solution graph is disconnected.
\end{example}

\section{The case of $Z(I) > 1$}
\label{sec:Z>1}
Recall that the reconfiguration problem of SAT is known to be PSPACE-complete~\cite{GKM09}.
Moreover, ILS is a generalization of SAT, that is, ILS can formulate SAT by representing each clause $(\bigvee_{j \in L^+} x_j \vee \bigvee_{j \in L^-} \overline{x}_j)$ as
$\sum_{j \in L^+} x_j + \sum_{j \in L^-} (1-x_j) \ge 1$ and setting $d=1$.
Therefore,
we can immediately have the PSPACE-completeness of the reconfiguration problem of ILS.
However, such an ILS may have a large complexity index $Z(I)$.
In this section, we show that such an index can be decreased to arbitrary small number greater than one.
Then we can say that the reconfiguration problem of ILS($\gamma$) is PSPACE-complete for any $\gamma > 1$.

To this end, we focus on the SAT problem for a while.
As ILS, let SAT($\gamma$) be the set of the SAT instances with index at most $\gamma$,
where index for SAT is computed via the above-mentioned ILS formulation, 
where we note that this index actually coincides with the complexity index for SAT introduced by Boros et al.~\cite{BCH94}.
We first explain the \emph{structural expressibility}, which is introduced in~\cite{GKM09} for reduction of connectivity problems.

Recall that relation $R$ is a subset of $D^k$ for some $k$.
Let $\Gamma$ be a finite set of logical relations, i.e., relations with $D=\{0,1\}$.
A CNF($\Gamma$)-formula over a set of variables $V = \{x_1, \dots, x_n\}$
is a finite conjunction $C_1 \wedge \dots \wedge C_n$ of clauses built using relations from $\Gamma$,
variables from $V$, and the constants $0$ and $1$.

\begin{definition}[\cite{GKM09}]\label{def:structural-expression}
A relation $R$ is \emph{structurally expressible} from a set of relations
$\Gamma$ if there exists a CNF($\Gamma$)-formula $\varphi$ such that the following conditions hold:
\begin{description}
\item[(1)] $R = \{\boldsymbol{a} \mid \exists \boldsymbol{y}\varphi(\boldsymbol{a}, \boldsymbol{y})\}$.
\item[(2)] For every $\boldsymbol{a} \in R$, the graph $G(\varphi(\boldsymbol{a}, \boldsymbol{y}))$ is connected.
\item[(3)] For $\boldsymbol{a}, \boldsymbol{b} \in R$ with $\dist(\boldsymbol{a},\boldsymbol{b}) = 1$, there exists $\boldsymbol{w}$ such that $(\boldsymbol{a},\boldsymbol{w})$ and $(\boldsymbol{b},\boldsymbol{w})$ are solutions of $\varphi$.
\end{description}
\end{definition}

In~\cite{GKM09}, the following useful lemma is shown.

\begin{lemma}[Corollary 3.3 in \cite{GKM09}]\label{lem:structural-reduction}
For two sets of relations $\Gamma$ and $\Gamma'$
assume that each relation $R \in \Gamma'$ is structurally expressible from $\Gamma$, and
that there exists a polynomial time algorithm that produces a structural expression from $\Gamma$ for each $R \in \Gamma'$.
Then there exists a polynomial-time reduction from the reconfiguration problem of $\Gamma'$ to that of $\Gamma$.
\end{lemma}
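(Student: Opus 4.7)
The plan is to give a direct polynomial-time reduction from the reconfiguration problem of $\Gamma'$ to that of $\Gamma$, and then verify that the three conditions in Definition~\ref{def:structural-expression} precisely match what is needed for reachability to be preserved. Given an input $(\varphi', \boldsymbol{s}, \boldsymbol{t})$ where $\varphi' = \bigwedge_{i=1}^{m} C_i$ and each clause $C_i = R_i(\boldsymbol{x}_i)$ uses some $R_i \in \Gamma'$ applied to a subtuple $\boldsymbol{x}_i$ of the main variables, I would build $\varphi = \bigwedge_{i=1}^{m} \varphi_{R_i}(\boldsymbol{x}_i, \boldsymbol{y}^{(i)})$, where $\varphi_{R_i}$ is the CNF($\Gamma$)-formula structurally expressing $R_i$ (which exists and is polynomial-time computable by hypothesis), and the auxiliary blocks $\boldsymbol{y}^{(i)}$ are taken to be disjoint across clauses. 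Using condition (1) of Definition~\ref{def:structural-expression}, I pick arbitrary witnesses $\boldsymbol{v}^{(i)}_s, \boldsymbol{v}^{(i)}_t$ so that $\varphi_{R_i}(\boldsymbol{s}_i,\boldsymbol{v}^{(i)}_s)$ and $\varphi_{R_i}(\boldsymbol{t}_i,\boldsymbol{v}^{(i)}_t)$ hold, and define extended assignments $\boldsymbol{s}^\star, \boldsymbol{t}^\star$ by decorating $\boldsymbol{s},\boldsymbol{t}$ with these witnesses. The claim to prove is that $\boldsymbol{t}$ is reachable from $\boldsymbol{s}$ in $G(\varphi')$ if and only if $\boldsymbol{t}^\star$ is reachable from $\boldsymbol{s}^\star$ in $G(\varphi)$.

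For the ``only if'' direction, I would simulate each single-variable step $\boldsymbol{a} \to \boldsymbol{b}$ of a reconfiguration path in $G(\varphi')$, say one flipping variable $x_j$. For each clause $C_i$ whose scope contains $x_j$, condition (3) supplies a witness $\boldsymbol{w}^{(i)}$ simultaneously compatible with $\boldsymbol{a}$ and $\boldsymbol{b}$ inside $\varphi_{R_i}$; by condition (2), the current value of $\boldsymbol{y}^{(i)}$ lies in the same connected component of $G(\varphi_{R_i}(\boldsymbol{a}_i,\cdot))$ as $\boldsymbol{w}^{(i)}$, so I can walk $\boldsymbol{y}^{(i)}$ to $\boldsymbol{w}^{(i)}$ by single-coordinate changes while leaving the other blocks and all main variables untouched. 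After driving every affected block to its common witness, flipping $x_j$ preserves $\varphi$. For clauses not involving $x_j$ nothing changes. Concatenating these short subpaths along the given $\boldsymbol{s}$-$\boldsymbol{t}$ path in $G(\varphi')$, and appending one last application of condition (2) to adjust each $\boldsymbol{y}^{(i)}$ from its final witness to $\boldsymbol{v}^{(i)}_t$, yields an $\boldsymbol{s}^\star$-$\boldsymbol{t}^\star$ path in $G(\varphi)$.

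For the ``if'' direction, I would take any path $\boldsymbol{s}^\star = \boldsymbol{p}^0 \to \boldsymbol{p}^1 \to \dots \to \boldsymbol{p}^\ell = \boldsymbol{t}^\star$ in $G(\varphi)$ and project each $\boldsymbol{p}^k$ onto the main variables to obtain a sequence $\boldsymbol{a}^0, \boldsymbol{a}^1, \dots, \boldsymbol{a}^\ell$. Consecutive projections either coincide (when the differing coordinate is auxiliary) or differ in exactly one main variable, so after contracting runs of equal vertices I obtain an $\boldsymbol{s}$-$\boldsymbol{t}$ walk in $D^n$ with Hamming-distance-one steps. Each $\boldsymbol{a}^k$ satisfies $\varphi'$ because, for every clause $C_i$, the restriction of $\boldsymbol{p}^k$ provides an existential witness for $\varphi_{R_i}$ and hence, by condition (1), $\boldsymbol{a}^k_i \in R_i$. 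Thus the walk lies in $G(\varphi')$.

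The main obstacle is the careful book-keeping in the forward simulation: I must be sure that while rerouting one auxiliary block $\boldsymbol{y}^{(i)}$ through its connected component no other clause becomes violated. This follows from the disjointness of the $\boldsymbol{y}^{(i)}$-blocks and the fact that condition (2) operates entirely within the variables of $\varphi_{R_i}$, so all other blocks of $\varphi$ remain evaluated on an unchanged assignment. Polynomial-time bounds are immediate: the total size of $\varphi$ is bounded by the sum of sizes of the structural expressions, the witnesses $\boldsymbol{v}^{(i)}_s, \boldsymbol{v}^{(i)}_t$ can be found in polynomial time by the hypothesis, and the reduction itself is oblivious to the (possibly exponential) paths that will later realize reachability.
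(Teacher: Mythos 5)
Your proposal is correct and follows the same argument as Corollary 3.3 of Gopalan, Kolaitis, Maneva, and Papadimitriou~\cite{GKM09}, which the paper cites without reproducing: build $\varphi$ by substituting each $\Gamma'$-constraint with its structural expression using disjoint auxiliary blocks, lift $\boldsymbol{s},\boldsymbol{t}$ via condition (1), simulate a main-variable flip by first using condition (2) to steer each affected auxiliary block to the bridging witness guaranteed by condition (3), and project back for the converse. One small cleanup worth noting: the hypothesis only promises a polynomial-time construction of the formulas $\varphi_R$, not of witnesses, but since $\Gamma'$ is a finite set of Boolean relations each $\varphi_R$ has a fixed, input-independent number of auxiliary variables, so a witness for each $\boldsymbol{a}\in R$ can be hard-coded (or found by brute force in constant time), which is what makes your ``the witnesses can be found in polynomial time by the hypothesis'' step go through.
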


We are now ready to show our result.

\begin{theorem}\label{thm:SAT>1-PSPACE-c}
The reconfiguration problem of SAT($\gamma$) is PSPACE-complete for any $\gamma >1 $.
\end{theorem}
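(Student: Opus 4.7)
The plan is to reduce from a known PSPACE-complete reconfiguration problem using the structural-expressibility machinery of Lemma~\ref{lem:structural-reduction}. Membership of SAT($\gamma$) reconfiguration in PSPACE is immediate: the state space has size $2^n$, so the problem lies in NPSPACE $=$ PSPACE by Savitch's theorem. The content is PSPACE-hardness for every $\gamma > 1$.

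Fix $\gamma > 1$. By the dichotomy of \cite{GKM09,Sch14}, there is a finite set $\Gamma'$ of logical relations for which the reconfiguration problem of CNF($\Gamma'$)-formulas is PSPACE-complete. The strategy is to design a set $\Gamma_\gamma$ of SAT clauses and, for each $R \in \Gamma'$, to exhibit a polynomial-time computable CNF($\Gamma_\gamma$)-formula $\varphi_R$ that structurally expresses $R$ in the sense of Definition~\ref{def:structural-expression}, with the property that every CNF($\Gamma_\gamma$)-formula has complexity index at most $\gamma$. Once done, Lemma~\ref{lem:structural-reduction} gives a polynomial-time reduction from reconfiguration on $\Gamma'$ to reconfiguration on SAT($\gamma$), completing the proof.

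The set $\Gamma_\gamma$ will be built around Horn clauses and 2-SAT clauses, each of which alone certifies index at most $1$ via the uniform LP assignments $\alpha_j = 1$ and $\alpha_j = 1/2$, respectively (cf.\ the discussion following LP~\eqref{LP} in the Introduction). The only source of index increase will come from a small collection of ``non-tractable'' gadget clauses needed to reproduce the PSPACE-hard behavior of $\Gamma'$. The construction of $\varphi_R$ introduces many auxiliary variables arranged in Horn- and 2-SAT-style propagation chains to realize the connectivity and common-witness conditions (2) and (3) of structural expressibility, and confines the mixed-polarity clauses to a bounded-size core. A uniform LP assignment $\alpha^\ast$ (auxiliary variables set to a common value $c = c(\gamma) \in (0,1)$, main variables set according to polarity) will then certify $Z \le \gamma$ clause by clause; by padding the expressions with enough auxiliary Horn/2-SAT chains, the weight of the core clauses under $\alpha^\ast$ can be pushed arbitrarily close to $1$.

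The main obstacle will be the simultaneous maintenance of two tensions: the structural-expressibility conditions (2) and (3) demand enough flexibility in the auxiliary variables to connect every fiber over a main assignment and to provide common witnesses for adjacent main assignments, while the index bound demands that almost all clauses have highly one-sided sign patterns. Resolving this requires that the ``PSPACE behavior'' be encapsulated in a small, fixed-size gadget (drawn from the reductions of \cite{GKM09,Sch14}) whose LP contribution is absorbed by the chosen $\alpha^\ast$, and that all connectivity-providing clauses be realized as Horn or $2$-SAT implications over auxiliary variables. Once this separation of concerns is in place, verifying the index bound reduces to a direct per-clause calculation at $\alpha^\ast$, and verifying structural expressibility reduces to a routine check on the propagation chains; combining the two yields the claim.
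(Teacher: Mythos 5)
Your high-level plan matches the paper's: reduce from a PSPACE-complete reconfiguration problem (the paper uses 3-SAT reconfiguration directly) via Lemma~\ref{lem:structural-reduction}, and use chains of 2-clauses to drive the complexity index down toward~$1$. Concretely, the paper replaces each 3-clause $(\ell_{i_1}\vee\ell_{i_2}\vee\ell_{i_3})$ by
\[
\psi_i=(\ell_{i_1}\vee\overline{y}_{i,0})(y_{i,0}\vee\overline{y}_{i,1})\cdots(y_{i,t-1}\vee\overline{y}_{i,t})(y_{i,t}\vee\ell_{i_2}\vee z_{i,t})(\overline{z}_{i,t}\vee z_{i,t-1})\cdots(\overline{z}_{i,0}\vee\ell_{i_3})
\]
with $t=\lceil 1/(\gamma-1)\rceil$, and checks structural expressibility by enumerating the solution set of $\psi_i$ over the eight assignments to $(\ell_{i_1},\ell_{i_2},\ell_{i_3})$.

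The genuine gap is in your LP analysis. You propose a \emph{constant} assignment $\alpha^\ast$ with all auxiliary variables at a common value $c=c(\gamma)$. That cannot certify $Z\le\gamma$ when $\gamma$ is close to~$1$. At the endpoint clause $(\ell_{i_1}\vee\overline{y}_{i,0})$ the LP row is $\alpha_{\ell_{i_1}}+(1-c)\le Z$, which with $\alpha_{\ell_{i_1}}\in[0,1]$ unconstrained forces $c\ge 2-\gamma$; at the central 3-clause $(y_{i,t}\vee\ell_{i_2}\vee z_{i,t})$ the row is $2c+\alpha_{\ell_{i_2}}\le Z$, forcing $c\le(\gamma-1)/2$. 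These bounds are incompatible for $1<\gamma<5/3$. Moreover your claim that ``padding with more chains pushes the weight of the core clause toward $1$'' is inert under a constant assignment: lengthening the chain does not change any LP row if every auxiliary variable carries the same $\alpha$-value. What actually makes the argument work is a \emph{graded ramp}: the paper sets $\alpha_{y_{i,k}}=\alpha_{z_{i,k}}=1-k/t$, so the chain interpolates from $\alpha=1$ at the literal end (which absorbs the uncontrolled $\alpha_{\ell_{i_1}}$ contribution) to $\alpha=0$ at the core 3-clause (which then contributes only $\alpha_{\ell_{i_2}}\le 1$), and each chain link contributes exactly $1+1/t\le\gamma$. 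This ramp is the crux; ``common value'' and ``set main variables according to polarity'' (main variables typically appear with both polarities across clauses, so no such assignment exists) both need to be replaced by it. Your proposal also does not exhibit a concrete gadget nor verify conditions~(1)--(3) of Definition~\ref{def:structural-expression}, which the paper does explicitly.
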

\begin{proof}
We reduce the reconfiguration problem of 3-SAT, which is known PSPACE-complete~\cite{GKM09}.
Let $\varphi$ be an instance of 3-SAT,
where $\varphi = \bigwedge_{i=1}^m C_i$ is a 3-CNF with $n$ variables and $m$ clauses, and $C_i = (\ell_{i_1} \vee \ell_{i_2} \vee \ell_{i_3})$ for $i = 1,\dots, m$.
Here, $1 \leq i_1,i_2,i_3 \leq n$ and $i_k$'s are distinct for $i = 1,\dots, m$, and
$\ell_{i_k}$ is a literal, i.e., unnegated or negated variable, for each $i_k$.

Let $\varepsilon = \gamma -1$ and $t =  \lceil 1/\varepsilon \rceil$.
For each clause $(\ell_{i_1} \vee \ell_{i_2} \vee \ell_{i_3})$,
we introduce $2(1+t)$ auxiliary variables $y_{i,0}, y_{i,1}, \dots, y_{i,t}$ and $z_{i,0}, z_{i,1}, \dots, z_{i,t}$.
Then we replace clause $(\ell_{i_1} \vee \ell_{i_2} \vee \ell_{i_3})$ with the clauses
$$\psi_i = (\ell_{i_1} \vee \overline{y}_{i,0})(y_{i,0} \vee \overline{y}_{i,1}) \cdots (y_{i,t-1} \vee \overline{y}_{i,t})
(y_{i,t} \vee \ell_{i_2} \vee z_{i,t})(\overline{z}_{i,t} \vee z_{i,t-1})\cdots (\overline{z}_{i,1}\vee z_{i,0})(\overline{z}_{i,0} \vee \ell_{i_3})$$
for each $i = 1, \dots, m$.
Let $\psi = \wedge_{i=1}^m \psi_i$.
Then we have the following claims.

\begin{claim}\label{claim1:in-Z>1-PSPACE-C}
$\psi$ is a structural expression of $\varphi$.
\end{claim}

\begin{proof}
We first see that the set of solutions of $\psi_i$ is as shown in Table~\ref{table:solution-psi_i} below.

\begin{table}[htb]
\caption{The set of solutions of $\psi_i$ for each assignment to $\ell_{i_1}, \ell_{i_2}$, and $\ell_{i_3}$}\label{table:solution-psi_i}
\centering
\begin{tabular}{cl}
\hline
$(\ell_{i_1}, \ell_{i_2}, \ell_{i_3})$ & $\boldsymbol{y},\boldsymbol{z}$\\ \hline
(0,0,0)& $\emptyset$ \\
(0,0,1)& $\boldsymbol{y}=\mathbf{0},\boldsymbol{z}=\mathbf{1}$\\
(0,1,0)& $\boldsymbol{y}=\mathbf{0},\boldsymbol{z}=\mathbf{0}$\\
(1,0,0)& $\boldsymbol{y}=\mathbf{1},\boldsymbol{z}=\mathbf{0}$\\
(0,1,1)& $\boldsymbol{y}=\mathbf{0}, z_{i,0} \ge z_{i,1} \ge \dots \ge z_{i,t}$\\
(1,0,1)& $\boldsymbol{y}=\mathbf{1}, z_{i,0} \ge z_{i,1} \ge \dots \ge z_{i,t}$ or $y_{i,0} \ge y_{i,1} \ge \dots \ge y_{i,t}, \boldsymbol{z}=\mathbf{1}$\\
(1,1,0)& $y_{i,0} \ge y_{i,1} \ge \dots \ge y_{i,t}, \boldsymbol{z}=\mathbf{0}$\\
(1,1,1)& $y_{i,0} \ge y_{i,1} \ge \dots \ge y_{i,t}, z_{i,0} \ge z_{i,1} \ge \dots \ge z_{i,t}$\\
\hline
\end{tabular}
\end{table}
\noindent For example, if we set $\ell_{i_1} = \ell_{i_2} = \ell_{i_3} = 0$, then we have $y_{i,0} = 0$ from clause $(\ell_{i_1} \vee \overline{y}_{i,0})$.
Note that we have $y_{i,0} \ge y_{i,1} \ge \dots \ge y_{i,t}$ and $z_{i,0} \ge z_{i,1} \ge \dots \ge z_{i,t}$ from the clauses in $\psi_i$ that contain no $\ell_{i_k}$'s.
Thus, $y_{i,t} = 0$ follows from $y_{i,0} = 0$.
Hence, we have $z_{i,t} = 1$ from $(y_{i,t} \vee \ell_{i_2} \vee z_{i,t})$ and $y_{i,t} = \ell_{i_2} = 0$.
On the other hand, $z_{i,0} = 0$ follows from $(\overline{z}_{i,0} \vee \ell_{i_3})$ and $\ell_{i_3} = 0$.
Hence, by $z_{i,0} \ge z_{i,1} \ge \dots \ge z_{i,t}$, we have $z_{i,t} = 0$.
This contradicts $z_{i,t} = 1$.
Therefore, $\ell_{i_1} = \ell_{i_2} = \ell_{i_3} = 0$ is not a part of any solution to $\psi_i$.
Similarly, we can check Table~\ref{table:solution-psi_i} holds for the other assignments to $\ell_{i_1}, \ell_{i_2}$, and $\ell_{i_3}$.

We now show the claim by checking (1), (2), and (3) in Definition~\ref{def:structural-expression} are satisfied from Table~\ref{table:solution-psi_i}.
Firstly, we have (1), since there exists no solution to $\ell_{i_1} = \ell_{i_2} = \ell_{i_3} = 0$ and there exists at least one solution to the other assignments to $\ell_{i_k}$'s.
For (2), we have to check if the solution space of $\boldsymbol{y}$ and $\boldsymbol{z}$ is connected for each assignment to $\ell_{i_1}, \ell_{i_2}$, and $\ell_{i_3}$.
If exactly one of $\ell_{i_k}$'s is 1, then $G(\psi_i(\ell_{i_1},\ell_{i_2},\ell_{i_3},\boldsymbol{y},\boldsymbol{z}))$ is connected since there is exactly one solution to $\boldsymbol{y}$ and $\boldsymbol{z}$.
For $(\ell_{i_1}, \ell_{i_2}, \ell_{i_3})=(0,1,1)$, we have the set
$$\{ (\mathbf{0};0, \dots, 0), (\mathbf{0};0, \dots, 0, 1), (\mathbf{0};0, \dots, 0, 1, 1), \dots, (\mathbf{0};1, \dots, 1) \}$$ of satisfying assignments to $(\boldsymbol{y},\boldsymbol{z})$ and this is clearly connected.
We can similarly show that $G(\psi_i(\ell_{i_1},\ell_{i_2},\ell_{i_3},\boldsymbol{y},\boldsymbol{z}))$ is connected for
$(\ell_{i_1}, \ell_{i_2}, \ell_{i_3})=(1,1,0)$.
For $(\ell_{i_1}, \ell_{i_2}, \ell_{i_3})=(1,0,1)$, we have the set
$$
\begin{array}{l}
\{ (\mathbf{0};0, \dots, 0), (\mathbf{0};0, \dots, 0, 1), (\mathbf{0};0, \dots, 0, 1, 1), \dots, (\mathbf{0};1, \dots, 1) = (0, \dots, 0;\mathbf{1}), \\
\ \ (0, \dots, 0, 1;\mathbf{1}), \dots, (1, \dots, 1;\mathbf{1}) \}
\end{array}
$$
of satisfying assignments to $(\boldsymbol{y},\boldsymbol{z})$ and this is again connected.
Finally, for $(\ell_{i_1}, \ell_{i_2}, \ell_{i_3})=(1,1,1)$, we have the set
$$
\begin{array}{l}
\{ (0, \dots, 0), (0, \dots, 0, 1), \dots, (1, \dots, 1)\} \times \{(0, \dots, 0), (0, \dots, 0, 1), \dots, (1, \dots, 1) \}
\end{array}
$$
of satisfying assignments to $(\boldsymbol{y},\boldsymbol{z})$, where $\times$ is the Cartesian product of two sets.
This is connected since the Cartesian product of two connected sets is connected.
Therefore, we have (2).
We note that this can be also proven by checking that $\psi_i$ admits an elimination ordering as an ILS and by Lemma~\ref{lem:Z<1-connectivity}.
For (3), observe that the set of satisfying assignment to $(\boldsymbol{y},\boldsymbol{z})$ is monotone,
i.e., if $\boldsymbol{a} \leq \boldsymbol{b}$ then we have $\{ (\boldsymbol{y},\boldsymbol{z}) \mid \psi_i(\boldsymbol{a},\boldsymbol{y},\boldsymbol{z})=1 \} \subseteq \{ (\boldsymbol{y},\boldsymbol{z}) \mid \psi_i(\boldsymbol{b},\boldsymbol{y},\boldsymbol{z})=1 \}$.
Therefore, for $\boldsymbol{a}, \boldsymbol{b} \in \{0,1\}^3\setminus \{(0,0,0)\}$ with $\dist(\boldsymbol{a},\boldsymbol{b}) = 1$, we can find a satisfying assignment $(\boldsymbol{y},\boldsymbol{z})$ for the smaller of $\boldsymbol{a}$ and $\boldsymbol{b}$ such that $\psi_i(\boldsymbol{a},\boldsymbol{y},\boldsymbol{z})=\psi_i(\boldsymbol{b},\boldsymbol{y},\boldsymbol{z})=1$ holds.
Therefore, we have shown (3).
This completes the proof.
\end{proof}

\begin{claim}
The complexity index $Z(\psi)$ is at most $1+ \varepsilon (=\gamma)$.
\end{claim}

\begin{proof}
For each $i=1, \dots, m$,
	consider the constraints of linear programming problem~\eqref{LP} corresponding to $\psi_i$:

	\begin{equation}\label{key}
	\begin{array}{ll}
	\alpha_{i_1} + 1 - \alpha_{y_{i,0}} &\leq Z\\
	\alpha_{y_{i,k}} + 1- \alpha_{y_{i,k+1}} &\leq Z \ (k=0, \dots, t-1)\\
	\alpha_{y_{i,t}} + \alpha_{i_2} + \alpha_{z_{i,t}} &\leq Z\\
	1-\alpha_{z_{i,k+1}} + \alpha_{z_{i,k}} &\leq Z \ (k=0, \dots, t-1)\\
	1-\alpha_{z_{i,0}} + \alpha_{i_3} &\leq Z.
	\end{array}
	\end{equation}

	Let $\alpha_{y_{i,k}} = \alpha_{z_{i,k}} = 1 - k/t$ for $k=0, 1, \dots, t$.
	Then, the inequalities above are feasible for $Z =1+\varepsilon$.
	Indeed, we have $\alpha_{i_1} + 1 - \alpha_{y_{i,0}} = \alpha_{i_1} \leq 1$.
	Moreover, $\alpha_{y_{i,k}} + 1 - \alpha_{y_{i,k+1}} = 1- \frac{k}{t} + \frac{k+1}{t} = 1+\frac{1}{t} \leq 1 + \varepsilon$ holds for $k=0, \dots, t-1$.
	We also have $\alpha_{y_{i,t}} + \alpha_{i_2} + \alpha_{z_{i,t}} = \alpha_{i_2} \leq 1$,
	$1-\alpha_{z_{i,k+1}} + \alpha_{z_{i,k}} = \frac{k+1}{t} + 1 - \frac{k}{t} = 1 + \frac{1}{t} \leq 1 + \varepsilon$ for $k=0, \dots, t-1$,
	and $1-\alpha_{z_{i,0}} + \alpha_{i_3} = \alpha_{i_3} \leq 1$.
	Therefore, all the inequalities are satisfied for $Z=1+\varepsilon$.
	Since linear programming problem~\eqref{LP} minimizes $Z$, we have $Z(\psi) \leq 1+\varepsilon = \gamma$.
\end{proof}

Now, the theorem follows from these claims and Lemma~\ref{lem:structural-reduction}.
\end{proof}

\begin{corollary}
	The reconfiguration problem of ILS($\gamma$) is PSPACE-complete for any $\gamma > 1$.
\end{corollary}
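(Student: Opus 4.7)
The plan is to derive this corollary as a direct consequence of Theorem~\ref{thm:SAT>1-PSPACE-c}, using the standard SAT-to-ILS encoding already recalled at the start of Section~\ref{sec:Z>1}. I would proceed as follows.

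First, for hardness, I would take any instance $(\varphi,\boldsymbol{s},\boldsymbol{t})$ of SAT($\gamma$) reconfiguration and produce an instance $(I,\boldsymbol{s},\boldsymbol{t})$ of ILS($\gamma$) reconfiguration by encoding each clause $(\bigvee_{j \in L^+} x_j \vee \bigvee_{j \in L^-} \overline{x}_j)$ as $\sum_{j \in L^+} x_j + \sum_{j \in L^-}(1-x_j) \geq 1$ and setting $d=1$. The key observation, which the paper has already noted, is that the complexity index of the resulting ILS coincides with the complexity index of $\varphi$ used in Theorem~\ref{thm:SAT>1-PSPACE-c}; in particular, the LP~\eqref{LP} defining $Z(I)$ is syntactically the same LP that defines the SAT index, so $Z(I)\leq \gamma$. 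Moreover, since $d=1$, a single coordinate change in the ILS solution graph is exactly a single variable flip in the SAT solution graph, hence $G(I)$ equals the SAT solution graph of $\varphi$. Therefore $\boldsymbol{t}$ is reachable from $\boldsymbol{s}$ in $G(I)$ if and only if it is reachable in the SAT instance, which gives the polynomial-time reduction from SAT($\gamma$) reconfiguration to ILS($\gamma$) reconfiguration and transfers PSPACE-hardness.

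Second, for membership in PSPACE, I would observe that any ILS($\gamma$) reconfiguration instance admits a nondeterministic polynomial-space procedure: a feasible solution $\boldsymbol{x}\in D^n$ is described in $n\lceil\log_2(d+1)\rceil$ bits, which is polynomial in the input size, and so one can guess a walk from $\boldsymbol{s}$ to $\boldsymbol{t}$ one step at a time, at each step checking feasibility of $A\boldsymbol{x}\geq \boldsymbol{b}$ and Hamming distance one from the previous vector, storing only the current and previous vector together with a step counter bounded by $|D|^n = 2^{O(n\log d)}$. This shows the problem is in NPSPACE, hence in PSPACE by Savitch's theorem.

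Combining the two directions yields PSPACE-completeness of ILS($\gamma$) reconfiguration for every $\gamma>1$. I do not anticipate any serious obstacle: the only point that needs care is verifying that the SAT-to-ILS encoding preserves the complexity index, but this is immediate from the form of LP~\eqref{LP} together with the fact that the encoding introduces no new variables and the sign pattern of each row matches the polarity of the literals in the corresponding clause.
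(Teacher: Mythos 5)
Your proposal is correct and takes essentially the same route as the paper: the paper's own proof is the one-liner that since SAT is contained in ILS (via the clause encoding with $d=1$, which preserves the complexity index), PSPACE-completeness follows from Theorem~\ref{thm:SAT>1-PSPACE-c}. You simply spell out the implicit details, namely that the encoding preserves both the index and the solution-graph structure, and that PSPACE membership holds by the standard nondeterministic walk plus Savitch's theorem.
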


\begin{proof}
Since SAT is contained in ILS,
the statement follows form Theorem~\ref{thm:SAT>1-PSPACE-c}.
\end{proof}

Finally, we show that the diameter of $G(I)$ can be ${\rm \Omega}((d+2)\cdot 3^{\frac{n}{2}})$ even
for ILS $I$ with at most three variables per inequality for $d \geq 2$.

\begin{lemma}\label{lem:3ILS-diameter}
For $n$ even and $d \geq 2$, there exists an ILS $I_n$ with $n$ variables and $\frac{1}{2}n^2-\frac{1}{2}n+1$ inequalities
such that $G(I_n)$ is a path of length $\ell(n,d) = 2(d+2)\cdot 3^{\frac{n}{2}-1}-2$.
\end{lemma}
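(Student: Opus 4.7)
The plan is to proceed by induction on $n$, constructing $I_n$ recursively so that its solution graph consists of three copies of $G(I_{n-2})$ concatenated in series and joined by short bridges realized through the two freshly introduced variables $x_{n-1}, x_n$.

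For the base case $n = 2$, I would exhibit an explicit pair of inequalities in $x_1, x_2$ whose feasible set is a narrow staircase in $\{0, \ldots, d\}^2$ and whose induced Hamming-$1$ subgraph is a path of length $\ell(2,d) = 2d + 2$.

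In the inductive step, suppose that $I_{n-2}$ on the variables $x_1, \ldots, x_{n-2}$ has its solution graph equal to a path $\boldsymbol{v}^0 \rightarrow \boldsymbol{v}^1 \rightarrow \cdots \rightarrow \boldsymbol{v}^{\ell(n-2,d)}$. I would augment $I_{n-2}$ by introducing the two new variables $x_{n-1}, x_n$ together with the required $\binom{n}{2} - \binom{n-2}{2} = 2n-3$ additional inequalities, bringing the total to $\binom{n}{2}+1$. These new inequalities restrict $(x_{n-1}, x_n)$ to three distinguished modes $A$, $B$, $C$ and couple each mode to a specific traversal direction: in mode $A$, the projection of a feasible solution onto $x_1, \ldots, x_{n-2}$ traces the path $\boldsymbol{v}^0 \rightarrow \cdots \rightarrow \boldsymbol{v}^{\ell(n-2,d)}$ forward; in mode $B$ it traces the same path in reverse; in mode $C$ it again traces it forward. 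Transitions between consecutive modes are permitted only at the two endpoint configurations $\boldsymbol{v}^0$ and $\boldsymbol{v}^{\ell(n-2,d)}$, and each transition passes through two intermediate states. Concatenating the three copies via the two bridges yields a single path of length $3\,\ell(n-2,d) + 4 = \ell(n,d)$, matching the recursion $\ell(n,d) = 3\,\ell(n-2,d) + 4$ obtained directly from the closed-form expression.

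The main obstacle will be to verify that no extraneous Hamming-$1$ edges appear in $G(I_n)$. Any such edge---between configurations belonging to different modes, or between non-consecutive configurations inside the same mode---would create a shortcut or a branch and destroy the path structure. Consequently, the $2n-3$ new inequalities must simultaneously (i) globally decouple the three modes of $(x_{n-1}, x_n)$ outside of the prescribed bridging endpoints, (ii) ensure that each mode together with the inequalities of $I_{n-2}$ yields exactly the desired projection, and (iii) forbid every single-coordinate shortcut obtained by changing one of $x_1, \ldots, x_n$ across modes. Checking (iii) is the most delicate point, because each of the $n-2$ existing variables can in principle interact with each of the two new variables; accounting for these interactions is precisely why essentially $2(n-2)$ of the new inequalities are cross-constraints between old and new variables, with the remaining ones governing the internal structure of the three modes.
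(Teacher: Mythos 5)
Your high-level plan matches the paper's proof exactly: build $P_n$ inductively by stacking three copies of $P_{n-2}$ with the fresh pair $(x_{n-1},x_n)$ frozen to $(0,0)$, $(1,1)$, $(2,2)$, traverse them forward--reverse--forward, join at the endpoints through two bridge vertices $(t_{n-2},0,1)$ and $(s_{n-2},1,2)$, obtain the recursion $\ell(n,d)=3\ell(n-2,d)+4$, and then realize $P_n$ as a maximal component of a three-variable-per-inequality ILS. Two points need attention, though.

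First, the base case is off. A narrow staircase cut out by two inequalities, namely $0 \le x_2 - x_1 \le 1$, has exactly $2d+1$ vertices and so has length $2d$, not $2d+2$; this is in fact what the paper constructs. Two half-plane constraints on $D^2$ cannot produce a zigzagging band of length $2d+2$, so trying to hit $\ell(2,d)=2d+2$ within the budget of $\binom{2}{2}+1=2$ inequalities is not going to work. (Note that solving the recursion with the seed $\ell(2,d)=2d$ gives $2(d+1)\cdot 3^{n/2-1}-2$ rather than the stated $2(d+2)\cdot 3^{n/2-1}-2$, so the lemma statement you are matching appears to contain a minor arithmetic slip; the asymptotics and the downstream use are unaffected either way.)

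Second, and more substantively, the proposal stops at the design-specification level. You correctly identify that the crux is ruling out extraneous Hamming-$1$ edges, but you never write down the $2n-3$ coupling inequalities nor verify that they enforce the three desiderata (i)--(iii). That verification \emph{is} the proof. The paper's explicit coupling constraints, e.g.\ $x_j + 2d\,x_{n-1} - d\,x_n \ge 0$ for $j = 1,2$, $x_j + 4x_{n-1} - 2x_n \ge 0$ for $j = 3,\dots,n-2$, and $-x_j + 2x_{n-1} - 4x_n \ge -6$ for $j = 1,\dots,n-2$, are tuned so that mode $(x_{n-1},x_n)=(0,1)$ forces $(x_1,\dots,x_{n-2})=t_{n-2}=(d,d,2,\dots,2)$, mode $(1,2)$ forces $s_{n-2}=(0,\dots,0)$, and mode $(2,3)$ is infeasible; without exhibiting inequalities with these properties and checking them, the argument remains a plausible sketch rather than a proof.
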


\begin{proof}
We first construct a path $P_n$ in $D^n$ of length $\ell(n,d)$ and
then provide an ILS $I_n$ such that $G(I_n)$ contains $P_n$ as a maximal connected component.

We inductively construct the path $P_n$.
For $n = 2$, let $V(P_2) = \{ (0,0), (0,1), (1,1), (1,2),$ $(2,2),\dots, (d-1,d), (d,d) \}$ which is a path of length $2d$.
For $n \geq 4$, assume that we have constructed path $P_{n-2}$ with two end points $\boldsymbol{s}_{n-2}$ and $\boldsymbol{t}_{n-2}$.
Then $V(P_n)$ is defined as follows.
For each $\boldsymbol{v} \in V(P_{n-2})$,
$V(P_n)$ contains three vertices $(\boldsymbol{v},0,0),(\boldsymbol{v},1,1)$, and $(\boldsymbol{v},2,2)$.
Note that the induced subgraph with only these vertices consists of three disjoint copies of $P_{n-2}$.
These components are connected by adding two vertices $(\boldsymbol{t}_{n-2},0,1)$ and $(\boldsymbol{s}_{n-2},1,2)$,
where the former connects $(\boldsymbol{t}_{n-2},0,0)$ and $(\boldsymbol{t}_{n-2},1,1)$, and the latter $(\boldsymbol{s}_{n-2},1,1)$ and $(\boldsymbol{s}_{n-2},2,2)$.
Now we have defined $V(P_n)$.
Observe that $P_n$ is a path from $\boldsymbol{s}_n = (\boldsymbol{s}_{n-2},0,0)$ to $\boldsymbol{t}_n = (\boldsymbol{t}_{n-2},2,2)$.
Indeed, it is a path $\boldsymbol{s}_n = (\boldsymbol{s}_{n-2},0,0) \rightarrow \dots \rightarrow (\boldsymbol{t}_{n-2},0,0) \rightarrow (\boldsymbol{t}_{n-2},0,1) \rightarrow (\boldsymbol{t}_{n-2},1,1) \rightarrow \dots \rightarrow (\boldsymbol{s}_{n-2},1,1) \rightarrow (\boldsymbol{s}_{n-2},1,2) \rightarrow (\boldsymbol{s}_{n-2},2,2) \rightarrow \dots \rightarrow (\boldsymbol{t}_{n-2},2,2) = \boldsymbol{t}_n$.
We now calculate its length $\ell(n,d)$.
From the above observation we have $\ell(n,d) = 3\cdot \ell(n-2,d) + 4$, and $\ell(2,d) = 2d$.
By solving this recursive equation, we obtain that $\ell(n,d) = 2(d+2)\cdot 3^{\frac{n}{2}-1} - 2$.
Moreover, we have $\boldsymbol{s}_2 = (0, 0), \boldsymbol{t}_2 = (d,d), \boldsymbol{s}_n =
(\boldsymbol{s}_{n-2}, 0, 0), \boldsymbol{t}_n = (\boldsymbol{t}_{n-2},2, 2)$ and hence
$\boldsymbol{s}_n = (0, \dots , 0), \boldsymbol{t}_n = (d,d, 2, \dots, 2)$.

We then construct an ILS $I_n$ such that $G(I_n)$ contains $P_n$ as a maximal connected component.
Let $I_2$ be defined as
\begin{equation}
\left\{
\begin{array}{l}
-x_1 + x_2 \ge 0\\
x_1 - x_2 \ge -1.
\end{array}
\right.
\end{equation}
Then the set of the feasible solutions of $I_2$ is indeed $\{ (0,0), (0,1), (1,1), (1,2), (2,2), \dots, (d-1,d), (d,d) \}$.
Assume we have $I_{n-2}$.
We add two variables $x_{n-1}$ and $x_n$, and the inequalities
\begin{eqnarray}
-x_{n-1} + x_n &\ge& 0 \nonumber\\
x_{n-1} - x_n &\ge& -1 \nonumber\\
x_j + 2dx_{n-1} - dx_n &\ge& 0 \ \ \ \ \ \, \text{for}\ \ \  j = 1, 2 \label{eq:(0,1)-j=1,2}\\
x_j + 4x_{n-1} - 2x_n &\ge& 0 \ \ \ \ \ \, \text{for}\ \ \  j = 3, \dots, n-2 \label{eq:(0,1)-j>=3}\\
-x_j + 2x_{n-1} - 4x_n &\ge& -6 \ \ \ \text{for}\ \ \  j = 1, \dots, n-2 \label{eq:(1,2)}
\end{eqnarray}
From the inequalities in \eqref{eq:(0,1)-j=1,2} and \eqref{eq:(0,1)-j>=3}, $(x_{n-1},x_n) = (0, 1)$ implies that $(x_1, \dots, x_{n-2}) = \boldsymbol{t}_{n-2} = (d,d, 2, \dots, 2)$.
Moreover, from the inequalities in \eqref{eq:(1,2)}, $(x_{n-1},x_n) = (1, 2)$ implies that $(x_1, \dots, x_{n-2}) = \boldsymbol{s}_{n-2} = (0, \dots , 0)$.
Furthermore, the inequalities in \eqref{eq:(1,2)} cannot be satisfied for $(x_{n-1},x_n) = (2, 3)$.
Therefore, $P_n$ cannot be prolonged in $G(I_n)$, implying that it is a maximal connected component of $G(I_n)$.
\end{proof}

\begin{theorem}\label{lem:Z>1-diameter}
	For infinitely many $n$, $d \geq 2$, and $\gamma>1$, there exists an ILS $I_n$ in ILS($\gamma$) with $n$ variables
	such that $G(I_n)$ has diameter
	$2(d+2)\cdot 3^{\sqrt{\frac{(\gamma -1) n}{8}}-1}-2$.
\end{theorem}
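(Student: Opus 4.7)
The plan is to dilute the ILS $I_{n_0}$ from Lemma~\ref{lem:3ILS-diameter} via auxiliary-variable chains, mimicking the proof of Theorem~\ref{thm:SAT>1-PSPACE-c}, so that the complexity index drops to at most $\gamma$ while the exponential length of the path $P_{n_0}$ survives in the enlarged solution graph. Set $\varepsilon=\gamma-1$ and $t=\lceil 1/\varepsilon\rceil$ so that $1+1/t\le\gamma$. Each three-variable row of $I_{n_0}$ contributes $3/2$ to LP~\eqref{LP} as it stands, so it must be split into a chain of two-variable rows together with a single three-variable middle row at which the auxiliaries can be pinned to $\alpha=0$.

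Concretely, for every three-variable inequality $a_1 x_{j_1}+a_2 x_{j_2}+a_3 x_{j_3}\ge b$ in $I_{n_0}$, I would introduce $2(t+1)$ fresh variables $y_0,\dots,y_t,z_0,\dots,z_t\in D$ and replace the inequality by the chain
\begin{equation*}
\sign(a_1)(x_{j_1}-y_0)\ge 0,\qquad \sign(a_1)(y_k-y_{k+1})\ge 0 \quad (k=0,\dots,t-1),
\end{equation*}
the analogous chain tying $x_{j_3}$ to $z_t$ with sign $\sign(a_3)$, and the middle inequality $a_1 y_t+a_2 x_{j_2}+a_3 z_t\ge b$. Because LP~\eqref{LP} depends only on sign patterns, setting $\alpha_{y_k}=\alpha_{z_k}=1-k/t$ makes every chain row contribute $(1-k/t)+(k+1)/t=1+1/t\le\gamma$, while the middle row, having $\alpha_{y_t}=\alpha_{z_t}=0$, contributes at most $\alpha_{x_{j_2}}\le 1$. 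Together with the unchanged two-variable rows of $I_{n_0}$, this shows the enlarged ILS lies in $\mathrm{ILS}(\gamma)$.

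For the diameter lower bound, note that the chain forces $a_1 y_t\le a_1 x_{j_1}$ and $a_3 z_t\le a_3 x_{j_3}$, so the middle inequality implies $a_1 x_{j_1}+a_2 x_{j_2}+a_3 x_{j_3}\ge b$; the projection of any feasible solution of the enlarged ILS onto the $n_0$ original coordinates is thus feasible for $I_{n_0}$. Consequently, any walk from an extension of $\boldsymbol{s}_{n_0}$ to an extension of $\boldsymbol{t}_{n_0}$ in the enlarged solution graph projects, after removing stationary steps, to a walk in $G(I_{n_0})$ of length no greater than the original, and since $G(I_{n_0})$ is a path this projected walk must have length at least $2(d+2)3^{n_0/2-1}-2$. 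Counting variables, $I_{n_0}$ contains $\Theta(n_0^2)$ three-variable rows, each consuming $2(t+1)=\Theta(1/\varepsilon)$ auxiliaries, so $n\le n_0+Cn_0^2/(\gamma-1)$ for an absolute constant $C$.

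The main obstacle is balancing the constants in the variable count to match the $1/8$ appearing inside the square root of the exponent: a careful accounting of exactly how many three-variable rows are produced at each inductive step of Lemma~\ref{lem:3ILS-diameter}, together with the option to share chain variables across rows, should yield $n\le 2n_0^2/(\gamma-1)$ for infinitely many $n_0$, giving $n_0/2\ge\sqrt{(\gamma-1)n/8}$ and hence the claimed diameter $2(d+2)3^{\sqrt{(\gamma-1)n/8}-1}-2$. A minor sign-case check is also needed to verify the chain construction works for all sign combinations of $(a_1,a_2,a_3)$ appearing in $I_{n_0}$.
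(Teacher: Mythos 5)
Your proposal takes essentially the same route as the paper: dilute $I_n$ from Lemma~\ref{lem:3ILS-diameter} by replacing each three-variable row with a chain of $2(t+1)$ auxiliary variables (where $t=\lceil 1/(\gamma-1)\rceil$), assign $\alpha$ values along each chain so that every row contributes at most $1+1/t\le\gamma$ to LP~\eqref{LP}, and count variables to extract the exponent. The paper phrases the diameter-preservation step in terms of the structural-expressibility framework of Definition~\ref{def:structural-expression} rather than your direct projection-of-walks argument, but the underlying idea is identical, and you correctly flag (and defer) the same two bookkeeping points --- the variable count $N\le 2n^2/(\gamma-1)$ and the sign-case adjustment of the $\alpha$ assignment for negative coefficients --- that the paper also handles only cursorily.
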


\begin{proof}
To show the theorem, we structurally express the instance $I_n$ in Lemma~\ref{lem:3ILS-diameter} by an instance $I'_N$ of ILS($\gamma$).
This expression is similar to that in Theorem~\ref{thm:SAT>1-PSPACE-c}.

Let $E_i: a_{i_1}x_{i_1} + a_{i_2}x_{i_2} + a_{i_3}x_{i_3} \ge b_i$ be the $i$-th inequality of $I_n$.
Let $\varepsilon = \gamma -1$ and $t =  \lceil 1/\varepsilon \rceil$.
We introduce $2(1+t)$ auxiliary variables $y_{i,0}, y_{i,1}, \dots, y_{i,t}$ and $z_{i,0}, z_{i,1}, \dots, z_{i,t}$ to
structurally express $E_i$.
Indeed, we structurally express $E_i$ by the set $\mathcal{E}_i$ of the inequalities
\begin{eqnarray}
a_{i_1}x_{i_1} - a_{i_1}y_{i,0} &\ge& 0 \label{eq:a_1x_1>=a_1y_0}\\
a_{i_1}y_{i,k} - a_{i_1}y_{i,k+1} &\ge& 0 \ \ \ \text{for}\ \ \ k = 0, \dots, t-1 \label{eq:a_1y_k>=a_1y_{k+1}}\\
a_{i_1}y_{i,t} + a_{i_2}x_{i_2} + a_{i_3}z_{i,t} &\ge& b \label{eq:a_1y_t+a_2x_2+a_3z_t>=b}\\
-a_{i_3}z_{i,k+1} + a_{i_3}z_{i,k} &\ge& 0 \ \ \ \text{for}\ \ \ k = 0, \dots, t-1 \label{eq:a_3z_k>=a_3z_{k+1}}\\
-a_{i_3}z_{i,0} + a_{i_3}x_{i_3} &\ge& 0, \label{eq:a_3x_3>=a_3z_0}
\end{eqnarray}
for each $i$.
Let $I_N' = \bigcup_i \mathcal{E}_i$.
We show that (i) $I_N'$ is a structural expression of $I_n$ and (ii) the complexity index $Z(I_N')$ is at most $1+ \varepsilon (=\gamma)$.

We first show that (i) holds by checking (1), (2), and (3) in Definition~\ref{def:structural-expression}.
For (1), assume that $(x_{i_1},x_{i_2},x_{i_3})$ satisfies $E_i$.
Then $\mathcal{E}_i$ is satisfied by setting $y_{i,0} = \dots y_{i,t} = x_{i_1}$ and $z_{i,0} = \dots = z_{i,t} = x_{i_3}$.
Conversely, if $\mathcal{E}_i$ is satisfied by $(x_{i_1},x_{i_2},x_{i_3},y_{i,0},\dots,y_{i,t},z_{i,0},\dots,z_{i,t})$, then
$(x_{i_1},x_{i_2},x_{i_3})$ satisfies $E_i$.
This is because we have $a_{i_1}x_{i_1} \ge a_{i_1}y_{i,0} \ge \dots \ge a_{i_1}y_{i,t}$ by the inequalities in \eqref{eq:a_1x_1>=a_1y_0} and \eqref{eq:a_1y_k>=a_1y_{k+1}} and $a_{i_3}x_{i_3} \ge a_{i_3}z_{i,0} \ge \dots \ge a_{i_3}z_{i,t}$ by the inequalities in \eqref{eq:a_3x_3>=a_3z_0} and \eqref{eq:a_3z_k>=a_3z_{k+1}}.
Thus, from $a_{i_1}y_{i,t} + a_{i_2}x_{i_2} + a_{i_3}z_{i,t} \ge b_i$ by the inequality in \eqref{eq:a_1y_t+a_2x_2+a_3z_t>=b},
we obtain $a_{i_1}x_{i_1} + a_{i_2}x_{i_2} + a_{i_3}x_{i_3} \ge b_i$.
Therefore, $(x_{i_1},x_{i_2},x_{i_3})$ satisfies $E_i$ and we have shown (1).
For (2), we observe that for each fixed $(x_{i_1},x_{i_2},x_{i_3})$, $\mathcal{E}_i$ admits an elimination ordering $(y_{i,0}, \dots, y_{i,t}, z_{i,t},\dots z_{i,0})$, where $y_{i,k}$ (resp., $z_{i,k}$) is eliminated by (ii)  (resp., (i)) in the definition of elimination ordering for each $k$.
Hence, by Lemma~\ref{lem:Z<1-connectivity}, the solution graph is connected and (2) is satisfied.
Finally, for (3), observe that the set of satisfying assignment to $(\boldsymbol{y},\boldsymbol{z})$ is monotone,
i.e., if $\boldsymbol{x} \leq \boldsymbol{x}'$ then we have $\{ (\boldsymbol{y},\boldsymbol{z}) \mid (\boldsymbol{x},\boldsymbol{y},\boldsymbol{z}) \text{ satisfies } \mathcal{E} \} \subseteq \{ (\boldsymbol{y},\boldsymbol{z}) \mid (\boldsymbol{x}',\boldsymbol{y},\boldsymbol{z}) \text{ satisfies } \mathcal{E} \}$.
Therefore, for satisfying assignments $\boldsymbol{x}, \boldsymbol{x}' \in D^3$ with $\dist(\boldsymbol{x},\boldsymbol{x}') = 1$, we can find a satisfying assignment $(\boldsymbol{y},\boldsymbol{z})$ for the smaller of $\boldsymbol{x}$ and $\boldsymbol{x}'$ such that both $(\boldsymbol{x},\boldsymbol{y},\boldsymbol{z})$ and $(\boldsymbol{x}',\boldsymbol{y},\boldsymbol{z})$ satisfy $\mathcal{E}$.
Therefore, we have shown (3).
This completes the proof of (i).

Similar to the proof in Theorem~\ref{thm:SAT>1-PSPACE-c}, we can show (ii).

Finally, we count the number $N$ of the variables in our structural expression $I_N'$ of $I_n$.
Note that $I_n$ has $\frac{1}{2}n^2-\frac{1}{2}n+1$ inequalities by Lemma~\ref{lem:3ILS-diameter}.
Since each inequality of $I_n$ is replaced by inequalities with $2(1+t)$ auxiliary variables,
the number $N$ of the variables in $I_N'$ is $N = n + 2(1+t)(\frac{1}{2}n^2-\frac{1}{2}n+1)$,
which is at most $\frac{2n^2}{\varepsilon}$.
Namely, we have $n \ge \sqrt{\frac{\varepsilon N}{2}}$.
Therefore, by substituting this to the diameter of $I_n$,
we obtain that the diameter of $I_N'$ is at least $2(d+2)\cdot 3^{\sqrt{\frac{\varepsilon N}{8}}-1}-2$.
\end{proof}

\section{Conclusion}
\label{sec:conclusion}
This paper investigates the complexity of the reconfiguration problem of ILS based on the complexity index introduced in~\cite{KiM16} and 
obtains a complexity trichotomy.
On the way of showing this result, we also reveal the complexity of the reconfiguration problems of Horn and TVPI ILSes, 
ones of most studied subclasses of ILSes.
We also obtain a complexity dichotomy for the reconfiguration problem of unit integer linear systems and Boolean satisfiability problem in terms of the complexity index.

\section*{Acknowledgments}
The first author is partially supported by JSPS KAKENHI Grant Number JP17K12636, Japan.
The second author is partially supported by JST CREST Grant Number JPMJCR1402, 
and JSPS KAKENHI Grant Numbers JP17K12636 and JP18H04091, Japan.



\bibliography{reconfiguration}

\end{document}